\renewcommand{\vec}[1]{\boldsymbol{#1}} 
\DeclareMathOperator{\lcm}{\text{lcm}}
\begin{document}

\title{$\mathbb{Z}_p\mathbb{Z}_{p^2}$-additive cyclic codes: kernel and rank}


\author{  Xuan Wang        \and
        Minjia Shi
}


\institute{X. Wang  \at
              Anhui University, Hefei, China\\
              \email{wang\_xuan\_ah@163.com}
           \and
           M. Shi \at
              Anhui University, Hefei, China \\
              \email{smjwcl.good@163.com}
}

\date{Received: date / Accepted: date}

\maketitle

\begin{abstract}
	A code $C = \Phi(\mathcal{C})$ is called $\mathbb{Z}_p \mathbb{Z}_{p^2}$-linear if it's the Gray image of the $\mathbb{Z}_p \mathbb{Z}_{p^2}$-additive code $\mathcal{C}$.
	In this paper, the rank and the dimension of the kernel of $\mathcal{C}$ are studied.
	Both of the codes $\langle \Phi(\mathcal{C}) \rangle$ and $\ker(\Phi(\mathcal{C}))$ are proven $\mathbb{Z}_p \mathbb{Z}_{p^2}$-additive cyclic codes,
	and their generator polynomials are determined.
	Finally, accurate values of rank and the dimension of the kernel of some classes of $\mathbb{Z}_p \mathbb{Z}_{p^2}$-additive cyclic codes are considered.
\keywords{$\mathbb{Z}_p \mathbb{Z}_{p^2}$-additive cyclic codes, classification, kernel, rank}
\end{abstract}

\section{Introduction} \label{sec:introduction}

Denote by $\mathbb{Z}_p$ and $\mathbb{Z}_{p^k}$ be the rings of integers modulo $p$ and $p^k$, respectively.
Let $\mathbb{Z}_p^n$ and $\mathbb{Z}_{p^k}^n$ be the space of $n$-tuples over $\mathbb{Z}_p$ and $\mathbb{Z}_{p^k}$, respectively.
A $p$-ary code is a non-empty subset $C$ of $\mathbb{Z}_p^n$.
If the subset is a vector space, then we say $C$ is a linear code.
Similarly, a non-empty subset $\mathcal{C}$ of $\mathbb{Z}_{p^k}^n$ is a linear code if $\mathcal{C}$ is a submodule of $\mathbb{Z}_{p^k}^n$.

In $1994$, Hammons \textit{et al.} \cite{1994} showed that some well-known codes can be seen as Gray images of linear codes over $\mathbb{Z}_4$.
Later, the study of cyclic codes and $\mathbb{Z}_4$-codes develops.
Specially, Calderbank \textit{et al.} \cite{p-adic_cyclic_codes} gave the structure of cyclic codes over $\mathbb{Z}_{p^k}$.
Besides, Ling \textit{et al.}  \cite{LS_Gray_map} studied $\mathbb{Z}_{p^k}$-linear codes, and characterized the linear cyclic codes over $\mathbb{Z}_{p^2}$ whose Gray images are linear cyclic codes.

In $1973$, Delsarte \cite{association_schemes} first defined and studied the additive codes in terms of association schemes.
Borges \textit{et al.} \cite{binary_images_of_z2z4_cyclic,z2z4_generators} studied the generator polynomials, dual codes
and binary images of the $\mathbb{Z}_2 \mathbb{Z}_4$-additive codes.
Since then, a lot of work has been devoted to characterizing $\mathbb{Z}_2\mathbb{Z}_{4}$-additive codes.
Dougherty \textit{et al.} \cite{1-weight-Z2Z4} constructed one
weight $\mathbb{Z}_2\mathbb{Z}_{4}$-additive codes and analyzed their parameters.
Benbelkacem \textit{et al}. \cite{Z2Z4-ACD-LCD} studied
$\mathbb{Z}_2\mathbb{Z}_{4}$-additive complementary dual codes and their Gray images.
In fact, these codes can be viewed as a generalization of the linear complementary dual (LCD for short) codes \cite{LCD} over finite fields.
Bernal \textit{et al.} \cite{Z2Z4-decoding} introduced a decoding method of the $\mathbb{Z}_2\mathbb{Z}_{4}$-linear codes.
More structure properties of $\mathbb{Z}_2\mathbb{Z}_{4}$-additive codes can be found in \cite{Z2Z4-MDS,Z2Z4-kernel-and-rank}.

Moreover, the additive codes over different mixed alphabet have also been intensely studied, for example, $\mathbb{Z}_2\mathbb{Z}_{2}[u]$-additive codes \cite{Z2Z2u}, $\mathbb{Z}_2 \mathbb{Z}_{2^s}$-additive codes \cite{Z2Z2s}, $\mathbb{Z}_{p^r} \mathbb{Z}_{p^s}$-additive codes \cite{ZprZps}, $\mathbb{Z}_2\mathbb{Z}_{2}[u, v]$-additive codes \cite{Z2Z2uv}, $\mathbb{Z}_p \mathbb{Z}_{p^k}$-additive codes \cite{zpzpk_additive_and_dual} and $\mathbb{Z}_p (\mathbb{Z}_p + u \mathbb{Z}_p)$-additive codes \cite{WRS}, and
so on.
It is worth mentioning that $\mathbb{Z}_2\mathbb{Z}_{4}$-additive cyclic codes form an important family of $\mathbb{Z}_2\mathbb{Z}_{4}$-additive codes, many optimal binary codes can be obtained from the images of this family of codes.
In $2014$, Abualrub \textit{et al.} \cite{z2z4_additive_cyclic_codes} discussed $\mathbb{Z}_2\mathbb{Z}_{4}$-additive cyclic codes.
In $2021$, Shi \textit{et al.} \cite{z2z4-quasi-cyclic} defined the $\mathbb{Z}_2 \mathbb{Z}_4$-additive quasi-cyclic codes and gave the conditions for the code to be self-dual and additive complementary dual (ACD for short), respectively.
More details of $\mathbb{Z}_2\mathbb{Z}_{4}$-additive cyclic codes can be found in \cite{z2z4_additive_cyclic_codes,z2z4_generators,binary_images_of_z2z4_cyclic}.

In $2010$, Fern\'{a}dez \textit{et al.} \cite{z2z4-linear} studied the rank and kernel of $\mathbb{Z}_2 \mathbb{Z}_4$-linear codes,
where $\mathbb{Z}_2 \mathbb{Z}_4$-linear codes are the images of $\mathbb{Z}_2 \mathbb{Z}_4$-additive codes under the generalized Gray map $\Phi$.
Moreover, the rank and the kernel have been intensely studied, for example cyclic and negacyclic quaternary codes \cite{kernel_of_4-ary_code},
generalized Hadamard codes \cite{rank_and_kernel_of_GH_matrices}, additive generalised Hadamard codes \cite{rank_and_kernel_of_additive_GH_code} and so on.
Note that, in $2019$, Borges \textit{et al.} \cite{z2z4_cyclic_kernel} studied the rank and the kernel of $\mathbb{Z}_2 \mathbb{Z}_4$-additive cyclic codes. In addition, in \cite{z2z4-linear}, the authors discussed the rank and kernel dimension of the $\mathbb{Z}_2 \mathbb{Z}_4$-linear codes.
Later, Shi \textit{et al.} \cite{wsk_z3z9_linear_rank_kernel} generated the results about rank and kernel to $\mathbb{Z}_3 \mathbb{Z}_9$.

It is a natural problem: what the results will be for $\mathbb{Z}_p \mathbb{Z}_{p^2}$-additive cyclic codes with an odd prime $p$? Motivated by \cite{z2z4_cyclic_kernel}, \cite{wsk_z3z9_linear_rank_kernel} and \cite{Xuan-Gray-images}, this paper is devoted to studying the kernel and the rank of $\mathbb{Z}_p\mathbb{Z}_{p^2}$-additive cyclic codes.
The paper is organized as follows.
In Section \ref{sec:preliminaries}, we recall the necessary concepts and properties on $\mathbb{Z}_p\mathbb{Z}_{p^2}$-additive cyclic codes and Gray maps.
In Section \ref{sec:kernel} and Section \ref{sec:rank}, we study the kernel $\mathcal{K}(\mathcal{C})$ and the span $\mathcal{R}(\mathcal{C})$
of a $\mathbb{Z}_p\mathbb{Z}_{p^2}$-additive cyclic code $\mathcal{C}$, respectively.
In Section \ref{sec:pairs and classification}, the values of the rank and the dimension of the kernel of some codes are given.
In Section \ref{sec:comparison}, we discuss the difference between the results and general $\mathbb{Z}_p \mathbb{Z}_{p^2}$-linear codes.

\section{Preliminaries} \label{sec:preliminaries}
In the following sections, assume $p$ an odd prime number and $\gcd(\beta, p) = 1$.
\subsection{$\mathbb{Z}_p\mathbb{Z}_{p^2}$-Additive Cyclic Codes}\label{subsec:ZpZp2 additive cyclic codes}
In this subsection, let $\mathcal{C}$ be a $\mathbb{Z}_p\mathbb{Z}_{p^2}$-additive code of length $\alpha + \beta$,
which is a subgroup of $\mathbb{Z}_p^{\alpha} \times \mathbb{Z}_{p^2}^{\beta}$.
Then $\mathcal{C}$ is called a $\mathbb{Z}_p\mathbb{Z}_{p^2}$-\textbf{additive cyclic} code if for any codeword $\vec{u} = (\vec{u}', \vec{u}'') = (u'_0, u'_1, \cdots, u'_{\alpha - 1}, u''_0, u''_1, \cdots, u''_{\beta - 1}) \in \mathcal{C}$, its cyclic shift
\[
\pi(\vec{u}) = (\pi(\vec{u}'), \pi(\vec{u}'')) = (u'_{\alpha - 1}, u'_0, \cdots, u'_{\alpha - 2}, u''_{\beta - 1}, u''_0, \cdots, u''_{\beta - 2})
\]
is also in $\mathcal{C}$.

There exists a bijection between $\mathbb{Z}_p^{\alpha} \times \mathbb{Z}_{p^2}^{\beta}$ and
$R_{\alpha, \beta} = R_{\alpha} \times R_{\beta} = \mathbb{Z}_p[x]/(x^{\alpha}-1) \times \mathbb{Z}_{p^2}[x]/(x^{\beta}-1)$
given by
\[
(u'_0, \cdots, u'_{\alpha - 1}, u''_0, \cdots, u''_{\beta - 1}) \longmapsto
(u'_0 + \cdots + u'_{\alpha - 1} x^{\alpha-1}, u''_0 + \cdots + u''_{\beta - 1} x^{\beta-1}).
\]
Hence, any codeword of $\mathcal{C}$ can be regarded as a vector or as a polynomial.
Denote the $p$-ary reduction of a polynomial $h \in \mathbb{Z}_{p^2}[x]$ by $\overline{h} \equiv h  \pmod{p}$.
Then define the following multiplication \[ h  \star (c_1, c_2) = ( \overline{h} c_1, h c_2), \]
where $(c_1, c_2) \in R_{\alpha, \beta}$ and the products of the right side are the standard polynomial products in $\mathbb{Z}_p[x]/(x^{\alpha}-1)$
and $\mathbb{Z}_{p^2}[x]/(x^{\beta}-1)$.

Let $\kappa_1$ be the dimension of the subcode $\{(\vec{u}_1 | \vec{0}) \in \mathcal{C} \}$ and $\kappa_2 = \kappa - \kappa_1$.
Similarly, let $\delta_2$ be the dimension of the subcode $\{ (\vec{0} | \vec{v}_2) \in \mathcal{C}\}$ whose codewords are all of order $p^2$
and $\delta_1 = \delta - \delta_2$.

A $\mathbb{Z}_p\mathbb{Z}_{p^2}$-additive code $\mathcal{C}$ is called \textbf{separable} if $\mathcal{C} = \mathcal{C}_X \times \mathcal{C}_Y$.
By definition, a $\mathbb{Z}_p\mathbb{Z}_{p^2}$-additive code is separable if and only if $\kappa_2 = \delta_1 = 0$; that is $\kappa = \kappa_1$ and $\delta = \delta_2$.

\begin{theorem}\cite{ZprZps}
	\label{generator matrix of zpzp2}
	Let $\mathcal{C}$ be a $\mathbb{Z}_p\mathbb{Z}_{p^2}$-additive code of type $(\alpha, \beta; \gamma, \delta; \kappa)$.
	Then $\mathcal{C}$ is permutation equivalent to a $\mathbb{Z}_p\mathbb{Z}_{p^2}$-additive code with the canonical generator matrix of the form
	\begin{equation}
		\label{eq:generator matrix}
		\mathcal{G}_S =
		\begin{pmatrix}
			\begin{array}{cc|ccc}
				I_\kappa & T' & pT_2 & \vec{0} & \vec{0} \\
				\vec{0} & \vec{0} & pT_1 & pI_{\gamma - \kappa} & \vec{0} \\
				\hline
				\vec{0} & S' & S & R & I_\delta
			\end{array}
		\end{pmatrix},
	\end{equation}
	where $I_n$ is the identity matrix of size $n\times n$; $T'$, $S'$, $T_1$, $T_2$ and $R$ are matrices over $\mathbb{Z}_p$; and $S$ is a matrix over $\mathbb{Z}_{p^2}$.
\end{theorem}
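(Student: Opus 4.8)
The plan is to reduce an arbitrary generator matrix of $\mathcal{C}$ to the stated canonical form by a sequence of elementary row operations over $\mathbb{Z}_{p^2}$ (on the $\mathbb{Z}_{p^2}$-part) together with column permutations of the $\alpha$ coordinates and of the $\beta$ coordinates; since $\mathcal{C}$ is only an additive group, one must be slightly careful that the row operations used are legitimate group automorphisms of $\mathbb{Z}_p^\alpha\times\mathbb{Z}_{p^2}^\beta$ acting on generating sets, and that multiplying a $\mathbb{Z}_{p^2}$-row by a unit is allowed while adding a $\mathbb{Z}_{p^2}$-row into a $\mathbb{Z}_p$-row is done via reduction mod $p$. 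First I would split a generating set of $\mathcal{C}$ into those codewords all of whose $\mathbb{Z}_{p^2}$-coordinates have order dividing $p$ (equivalently lie in $p\mathbb{Z}_{p^2}$) and the rest; among the latter, after scaling by units and Gaussian elimination on the order-$p^2$ entries in the $\beta$-block, I obtain $\delta$ rows whose restriction to a suitable set of $\delta$ of the $\beta$-coordinates is $I_\delta$ — this yields the bottom block $(\vec 0\ \ S'\ \ S\ \ R\ \ I_\delta)$, where the $\vec 0$ under $I_\kappa$ is achieved by subtracting $\mathbb{Z}_p$-multiples of the first block of rows (reducing the bottom rows mod $p$ in the $\alpha$-part) and permuting the $\alpha$-columns so the leading identity of the $\mathbb{Z}_p$-part lines up.

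Next I would treat the codewords whose $\beta$-part lies in $p\mathbb{Z}_{p^2}$. Writing such a $\beta$-component as $p$ times a vector over $\mathbb{Z}_p$ and looking simultaneously at the $\alpha$-component over $\mathbb{Z}_p$, these rows generate a $\mathbb{Z}_p$-linear code in $\mathbb{Z}_p^{\alpha+\beta}$ of dimension $\gamma$; standard linear algebra over $\mathbb{Z}_p$ puts it in row-reduced echelon form, and by permuting the $\alpha$-coordinates I can arrange that exactly $\kappa$ of its pivots lie in the $\alpha$-block (this is the definition of $\kappa$) and the remaining $\gamma-\kappa$ pivots lie in the $\beta$-block. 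This gives the first block $(I_\kappa\ \ T'\ \ pT_2\ \ \vec 0\ \ \vec 0)$ and the second block $(\vec 0\ \ \vec 0\ \ pT_1\ \ pI_{\gamma-\kappa}\ \ \vec 0)$ after clearing entries above and below pivots; the zero columns on the far right in these two blocks reflect the fact that the $\delta$ order-$p^2$ generators already span those coordinates, so those $\delta$ columns can be cleared from the top two blocks by subtracting $p$ times appropriate bottom rows.

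The main obstacle — and the step that needs the most care — is bookkeeping the interaction of the three column-permutation choices and verifying that all three can be made simultaneously: one permutation of the $\alpha$-coordinates must put the $I_\kappa$ pivots first \emph{and} be compatible with where the $S'$-block sits, while one permutation of the $\beta$-coordinates must put the $I_\delta$ pivots last and the $pI_{\gamma-\kappa}$ pivots just before them, with $pT_2$ and $pT_1$ occupying the first $\delta$... (correction: first $\beta-\gamma+\kappa-\delta$... ) — in short, one must check that the pivot coordinate sets of the order-$p$ subcode and the order-$p^2$ generators can be separated into disjoint blocks and ordered consistently. I would handle this by first fixing the $\delta$ pivot columns of the order-$p^2$ part in the $\beta$-block, then noting the order-$p$ generators restricted to the complementary $\beta$-coordinates form a free $\mathbb{Z}_p$-module of rank $\gamma-\kappa$ whose pivots can be pushed to the top of that complementary block, and finally choosing the $\alpha$-permutation last, since nothing constrains it except the $I_\kappa$ requirement. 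The uniqueness of the invariants $\gamma,\delta,\kappa$ (hence well-definedness of "type") follows because $\gamma$ is the $p$-rank of $\mathcal{C}$ after quotienting out $p\mathcal{C}$ restricted appropriately, $\delta$ counts generators of order $p^2$, and $\kappa=\dim(\mathcal{C}\cap(\mathbb{Z}_p^\alpha\times\{\vec 0\}))$ is intrinsic; these do not change under permutation equivalence, so the canonical form is genuinely canonical. Since this theorem is quoted from \cite{ZprZps}, in the paper I would simply cite it, but the above is how a self-contained proof would run.
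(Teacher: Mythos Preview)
The paper does not prove this theorem at all: it is stated with the citation \cite{ZprZps} and no argument is given, so there is no ``paper's own proof'' to compare your sketch against. You correctly anticipated this at the end of your proposal.

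Your outline of a self-contained proof is essentially the standard reduction and is sound in its main lines. One small slip worth flagging: at the end you write $\kappa=\dim\bigl(\mathcal{C}\cap(\mathbb{Z}_p^\alpha\times\{\vec 0\})\bigr)$, but in the paper's conventions that quantity is $\kappa_1$, not $\kappa$ (see the paragraph before Theorem~\ref{generator matrix of zpzp2} and Remark~\ref{remark: zpzp2 generator matrix 2}). The invariant $\kappa$ is rather the $\mathbb{Z}_p$-dimension of the projection to $\mathbb{Z}_p^\alpha$ of the subcode $\mathcal{C}_p$ of elements of order at most $p$; equivalently, it is the number of pivots of the order-$p$ generators that land in the $\alpha$-block, which is exactly how you use it earlier in your argument. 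So your reduction is right, only the closing intrinsic description of $\kappa$ needs correcting.
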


\begin{remark} \label{remark: zpzp2 generator matrix 2}
	\cite{z2z4_generators}
	Since $\kappa = \kappa_1 + \kappa_2$, and $\delta = \delta_1 + \delta_2$,
	where $\kappa_1$ is the dimension of the subcode $\{(\vec{u}_1, \vec{0}) \in \mathcal{C}\}$,
	and $\delta_2$ is the dimension of the subcode $\{ (\vec{0}, \vec{v}_2) \in \mathcal{C}\}$ whose codewords are all of order $p^2$,
	then the generator matrix \eqref{eq:generator matrix} can be written as:
	\begin{equation}
		\label{eq:generator matrix 2}
		\mathcal{G}_S =
		\begin{pmatrix}
			\begin{array}{cccc|ccccc}
				I_{\kappa_1} & T & T'_{p_1} & T_{p_1} & \vec{0} & \vec{0} & \vec{0} & \vec{0} & \vec{0} \\
				\vec{0} & I_{\kappa_2} & T'_{p_2} & T_{p_2} & pT_2 & pT_{\kappa_2} & \vec{0} & \vec{0} & \vec{0} \\
				\vec{0} & \vec{0} & \vec{0} & \vec{0} & pT_1 & pT'_1 & pI_{\gamma - \kappa} & \vec{0} & \vec{0} \\
				\hline
				\vec{0} & \vec{0} & S_{\delta_1} & S_p & S_{11} & S_{12} & R_1 & I_{\delta_1} & \vec{0} \\
				\vec{0} & \vec{0} & \vec{0} & \vec{0} & S_{21} & S_{22} & R_2 & R_{\delta_1} & I_{\delta_2} \\
			\end{array}
		\end{pmatrix},
	\end{equation}
	where $I_r$ is the identity matrix of size $r\times r$; $T_{p_i}, T'_{p_i}, S_{\delta_1}$ and $S_p$ are matrices over $\mathbb{Z}_p$;
	$T_1, T_2, T_{\kappa_2}, T'_1$ and $R_i$ are matrices over $\mathbb{Z}_{p^2}$ with entries in $\mathbb{Z}_p$;
	and $S_{ij}$ are matrices over $\mathbb{Z}_{p^2}$.
	Besides, $S_{\delta_1}$ and $T_{\kappa_2}$ are square matrices of full rank $\delta_1$ and $\kappa_2$, respectively.
\end{remark}

\begin{remark} \label{remark: conditions of type}
	From Theorem \ref{generator matrix of zpzp2}, there is a $\mathbb{Z}_p \mathbb{Z}_{p^2}$-additive code of type $(\alpha, \beta; \gamma, \delta; \kappa)$ if and only if
	\begin{equation} \label{eq: conditions of type}
		0 < \delta + \gamma \leqslant \beta + \kappa \quad \text{and} \quad \kappa \leqslant \min(\alpha, \gamma),
	\end{equation}
	where $\alpha, \beta, \gamma, \delta, \kappa$ are all nonnegative integers.
\end{remark}

In this paper, we will show that the conditions \eqref{eq: conditions of type} are not enough for $\mathbb{Z}_p \mathbb{Z}_{p^2}$-additive cyclic codes.

The next theorem relates the parameters of the type of a $\mathbb{Z}_p \mathbb{Z}_{p^2}$-additive cyclic code to its generator polynomials.

\begin{theorem} \label{thm: type and generator polynomials}
	\cite{Xuan-Gray-images,z2z4_generators}
	Let $\mathcal{C} = \langle (a, 0), (b, gh + pf) \rangle$ be a $\mathbb{Z}_p \mathbb{Z}_{p^2}$-additive cyclic code of type
	$(\alpha, \beta; \gamma, \delta = \delta_1 + \delta_2; \kappa = \kappa_1 + \kappa_2)$ with $fhg = x^\beta - 1$ and $\gcd(\beta, p) = 1$.
	Then
	\[ \begin{aligned}
		& \gamma = \alpha - \deg(a) + \deg(h), \quad \delta = \deg(g), \quad \kappa = \alpha - \deg(\gcd(a, b\overline{g})), \\
		& \kappa_1 = \alpha - \deg(a), \quad \kappa_2 = \deg(a) - \deg(\gcd(a, b\overline{g})),
	\end{aligned} \]
	where $\overline{g} \equiv g \pmod{p}$.
\end{theorem}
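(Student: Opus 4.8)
The plan is to read off the five invariants from three structural objects attached to $\mathcal{C}$: the subcode $\mathcal{C}_b:=\{(\vec{u},\vec{0})\in\mathcal{C}\}$, which is an ideal of $R_\alpha$; the projection $\mathcal{C}_Y$ of $\mathcal{C}$ onto the last $\beta$ coordinates, which is a cyclic code over $\mathbb{Z}_{p^2}$ of length $\beta$, namely the ideal of $R_\beta$ generated by $g(x)h(x)+pf(x)$; and the $p$-torsion subcode $\mathcal{C}[p]:=\{\vec{c}\in\mathcal{C}:p\vec{c}=\vec{0}\}$. Write $\pi_X$ for the projection onto the first $\alpha$ coordinates. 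From the short exact sequence $0\to\mathcal{C}_b\to\mathcal{C}\to\mathcal{C}_Y\to 0$ one has $|\mathcal{C}|=|\mathcal{C}_b|\cdot|\mathcal{C}_Y|$; from the definition of the type one has $|\mathcal{C}|=p^{\gamma+2\delta}$, $\kappa_1=\dim_{\mathbb{Z}_p}\pi_X(\mathcal{C}_b)$ (since $\pi_X$ is injective on $\mathcal{C}_b$), and $\delta$ is the dimension of the residue code of $\mathcal{C}_Y$; and a short computation with the canonical generator matrix \eqref{eq:generator matrix} shows $\kappa=\dim_{\mathbb{Z}_p}\pi_X(\mathcal{C}[p])$. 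It therefore suffices to describe $\mathcal{C}_b$, $\mathcal{C}_Y$ and $\mathcal{C}[p]$ explicitly from the generators $(a,0)$ and $(b,gh+pf)$.

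Since $\gcd(\beta,p)=1$, the factors of $x^\beta-1$ Hensel-lift from $\mathbb{Z}_p$ to $\mathbb{Z}_{p^2}$ and are pairwise coprime, so the structure theory of cyclic codes over $\mathbb{Z}_{p^2}$ (Calderbank et al. \cite{p-adic_cyclic_codes}) applies to $\mathcal{C}_Y$: with $x^\beta-1=fhg$ it gives that the residue code of $\mathcal{C}_Y$ has dimension $\deg g$ and its torsion code has dimension $\deg g+\deg h$, whence $\delta=\deg g$ and $|\mathcal{C}_Y|=p^{2\deg g+\deg h}$. Together with $|\mathcal{C}_b|=p^{\kappa_1}$ this gives $p^{\gamma+2\delta}=p^{\kappa_1+2\deg g+\deg h}$, so $\gamma=\kappa_1+\deg h$; once $\kappa_1=\alpha-\deg a$ is established, $\gamma=\alpha-\deg a+\deg h$.

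For $\kappa_1$, $\kappa$ and $\kappa_2$ I would first compute $\mathrm{Ann}_{R_\beta}(gh+pf)$: for $s$ in this annihilator, reducing $s(gh+pf)=0$ modulo $p$ forces $\overline{s}$ into the annihilator of the mod-$p$ reduction of the $Y$-generator in $\mathbb{Z}_p[x]/(x^\beta-1)$; lifting back and using $fhg=x^\beta-1$ and $p^2=0$ then pins $\mathrm{Ann}_{R_\beta}(gh+pf)$ down in terms of $f,g,h$. A general codeword is $r\star(a,0)+s\star(b,gh+pf)=(\overline{r}\,a+\overline{s}\,b,\ s(gh+pf))$, and it lies in $\mathcal{C}_b$ exactly when $s\in\mathrm{Ann}_{R_\beta}(gh+pf)$; since $\overline{p\sigma}=0$, for such $s$ the first coordinate is $\overline{r}\,a$ plus $b$ times a fixed divisor of $x^\beta-1$, so $\pi_X(\mathcal{C}_b)$ is an ideal of $R_\alpha$ containing $\langle a\rangle$, and the divisibility conditions among $a,b,f,g,h$ built into the canonical generating set force $\pi_X(\mathcal{C}_b)=\langle a\rangle$, whence $\kappa_1=\alpha-\deg\gcd(a,x^\alpha-1)=\alpha-\deg a$. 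For $\kappa$: a codeword $\vec{c}$ lies in $\mathcal{C}[p]$ iff $p\,s(gh+pf)=0$ in $R_\beta$, iff $s(gh+pf)\in pR_\beta$, iff $\overline{s}$ is divisible by $\overline{g}$ (the cofactor of the residue code of $\mathcal{C}_Y$ in $x^\beta-1$); writing $s=g s_0+p s_1$, the first coordinate of $\vec{c}$ is $\overline{r}\,a+\overline{g}\,\overline{s_0}\,b$, the $p s_1$ term dropping out, so letting $r,s_0$ range over $\mathbb{Z}_p[x]$ gives $\pi_X(\mathcal{C}[p])=\langle a,\,b\overline{g}\rangle=\langle\gcd(a,b\overline{g})\rangle$. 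Hence $\kappa=\alpha-\deg\gcd(a,b\overline{g})$ and $\kappa_2=\kappa-\kappa_1=\deg a-\deg\gcd(a,b\overline{g})$.

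The step I expect to be the real obstacle is the explicit computation of $\mathrm{Ann}_{R_\beta}(gh+pf)$ together with the bookkeeping it forces: one must keep straight which divisor of $x^\beta-1$ controls $\mathcal{C}_b$ (the torsion-code datum of $\mathcal{C}_Y$) and which controls $\mathcal{C}[p]$ (the residue-code cofactor $\overline{g}$) as one moves between $R_\beta$ and $\mathbb{Z}_p[x]/(x^\beta-1)$, and one must verify that the side divisibility conditions attached to the canonical generating set are exactly what upgrades $\pi_X(\mathcal{C}_b)\supseteq\langle a\rangle$ to equality.
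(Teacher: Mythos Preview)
The paper does not prove this theorem; it is quoted from \cite{Xuan-Gray-images,z2z4_generators} without argument, so there is no in-paper proof to compare against. Your strategy is correct and is essentially how the cited references proceed: read $\delta$ and $|\mathcal{C}_Y|$ from the Calderbank--Sloane structure of $\mathcal{C}_Y$, identify $\kappa_1=\dim_{\mathbb{Z}_p}\pi_X(\mathcal{C}_b)$ and $\kappa=\dim_{\mathbb{Z}_p}\pi_X(\mathcal{C}[p])$ via the canonical generator matrix, and recover $\gamma$ from the short exact sequence $0\to\mathcal{C}_b\to\mathcal{C}\to\mathcal{C}_Y\to0$.

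Two remarks to tighten the write-up. First, the printed statement carries a typo: the $Y$-generator should be $fh+pf$, as everywhere else in the paper; you already compute as if it were, so make the notation match. Second, the annihilator you need is $\mathrm{Ann}_{R_\beta}(fh+pf)=\langle gh,\,pg\rangle$, seen most cleanly via the CRT decomposition of $R_\beta$ (the element $fh+pf$ is a unit on $g$-components, equals $p\cdot(\text{unit})$ on $h$-components, and vanishes on $f$-components); in particular every annihilating $s$ has $\overline{s}\in\langle\overline{g}\,\overline{h}\rangle$, so $\pi_X(\mathcal{C}_b)=\langle a,\,b\,\overline{g}\,\overline{h}\rangle$. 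The collapse to $\langle a\rangle$ is precisely the side condition $a\mid b\,\overline{g}\,\overline{h}$ attached to the canonical generating set in \cite{z2z4_generators}; naming it explicitly removes the only soft spot in your outline. Your computation of $\pi_X(\mathcal{C}[p])=\langle a,b\overline{g}\rangle$ is already clean and needs no such auxiliary hypothesis.
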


Let $\mathcal{G}$ be the generator matrix of a $\mathbb{Z}_p \mathbb{Z}_{p^2}$-additive cyclic code, then $\mathcal{G}$ and $\mathcal{G}_S$ should be permutation-equivalent,
i.e., the codes generated by $\mathcal{G}$ and $\mathcal{G}_S$ are permutation-equivalent.
Moreover, if the rows of $\mathcal{G}$ are arranged as $\mathcal{G}_S$, we call $\mathcal{G}$ is \textbf{in the form} of $\mathcal{G}_S$.

The following are some properties of the $\mathbb{Z}_p \mathbb{Z}_{p^2}$-additive cyclic codes.

\begin{lemma} \label{lemma: f'|f on Zp2}
	Let $\mathcal{C} = \langle fh + pf \rangle$ and $\mathcal{C}' = \langle f'h' + pf' \rangle$ be linear cyclic codes over $\mathbb{Z}_{p^2}$ of length $n$ with $\gcd(n, p) = 1$.
	If $\mathcal{C}\subseteq \mathcal{C}'$, then we have $f'$ divides $f$.
\end{lemma}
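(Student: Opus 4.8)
The plan is to reduce everything modulo $p$, where inclusion of cyclic codes over the field $\mathbb{Z}_p$ becomes ordinary polynomial divisibility, and then to lift the resulting divisibility back to $\mathbb{Z}_{p^2}$ by Hensel's lemma. Since $\gcd(n,p)=1$, the polynomial $x^n-1$ is separable over both $\mathbb{Z}_p$ and $\mathbb{Z}_{p^2}$, so reduction modulo $p$ restricts to a bijection from the monic divisors of $x^n-1$ over $\mathbb{Z}_{p^2}$ onto those over $\mathbb{Z}_p$, and this bijection sends divisibility to divisibility in both directions. Throughout we use the standing convention that $f,g,h$ (resp.\ $f',g',h'$) are monic with $fgh=x^n-1$ (resp.\ $f'g'h'=x^n-1$), and we work in $R_n=\mathbb{Z}_{p^2}[x]/(x^n-1)$, writing $\langle a\rangle$ for the ideal generated by $a$. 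Conceptually, the two intermediate steps below just record that the ``$p$-torsion code'' $\{\,\overline{c}\in\mathbb{Z}_p[x]/(x^n-1)\ :\ pc\in\mathcal{D}\,\}$ of a $\mathbb{Z}_{p^2}$-cyclic code $\mathcal{D}=\langle f_0h_0+pf_0\rangle$ equals $\langle\overline{f_0}\rangle$ and is monotone under inclusion of codes.

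The first substep is to show that $pf\in\mathcal{C}$. In $R_n$ one has $p(fh+pf)=pfh$ and $g\,(fh+pf)=fgh+pfg=pfg$, using $fgh=x^n-1$ and $p^2=0$; hence $pfh$ and $pfg$ both lie in $\mathcal{C}$. Since $x^n-1$ is squarefree over $\mathbb{Z}_p$, $\gcd(\overline{h},\overline{g})=1$, so there exist $A,B\in\mathbb{Z}_{p^2}[x]$ with $Ah+Bg\equiv 1\pmod p$; then in $R_n$,
\[
A\,(pfh)+B\,(pfg)=pf\,(Ah+Bg)=pf,
\]
since the correction term $pf\cdot p(\cdots)$ vanishes. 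Therefore $pf\in\mathcal{C}\subseteq\mathcal{C}'$.

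The next substep extracts $\overline{f'}\mid\overline{f}$ from $pf\in\mathcal{C}'$. Write $pf=(f'h'+pf')\mu$ for some $\mu\in R_n$ and reduce modulo $p$: this gives $\overline{f'}\,\overline{h'}\,\overline{\mu}=0$ in $\mathbb{Z}_p[x]/(x^n-1)$, and since $x^n-1=\overline{f'}\,\overline{g'}\,\overline{h'}$ is squarefree this forces $\overline{g'}\mid\overline{\mu}$, say $\mu=g'\nu+p\xi$. Substituting back, expanding, and discarding the terms killed by $f'g'h'=x^n-1$ and $p^2=0$, we obtain $pf=pf'(h'\xi+g'\nu)$ in $R_n$. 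Interpreting this equality in $\mathbb{Z}_{p^2}[x]$ (lifting along the monic $x^n-1$ and cancelling a common factor of $p$) and then reducing modulo $p$ gives $\overline{f}\equiv\overline{f'}\cdot\overline{(h'\xi+g'\nu)}\pmod{x^n-1}$ over $\mathbb{Z}_p$; replacing $x^n-1$ by $\overline{f'}\,\overline{g'}\,\overline{h'}$ on the right shows $\overline{f'}\mid\overline{f}$ in $\mathbb{Z}_p[x]$.

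Finally, $f$ and $f'$ are monic divisors of $x^n-1$ over $\mathbb{Z}_{p^2}$ whose reductions satisfy $\overline{f'}\mid\overline{f}$, so the divisibility-preserving correspondence of the first paragraph yields $f'\mid f$ over $\mathbb{Z}_{p^2}$. Concretely: write $\overline{f}=\overline{f'}\,\overline{q}$ with $\overline{q}$ a monic divisor of $x^n-1$ over $\mathbb{Z}_p$, and lift the pairwise-coprime factorization $x^n-1=\overline{f'}\cdot\overline{q}\cdot\bigl((x^n-1)/\overline{f}\bigr)$ to a monic factorization $x^n-1=f'\,q\,v$ over $\mathbb{Z}_{p^2}$ (the given $f'$ is forced to be the Hensel lift of $\overline{f'}$, being already a monic divisor of $x^n-1$ reducing to $\overline{f'}$); then $f'q$ is a monic divisor of $x^n-1$ over $\mathbb{Z}_{p^2}$ reducing to $\overline{f}$, so by uniqueness of such a divisor $f'q=f$, i.e.\ $f'\mid f$. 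I expect the main obstacle to be the third substep: one must be careful that it is $pf$, and not merely $p(fh+pf)=pfh$, that belongs to $\mathcal{C}$, and that membership of $pf$ in $\mathcal{C}'$ pins down the generator $f'$ rather than the product $f'h'$ — both points hinge on the coprimality of $\overline{g}$ (resp.\ $\overline{g'}$) with the remaining factors, which is precisely where $\gcd(n,p)=1$ enters. The lifting step is then routine.
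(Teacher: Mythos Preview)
Your proof is correct and follows essentially the same route as the paper: both first show $pf\in\mathcal{C}=\langle fh+pf\rangle$ via the coprimality of $h$ and $g$ (the paper phrases this as establishing $\mathcal{C}=\langle fh,pfg\rangle=\langle fh,pf\rangle$), and then deduce $f'\mid f$ from $pf\in\mathcal{C}'$. The only difference is that the paper outsources this second step to \cite[Theorem~3]{kernel_of_4-ary_code}, whereas you spell out the mod-$p$ reduction and Hensel-lift argument explicitly and self-containedly.
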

\begin{proof}
	By \cite[Theorem 6]{p-adic_cyclic_codes}, any ideals of $\mathbb{Z}_{p^2}[x]$ have the form $\langle f_0 + pf_1 \rangle = \langle fh + pf\rangle$
	with $f_1=f$, $f_0=fh$ and $fhg=x^n-1$.
	Since $\gcd(h, g)=1$, then $\lambda h + \mu g = 1$ for some $\lambda, \mu$.
	Then $pf = pf(\lambda h + \mu g) = p\lambda \cdot fh + \mu \cdot (pfg) \in \langle fh, pfg \rangle$.
	Thus, $\langle fh + pf \rangle \subseteq \langle fh, pfg \rangle$.
	Then $\langle fh + pf \rangle = \langle fh, pfg \rangle$ since $\langle fh + pf \rangle = \langle fh, pf \rangle$.
	Hence, $\langle fh + pf \rangle = \langle fh, pfg \rangle$.
	Any cyclic code over $\mathbb{Z}_{p^2}$ can be written as
	\begin{equation} \label{eq: <fh+pf> = <fh+pfg>}
		\mathcal{C} = \langle fh + pf \rangle = \langle fh, pfg \rangle
	\end{equation}
	with $fhg = x^n-1$ and $\gcd(n, p) = 1$.
	Similar to the proof of \cite[Theorem 3]{kernel_of_4-ary_code}, $f'$ divides $f$.
\end{proof}

\begin{proposition} \label{prop: f|f_k}
	Let $\mathcal{C} = \langle (a, 0), (b, fh+pf) \rangle$ and $\mathcal{C}' = \langle (a', 0), (b', f'h'+pf') \rangle$ be $\mathbb{Z}_p\mathbb{Z}_{p^2}$-additive cyclic
	codes with $\mathcal{C} \subseteq \mathcal{C}'$, then $f'$ divides $f$ and $\gcd(a', b')$ divides $\gcd(a, b)$.
\end{proposition}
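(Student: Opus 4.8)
The plan is to reduce the statement to two essentially one-sided divisibility facts, one coming from the $\mathbb{Z}_{p^2}$-part and one from the $\mathbb{Z}_p$-part. Recall that a $\mathbb{Z}_p\mathbb{Z}_{p^2}$-additive cyclic code is a submodule of $R_{\alpha,\beta}$ under the $\star$ multiplication, so $\mathcal{C}\subseteq\mathcal{C}'$ means every generator of $\mathcal{C}$ lies in $\mathcal{C}'$. First I would project onto the second coordinate: let $\mathcal{C}_Y=\{v'' : (v',v'')\in\mathcal{C}\}$ be the canonical projection of $\mathcal{C}$ onto $R_\beta$, and similarly $\mathcal{C}'_Y$. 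Since projection is a module homomorphism, $\mathcal{C}\subseteq\mathcal{C}'$ forces $\mathcal{C}_Y\subseteq\mathcal{C}'_Y$. The key point is that $\mathcal{C}_Y$ is exactly the $\mathbb{Z}_{p^2}$-cyclic code generated by $fh+pf$ (it contains $fh+pf$ as the image of $(b,fh+pf)$, and the image of $(a,0)$ is $0$), and likewise $\mathcal{C}'_Y=\langle f'h'+pf'\rangle$. Applying Lemma \ref{lemma: f'|f on Zp2} to $\mathcal{C}_Y\subseteq\mathcal{C}'_Y$ immediately yields $f'\mid f$.

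For the second divisibility I would work with the first coordinate. Consider the projection $\mathcal{C}_X=\{v' : (v',v'')\in\mathcal{C}\}\subseteq R_\alpha=\mathbb{Z}_p[x]/(x^\alpha-1)$. From the generators, $\mathcal{C}_X$ is the $\mathbb{Z}_p$-cyclic code $\langle a, b\rangle = \langle \gcd(a,b)\rangle$ in $R_\alpha$; more precisely I should also bring in the $p$-ary reduction $\overline{g}$ of $g$, since $\star$-multiplying $(b, fh+pf)$ by $g$ and reducing gives a codeword whose $X$-part is $\overline{g}\,b$, so $\mathcal{C}_X$ is actually $\langle a, b\overline{g}\rangle=\langle\gcd(a,b\overline{g})\rangle$ — this is consistent with Theorem \ref{thm: type and generator polynomials}, where $\kappa=\alpha-\deg\gcd(a,b\overline{g})$. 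In any case, $\mathcal{C}\subseteq\mathcal{C}'$ gives $\mathcal{C}_X\subseteq\mathcal{C}'_X$, hence $\langle\gcd(a,b)\rangle\subseteq\langle\gcd(a',b')\rangle$ inside the principal-ideal ring $R_\alpha$, which is equivalent to $\gcd(a',b')\mid\gcd(a,b)$ (as divisors of $x^\alpha-1$, using $\gcd(\alpha,p)=1$ so that $x^\alpha-1$ is squarefree and $R_\alpha$ is a product of fields). Care is needed here about whether one should use $\gcd(a,b)$ or $\gcd(a,b\overline g)$; since the statement literally asks for $\gcd(a',b')\mid\gcd(a,b)$, and $\gcd(a,b)$ is a multiple of $\gcd(a,b\overline g)$ only in one direction, I would first prove $\gcd(a',b'\overline{g'})\mid\gcd(a,b\overline g)$ via the projection argument and then note that the desired statement follows because each of $a,b$ individually lies in $\mathcal{C}_X\subseteq\mathcal{C}'_X=\langle\gcd(a',b'\overline{g'})\rangle$; but in fact the cleanest route is to observe directly that $a\in\mathcal{C}'_X$ and $b\in\mathcal{C}'_X$ — actually $(b,fh+pf)\in\mathcal{C}'$ expands in terms of the $\mathcal{C}'$-generators with the $X$-components being $\mathbb{Z}_p$-multiples of $a'$ and $b'$ — so $a,b\in\langle a',b'\rangle$ and therefore $\gcd(a',b')\mid a$ and $\gcd(a',b')\mid b$, giving $\gcd(a',b')\mid\gcd(a,b)$.

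So the skeleton is: (i) observe $\star$-submodule inclusion passes to both coordinate projections; (ii) identify the $Y$-projections as the stated $\mathbb{Z}_{p^2}$-cyclic codes and invoke Lemma \ref{lemma: f'|f on Zp2} for $f'\mid f$; (iii) identify the $X$-projections as $\mathbb{Z}_p$-cyclic codes generated by $\gcd(a,b)$ (resp. $\gcd(a',b')$) and use that $R_\alpha$ is a principal ideal ring with squarefree modulus to convert ideal inclusion into the divisibility $\gcd(a',b')\mid\gcd(a,b)$. The main obstacle I anticipate is step (iii): writing an element of $\mathcal{C}'$ as a $\star$-combination of its two generators $(a',0)$ and $(b',f'h'+pf')$ does not immediately tell you that an arbitrary element's $X$-part is an $R_\alpha$-combination of $a'$ and $b'$, because the coefficient polynomial acting on $(b',f'h'+pf')$ must be lifted to $\mathbb{Z}_{p^2}[x]$ and only its reduction $\overline{(\cdot)}$ acts on the $X$-coordinate — one must check that as the $\mathbb{Z}_{p^2}[x]$-coefficient ranges over everything, its reduction ranges over all of $\mathbb{Z}_p[x]$, which is clear, so $\mathcal{C}'_X=\langle a', b'\rangle$ really does hold. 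Once that bookkeeping with the reduction map $h\mapsto\overline h$ is pinned down, the rest is the elementary ideal theory of $R_\alpha$ and a direct citation of Lemma \ref{lemma: f'|f on Zp2}.
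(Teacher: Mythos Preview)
Your proposal is correct and follows essentially the same approach as the paper: project $\mathcal{C}\subseteq\mathcal{C}'$ onto the $Y$-coordinate and invoke Lemma~\ref{lemma: f'|f on Zp2} to obtain $f'\mid f$, then project onto the $X$-coordinate, identify $\mathcal{C}_X=\langle\gcd(a,b)\rangle$ and $\mathcal{C}'_X=\langle\gcd(a',b')\rangle$, and read off the divisibility from the ideal inclusion in $R_\alpha$. The paper's proof is much terser and does not pause over the reduction-map bookkeeping or the $\gcd(a,b)$ versus $\gcd(a,b\overline g)$ distinction you work through (and note that the paper only assumes $\gcd(\beta,p)=1$, not $\gcd(\alpha,p)=1$, though this does not affect the argument since $R_\alpha$ is a principal ideal ring regardless), but the route is the same.
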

\begin{proof}
	It's easy to check that $\mathcal{C}_Y = \langle (fh+pf) \rangle \subseteq \langle (f'h'+pf')\rangle = \mathcal{C}'_Y$ since $\mathcal{C}\subseteq \mathcal{C}'$.
	By Lemma \ref{lemma: f'|f on Zp2}, $f' \mid f$.
	Obviously, $\mathcal{C}_X = \langle \gcd(a, b) \rangle$ and $\mathcal{C}'_X = \langle \gcd(a', b ') \rangle$.
	Thus, $\gcd(a', b ')$ divides $\gcd(a, b)$ since $\mathcal{C}_X \subseteq \mathcal{C}'_X$.
\end{proof}

\begin{proposition} \label{prop: intersection of C_1 and C_2 over Zp2}
	Let $\mathcal{C}_i = \langle fh_i + pf \rangle$ be cyclic codes over $\mathbb{Z}_{p^2}$, $i=1,2$, then
	\[ \mathcal{C}_1 \cap \mathcal{C}_2 = \langle f \cdot \mathrm{lcm}(h_1, h_2), pf\gcd(g_1, g_2)\rangle, \]
	where $\mathrm{lcm}(h_1, k_2)$ means the least common multiple of $h_1$ and $h_2$.
\end{proposition}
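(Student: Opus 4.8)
The plan is to prove the two inclusions separately, working throughout with the alternative description $\mathcal{C}_i = \langle f h_i + pf\rangle = \langle fh_i, pfg_i\rangle$ supplied by \eqref{eq: <fh+pf> = <fh+pfg>}, and exploiting that $\gcd(n,p)=1$ forces $x^n-1$ to be squarefree over $\mathbb{Z}_{p^2}$; consequently any two of the pairwise complementary factors that occur below are coprime and admit Bézout relations in $\mathbb{Z}_{p^2}[x]/(x^n-1)$, exactly as in the proof of Lemma~\ref{lemma: f'|f on Zp2}. Write $H=\mathrm{lcm}(h_1,h_2)$ and $G=\gcd(g_1,g_2)$. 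A preliminary step is to verify $fHG=x^n-1$: since $h_ig_i=(x^n-1)/f$ for $i=1,2$, comparing factorizations of $x^n-1$ into basic irreducibles shows that the factor sets of $f$, of $H$, and of $G$ are pairwise disjoint and together exhaust that of $x^n-1$. In particular $\gcd(H,G)=1$, so $\langle fH+pf\rangle=\langle fH, pfG\rangle$ by \eqref{eq: <fh+pf> = <fh+pfg>} once more, and the right-hand side of the asserted identity is a legitimate generating set of a cyclic code over $\mathbb{Z}_{p^2}$.

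For the inclusion $\langle fH, pfG\rangle \subseteq \mathcal{C}_1\cap\mathcal{C}_2$ it suffices to check that each generator lies in both $\mathcal{C}_1$ and $\mathcal{C}_2$. Since $h_i\mid H$, the polynomial $fH$ is a multiple of $fh_i$, hence $fH\in\langle fh_i\rangle\subseteq\mathcal{C}_i$. For $pfG$, pick (using $\gcd(h_i,g_i)=1$) polynomials $\lambda_i,\mu_i$ with $\lambda_i h_i+\mu_i g_i=1$; then $pf=\lambda_i(pfh_i)+\mu_i(pfg_i)\in\langle fh_i,pfg_i\rangle=\mathcal{C}_i$, so $\langle pf\rangle\subseteq\mathcal{C}_i$ and in particular $pfG\in\mathcal{C}_i$. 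Doing this for $i=1,2$ gives the inclusion.

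The reverse inclusion is the substantive part, and the place where the argument is least automatic. Take $c\in\mathcal{C}_1\cap\mathcal{C}_2$. Writing $c=a_ifh_i+b_ipfg_i$ and reducing modulo $p$ gives $\overline{c}=\overline{a_i}\,\overline{fh_i}$, so $\overline{fh_i}$ divides $\overline{c}$ in $\mathbb{Z}_p[x]/(x^n-1)$ for $i=1,2$; since $\mathrm{lcm}(\overline{fh_1},\overline{fh_2})=\overline{f}\,\mathrm{lcm}(\overline{h_1},\overline{h_2})=\overline{fH}$, this shows $\overline{fH}\mid\overline{c}$. Lifting a cofactor, write $c=fHq+pe$ for suitable $q,e\in\mathbb{Z}_{p^2}[x]/(x^n-1)$. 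Then $pe=c-fHq\in\mathcal{C}_1$, because $fHq\in\langle fH\rangle\subseteq\mathcal{C}_1$, and $pe$ is divisible by $p$. The key observation is that the subcode of $\mathcal{C}_1=\langle fh_1,pfg_1\rangle$ consisting of the codewords divisible by $p$ equals $\langle pfh_1,pfg_1\rangle=pf\langle h_1,g_1\rangle=\langle pf\rangle$, again using $\gcd(h_1,g_1)=1$; hence $pe\in\langle pf\rangle$. Finally $\langle pf\rangle\subseteq\langle fH,pfG\rangle$: from $\gcd(H,G)=1$ take $\lambda,\mu$ with $\lambda H+\mu G=1$, so $pf=\lambda(pfH)+\mu(pfG)\in\langle fH,pfG\rangle$. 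Therefore $c=fHq+pe\in\langle fH,pfG\rangle$, and the two inclusions together give the claim. The main obstacle is precisely this last step: the mod-$p$ reduction pins down the ``$fh$-part'' of $c$ as a multiple of $fH$, but one still has to control the leftover $p$-multiple, and for that the identification of the $p$-divisible subcode of $\mathcal{C}_i$ with $\langle pf\rangle$ (via $\gcd(h_i,g_i)=1$) is what makes it work. A more structural alternative is to decompose $\mathbb{Z}_{p^2}[x]/(x^n-1)$ by the CRT along the irreducible factors of $x^n-1$ — each summand a Galois ring with unique maximal ideal $(p)$ — describe $\mathcal{C}_1$ and $\mathcal{C}_2$ componentwise as ``whole ring / $(p)$ / $0$'', and read off the intersection componentwise; this makes the identity transparent but repackages the same coprimality facts.
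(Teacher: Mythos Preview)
Your proof is correct. The forward inclusion $\langle fH, pfG\rangle\subseteq\mathcal{C}_1\cap\mathcal{C}_2$ is handled essentially as in the paper (you prove the slightly stronger intermediate fact $\langle pf\rangle\subseteq\mathcal{C}_i$, which you then reuse). The reverse inclusion, however, follows a genuinely different route. The paper invokes the structure theorem for cyclic codes over $\mathbb{Z}_{p^2}$: it writes $\mathcal{C}_1\cap\mathcal{C}_2=\langle f'h',pf'g'\rangle$ for some factorization $f'h'g'=x^n-1$, and then uses Lemma~\ref{lemma: f'|f on Zp2} together with the inclusions $\mathcal{C}_1\cap\mathcal{C}_2\subseteq\mathcal{C}_i$ to deduce the divisibility relations $f\mid f'$ and $fh_i\mid f'h'$, whence $fH\mid f'h'$; the generator $pf'g'$ is then placed in $\mathcal{D}$ ``similarly''. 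Your argument instead works element-by-element: reduce an arbitrary $c$ modulo $p$, use that $\overline{c}\in\langle\overline{fh_1}\rangle\cap\langle\overline{fh_2}\rangle=\langle\overline{fH}\rangle$ over $\mathbb{Z}_p$, lift, and control the residual $p$-multiple by identifying the $p$-divisible part of $\mathcal{C}_1$ with $\langle pf\rangle$. This is more elementary and entirely self-contained, avoiding any appeal to the canonical form of $\mathcal{C}_1\cap\mathcal{C}_2$; the paper's approach is shorter but leans on the prior structural lemma. One small point worth spelling out: your equality ``the $p$-divisible subcode of $\mathcal{C}_1$ equals $\langle pfh_1,pfg_1\rangle$'' deserves one line of justification---if $afh_1+bpfg_1$ is divisible by $p$ then $\overline{a}\,\overline{fh_1}=0$ in $\mathbb{Z}_p[x]/(x^n-1)$, so $\overline{g_1}\mid\overline{a}$, and substituting $a=g_1a'+pa''$ kills the first term modulo $x^n-1$. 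With that made explicit the argument is complete. The CRT alternative you sketch at the end is also valid and is perhaps the cleanest conceptual explanation.
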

\begin{proof}
	Let $\mathcal{D} = \langle f \cdot \lcm(h_1, h_2), pf\gcd(g_1, g_2)\rangle$.
	It's easy to check that $fh_1\mid f \cdot \lcm(h_1, h_2)$, then $f \cdot \lcm(h_1, h_2) \in \mathcal{C}_1$.
	Since $\gcd(g_1, h_1) = 1$, then $\lambda h_1 + \mu g_1 = 1$ for some $\lambda, \mu\in \mathbb{Z}_{p^2}[x]$.
	Then we have \[ pf\gcd(g_1, g_2) = pf\gcd(g_1, g_2)\left( \lambda h_1 + \mu g_1 \right) = p\lambda\gcd(g_1, g_2)(fh_1) + \mu\gcd(g_1, g_2) (pfg_1), \]
	which means $pf\gcd(g_1, g_2) \in \mathcal{C}_1$ and $D\subseteq \mathcal{C}_1$.
	Similarly, we have $\mathcal{D}\subseteq \mathcal{C}_2$ and thus $D\subseteq \mathcal{C}_1\cap \mathcal{C}_2$.
	
	Without loss of generality, let $\mathcal{C}_3 =\mathcal{C}_1 \cap \mathcal{C}_2 = \langle f'h', pf'g'\rangle$ with $f'h'g' = x^n-1$,
	then $f\mid f'$ and $fh_i\mid f'h'$ since $\mathcal{C}_3\subseteq \mathcal{C}_i$ for $i=1,2$.
	Thus, $f\cdot \lcm(h_1, h_2)\mid f'h'$.
	Note that $\lcm(h_1, h_2)$ and $\gcd(g_1, g_2)$ are coprime and $\lcm(h_1, h_2) \cdot \gcd(g_1, g_2) = (x^n-1)/f$.
	As for $pf'g'$, similar to the above discussion, we can prove that it is also in $\mathcal{D}$.
	Therefore, $\mathcal{D} = \mathcal{C}_1 \cap \mathcal{C}_2$.
\end{proof}

The generator polynomials of $\mathcal{C}$ are not unique.

\begin{lemma} \label{3 generators of zpzp2 cyclic codes}
	\cite{binary_images_of_z2z4_cyclic,Xuan-Gray-images}
	Let $\mathcal{C} = \langle (a, 0), (b, fh + pf)\rangle$ be a $\mathbb{Z}_p\mathbb{Z}_{p^2}$-additive cyclic code with $fhg = x^\beta-1$.
	Then $\mathcal{C}$ can be also generated by $(a, 0)$, $(b\overline{g}, pfg)$ and $(b', fh)$, where $b' = b - \overline{\mu}b\overline{g}$ and $\lambda h + \mu g = 1$.
\end{lemma}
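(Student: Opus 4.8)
The plan is to establish the equality of the two generating sets by a double inclusion inside the $\mathbb{Z}_{p^2}[x]$-module $R_{\alpha,\beta}$, the only ingredients being the B\'ezout identity $\lambda h+\mu g=1$ (which holds since $\gcd(h,g)=1$), the fact that $fhg=x^\beta-1$ vanishes in $R_\beta$, and a careful use of the action $h\star(c_1,c_2)=(\overline h c_1,hc_2)$; this is essentially the same manipulation carried out in the proof of Lemma \ref{lemma: f'|f on Zp2}.

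Write $\mathcal{C}'=\langle (a,0),\,(b\overline g,pfg),\,(b',fh)\rangle$. To prove $\mathcal{C}'\subseteq\mathcal{C}$, I would check that each generator of $\mathcal{C}'$ lies in $\mathcal{C}$. The element $(a,0)$ is already a generator of $\mathcal{C}$. Next, $g\star(b,fh+pf)=(b\overline g,\ gfh+pfg)=(b\overline g,pfg)$, because $gfh=fhg=0$ in $R_\beta$. Finally, for $(b',fh)$ I would form $(b,fh+pf)-\mu\star(b\overline g,pfg)=(b-\overline\mu\,b\overline g,\ fh+pf-p\mu fg)$ and simplify the second coordinate using $pf-p\mu fg=pf(1-\mu g)=pf\lambda h=p\lambda fh$, so that it becomes $fh(1+p\lambda)$; since $\overline{1-p\lambda}=1$ and $(1-p\lambda)(1+p\lambda)=1$ in $R_\beta$, applying $(1-p\lambda)\star(\cdot)$ turns this codeword into exactly $(b-\overline\mu\,b\overline g,\ fh)=(b',fh)$.

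For the reverse inclusion $\mathcal{C}\subseteq\mathcal{C}'$, since $(a,0)$ is common to both it suffices to show $(b,fh+pf)\in\mathcal{C}'$. Reversing the previous step, $(1+p\lambda)\star(b',fh)=(b',\ fh(1+p\lambda))$, and then adding $\mu\star(b\overline g,pfg)$ restores the first coordinate to $b'+\overline\mu\,b\overline g=b$ and the second to $fh+p\lambda fh+p\mu fg=fh+pf(\lambda h+\mu g)=fh+pf$; hence $(b,fh+pf)=(1+p\lambda)\star(b',fh)+\mu\star(b\overline g,pfg)\in\mathcal{C}'$, and the two codes coincide.

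The one point requiring care is the interplay between a polynomial of $\mathbb{Z}_{p^2}[x]$ and its reduction modulo $p$ under $\star$: one must confirm that multiplying a whole codeword by the unit $1\pm p\lambda$ acts as the identity on the $\mathbb{Z}_p$-component (because $\overline{1\pm p\lambda}=1$) while inverting the factor $1+p\lambda$ on the $\mathbb{Z}_{p^2}$-component. I do not expect any substantive obstacle beyond this bookkeeping, since the products actually needed ($gfh=0$ in $R_\beta$, $p^2=0$ in $\mathbb{Z}_{p^2}$, and $pf(\lambda h+\mu g)=pf$) are all immediate.
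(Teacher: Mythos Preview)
Your argument is correct: the double inclusion via the B\'ezout identity $\lambda h+\mu g=1$, the vanishing of $fhg$ in $R_\beta$, and the unit trick with $1\pm p\lambda$ all go through exactly as you describe. The paper itself does not prove this lemma---it is cited from \cite{binary_images_of_z2z4_cyclic,Xuan-Gray-images}---but your computation matches the style of manipulation the paper uses elsewhere (for instance in the proof of Lemma~\ref{lemma: cyclic subcodes of cyclic codes over zpzp2}, where the identity $g\star(b,fh+pf)=(b\overline g,pfg)$ and the B\'ezout combination reappear).
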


Let $\mathcal{C}_p = \{ \vec{c}| p \cdot \vec{c} = \vec{0}, \vec{c} \in \mathcal{C} \}$ be the subcode of $\mathcal{C}$, then we have

\begin{lemma}
	\label{generators of C_p}
	\cite{Xuan-Gray-images}
	Let $\mathcal{C} = \langle (a, 0), (b, fh + pf)\rangle$ be a $\mathbb{Z}_p\mathbb{Z}_{p^2}$-additive cyclic code with $fhg = x^\beta-1$, then
	$\mathcal{C}_p$ is generated by $(a , 0), (b  \overline{g }, pf  g )$ and $(0, pf  h )$.
\end{lemma}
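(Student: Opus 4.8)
Write $\mathcal{K}=\langle (a,0),(b\overline{g},pfg),(0,pfh)\rangle$ for the code appearing on the right-hand side of the statement; the plan is to prove the two inclusions $\mathcal{K}\subseteq\mathcal{C}_p$ and $\mathcal{C}_p\subseteq\mathcal{K}$ separately. The first is routine: the pair $(a,0)$ is a generator of $\mathcal{C}$ and $(b\overline{g},pfg)$ lies in $\mathcal{C}$ by Lemma~\ref{3 generators of zpzp2 cyclic codes}, while $(0,pfh)=p\star(b',fh)\in\mathcal{C}$ because the same lemma gives $(b',fh)\in\mathcal{C}$ and the $p$-ary reduction $\overline{p}=0$ annihilates the first coordinate. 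Each of the three generators is killed by $p$: for $(a,0)$ this is clear since the first coordinate lives in $\mathbb{Z}_p^{\alpha}$, and for the others $p\cdot(b\overline{g},pfg)=(0,p^2fg)=\vec{0}$ and $p\cdot(0,pfh)=(0,p^2fh)=\vec{0}$. Hence $\mathcal{K}\subseteq\mathcal{C}_p$.

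For the reverse inclusion, take $c\in\mathcal{C}_p$. By Lemma~\ref{3 generators of zpzp2 cyclic codes} we may write
\[
c=\mu_1\star(a,0)+\mu_2\star(b\overline{g},pfg)+\mu_3\star(b',fh)
\]
with $\mu_1,\mu_2,\mu_3\in R_\beta$. The condition $p\cdot c=\vec{0}$ only constrains the $\mathbb{Z}_{p^2}$-part: since $p\cdot pfg=0$ in $\mathbb{Z}_{p^2}[x]/(x^\beta-1)$, the first coordinate of $p\cdot c$ is automatically $\vec 0$ and the equation reduces to $p\,\mu_3 fh=0$ in $R_\beta$. Because multiplication by $p$ identifies $\mathbb{Z}_p[x]/(x^\beta-1)$ with $pR_\beta$, this is equivalent to $x^\beta-1\mid\overline{\mu_3}\,\overline{f}\,\overline{h}$ over $\mathbb{Z}_p$. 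Now $\gcd(\beta,p)=1$ forces $x^\beta-1$ to be squarefree, so the factorization $\overline{f}\,\overline{g}\,\overline{h}=x^\beta-1$ has pairwise coprime factors; cancelling $\overline{f}\,\overline{h}$ in the domain $\mathbb{Z}_p[x]$ yields $\overline{g}\mid\overline{\mu_3}$.

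Write $\mu_3=gs+pt$ with $s,t\in R_\beta$ chosen so that $\overline{s}$ realizes $\overline{\mu_3}=\overline{g}\,\overline{s}$. Since $gfh=x^\beta-1$, the term $gsfh$ vanishes in $R_\beta$, so $\mu_3\star(b',fh)=(\overline{g}\,\overline{s}\,b',\,t\cdot pfh)=s\star(\overline{g}\,b',0)+t\star(0,pfh)$, and it remains only to show $(\overline{g}\,b',0)\in\mathcal{K}$ — this is the step I expect to require the most care. Using $b'=b-\overline{\mu}\,b\overline{g}$ together with $\lambda h+\mu g=1$, I compute $\overline{g}\,b'=\overline{g}\,b\,(1-\overline{\mu}\,\overline{g})=\overline{\lambda}\,\overline{h}\,(b\overline{g})$, hence
\[
(\lambda h)\star(b\overline{g},pfg)=\big(\overline{\lambda}\,\overline{h}\,b\overline{g},\,p\lambda hfg\big)=(\overline{g}\,b',0),
\]
because $\lambda hfg=\lambda(x^\beta-1)=0$ in $R_\beta$. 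Thus $(\overline{g}\,b',0)\in\mathcal{K}$, and assembling the three terms of $c$ shows $c\in\mathcal{K}$, completing the proof. The only genuine obstacle is the bookkeeping identity rewriting $(\overline{g}\,b',0)$ as a $\star$-multiple of $(b\overline{g},pfg)$; everything else reduces mechanically to the relation $fhg=x^\beta-1$ and the squarefreeness of $x^\beta-1$.
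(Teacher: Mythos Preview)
The paper does not supply its own proof of this lemma: it is simply quoted from \cite{Xuan-Gray-images}. Your argument is correct and self-contained. The forward inclusion is immediate from Lemma~\ref{3 generators of zpzp2 cyclic codes}, and for the reverse you correctly reduce $p\cdot c=\vec 0$ to the divisibility $\overline g\mid \overline{\mu_3}$ using squarefreeness of $x^\beta-1$ over $\mathbb{Z}_p$; the rewriting $\mu_3=gs+pt$ and the identity $(\lambda h)\star(b\overline g,pfg)=(\overline g\,b',0)$ (because $\lambda hfg=\lambda(x^\beta-1)=0$ in $R_\beta$) are both valid and handle the only nontrivial term. There is nothing to compare against in the present paper, but your proof would serve perfectly well as the missing justification.
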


\subsection{Gray Map and Gray Image}\label{subsec:Gray map}

In \cite{LS_Gray_map}, the classical Gray map $\phi: \mathbb{Z}_{p^2} \rightarrow \mathbb{Z}_p^p$ is defined as
\[ \phi(\theta) = \theta_0 (1, 1, \cdots, 1) + \theta_1(0, 1, \cdots, p-1), \]
where $\theta = \theta_0 p + \theta_1 \in \mathbb{Z}_{p^2}$ and $\phi(\theta)$ is a vector of length $p$.
Then let $n = \alpha + p\beta$ and $\Phi: \mathbb{Z}_p^{\alpha} \times \mathbb{Z}_{p^2}^{\beta} \rightarrow \mathbb{Z}_p^n$ be an extension of the Gray map $\phi$,
which is defined as \[ \Phi(\vec{x}, \vec{y}) = (\vec{x}, \phi(y_1), \phi(y_2), \cdots, \phi(y_{\beta})) \]
for any $\vec{x} \in \mathbb{Z}_p^{\alpha}$, and $\vec{y} = (y_1, y_2, \cdots, y_{\beta}) \in \mathbb{Z}_{p^2}^{\beta}$.
Besides, in the following sections, the Gray map refers to $\Phi$ instead of $\phi$.
The $p$-ary code $C = \Phi(\mathcal{C})$ is called a $\mathbb{Z}_p \mathbb{Z}_{p^2}$-\textit{linear code}
or the \textit{$p$-ary image} of a $\mathbb{Z}_p\mathbb{Z}_{p^2}$-additive code.
And we denote $\langle C \rangle$ the \textit{linear span} of the codewords of $C$.

In \cite{wan1_Z4-code}, the Gray map $\Phi$ defined for quaternary codes satisfies $\Phi(\vec{u} + \vec{v}) = \Phi(\vec{u}) + \Phi(\vec{v}) + \Phi(2 \vec{u} \ast \vec{v})$,
where $\ast$ denotes the \textbf{componentwise product} of two vectors.

Recall some results from \cite{Xuan-Gray-images}.

\begin{lemma} \label{phi(u+v) = phi(u)+phi(v)+...}
	\cite{Xuan-Gray-images,wsk_z3z9_linear_rank_kernel}
	Let $\vec{u} = (u_1, \cdots, u_{\alpha + \beta}), \vec{v} = (v_1, \cdots, v_{\alpha + \beta}) \in \mathbb{Z}_p^{\alpha} \times \mathbb{Z}_{p^2}^{\beta}$, then
	\begin{equation} \label{eq: Phi(u+v)=Phi(u)+Phi(v)+Phi(pP(u,v))}
		\Phi(\vec{u} + \vec{v}) = \Phi(\vec{u}) + \Phi(\vec{v}) + \Phi(pP(\vec{u}, \vec{v})),
	\end{equation}
	where $P(\vec{u}, \vec{v}) = P(u_1, v_1), \cdots, P(u_{\alpha+\beta}, v_{\alpha+\beta})$ and
	\begin{equation}\label{eq: values of P(u,v)}
		P(a+pc, b+pd) = P(a,b) =
		\begin{cases}
			1, & a + b \geqslant p, \\
			0, & a + b < p, \\
		\end{cases}
	\end{equation}
	with $a+pb, b+pd \in \mathbb{Z}_{p^2}$ and $a,b,c,d\in \mathbb{Z}_p$.
	Moreover, the values of $P(u_i, v_i)$ equal to
	\begin{equation}\label{eq: P(u,v)}
		f(u_i, v_i) = \sum_{k=1}^{p-1} \sum_{b=p-k}^{p-1} \frac{\prod_{m=0}^{p-1} (u_i-m)(v_i-m)}{(u_i-k)(v_i-b)}.
	\end{equation}
	In particular, if the order of $\vec{u}$ or $\vec{v}$ is $p$, then $\Phi(\vec{u} + \vec{v}) = \Phi(\vec{u}) + \Phi(\vec{v})$.
\end{lemma}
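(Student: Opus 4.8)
The plan is to establish \eqref{eq: Phi(u+v)=Phi(u)+Phi(v)+Phi(pP(u,v))} coordinate by coordinate, after which it becomes an elementary carry computation inside a single copy of $\mathbb{Z}_{p^2}$. Since $\Phi$ acts as the identity on the first $\alpha$ coordinates and as $\phi$ separately on each of the last $\beta$ coordinates, and since $P(\vec{u},\vec{v})$ is defined entrywise, it suffices to check the identity one coordinate at a time. On a $\mathbb{Z}_p$-coordinate there is nothing to prove: $P(u_i,v_i)\in\{0,1\}$, so $p\,P(u_i,v_i)=0$ in $\mathbb{Z}_p$ and the claimed equality collapses to $u_i+v_i=u_i+v_i$. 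Hence everything reduces to showing that, for all $\theta,\eta\in\mathbb{Z}_{p^2}$, one has $\phi(\theta+\eta)=\phi(\theta)+\phi(\eta)+\phi\bigl(p\,P(\theta,\eta)\bigr)$.

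First I would write $\theta=p\theta_0+\theta_1$ and $\eta=p\eta_0+\eta_1$ with all digits in $\{0,1,\dots,p-1\}$, so that $\phi(\theta)=\theta_0(1,\dots,1)+\theta_1(0,1,\dots,p-1)$ and likewise for $\eta$. The point is that the carry produced when the units digits $\theta_1$ and $\eta_1$ are added is exactly $c:=P(\theta_1,\eta_1)$, i.e.\ $1$ when $\theta_1+\eta_1\geqslant p$ and $0$ otherwise; thus inside $\mathbb{Z}_{p^2}$ one has $\theta+\eta=p(\theta_0+\eta_0+c)+(\theta_1+\eta_1-pc)$ with $\theta_1+\eta_1-pc\in\{0,\dots,p-1\}$ the units digit of $\theta+\eta$. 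Feeding this into the definition of $\phi$, reducing the coefficients modulo $p$ (so that $\theta_1+\eta_1-pc\equiv\theta_1+\eta_1$) and using $\phi(pc)=c(1,\dots,1)$, the right-hand side rearranges to $(\theta_0+\eta_0+c)(1,\dots,1)+(\theta_1+\eta_1)(0,1,\dots,p-1)$, which is precisely $\phi(\theta+\eta)$. The only thing to be careful about here is keeping track of which operations live in $\mathbb{Z}$ (the integer carry), in $\mathbb{Z}_{p^2}$ (extracting the digits of $\theta+\eta$) and in $\mathbb{Z}_p$ (the coefficients appearing in $\phi$).

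It then remains to identify the entrywise quantity $P$ with the polynomial $f$ of \eqref{eq: P(u,v)} and to deduce the final sentence. Reading the quotient in \eqref{eq: P(u,v)} as a formal cancellation of factors, $\prod_{m=0}^{p-1}(u_i-m)(v_i-m)\big/\bigl((u_i-k)(v_i-b)\bigr)=\bigl(\prod_{m\neq k}(u_i-m)\bigr)\bigl(\prod_{m\neq b}(v_i-m)\bigr)$, and working modulo $p$ on the reduced digits, Wilson's theorem gives $\prod_{m=0,\,m\neq k}^{p-1}(X-m)\equiv(p-1)!\equiv-1$ when $X\equiv k$ and $0$ otherwise. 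Hence each summand of $f(u_i,v_i)$ equals $1$ precisely when $u_i\equiv k$ and $v_i\equiv b$ and $0$ otherwise, so summing over $k\in\{1,\dots,p-1\}$ and $b\in\{p-k,\dots,p-1\}$ selects exactly the pairs of digits with $u_i+v_i\geqslant p$; that is $f(u_i,v_i)=P(u_i,v_i)$ as in \eqref{eq: values of P(u,v)}. Finally, if $\vec{u}$ (or $\vec{v}$) has order $p$, then on every $\mathbb{Z}_{p^2}$-coordinate its units digit vanishes, so $P(\theta_1,\eta_1)=P(0,\eta_1)=0$ in each coordinate; therefore $p\,P(\vec{u},\vec{v})=\vec{0}$ and \eqref{eq: Phi(u+v)=Phi(u)+Phi(v)+Phi(pP(u,v))} degenerates to the additivity $\Phi(\vec{u}+\vec{v})=\Phi(\vec{u})+\Phi(\vec{v})$. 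I expect the main obstacle to be purely presentational, namely interpreting the formal fraction in \eqref{eq: P(u,v)} correctly and invoking Wilson's theorem on the reduced digits; the carry computation itself is routine once the base-$p$ digits are set up.
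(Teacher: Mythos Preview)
The paper does not prove this lemma; it is quoted verbatim from the cited references \cite{Xuan-Gray-images,wsk_z3z9_linear_rank_kernel} with no argument supplied. So there is no ``paper's own proof'' to compare against, and your proposal stands or falls on its own.

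Your argument is correct. The coordinatewise reduction is the right move, the carry computation for a single $\mathbb{Z}_{p^2}$-coordinate is exactly as you describe, and the identification of $f$ with the carry indicator via Wilson's theorem (or, equivalently, via the derivative of $x^p-x=\prod_{m=0}^{p-1}(x-m)$ evaluated at $x=k$) is clean. Your reading of the fraction in \eqref{eq: P(u,v)} as a formal cancellation is the intended one, and your observation that the computation of $f$ must be done modulo $p$ on the units digits is the only real subtlety. The final sentence about order-$p$ vectors follows immediately as you say: $p\vec{u}=\vec{0}$ forces every $\mathbb{Z}_{p^2}$-coordinate of $\vec{u}$ into $p\mathbb{Z}_{p^2}$, hence units digit zero, hence $P=0$ entrywise.
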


\begin{remark} \label{remark: values of P(a,b)}
	It's clear that $P(u_i, v_i) = f(u_i, v_i)$ for all $u_i, v_i \in \mathbb{Z}_{p^2}$.
	In \cite[Section IV]{LS_Gray_map}, the authors also defined a characteristic function that satisfies \eqref{eq: values of P(u,v)}.
	Although the function \eqref{eq: values of P(u,v)} is very important, any polynomial $f(x, y) \in \mathbb{Z}_{p}[x_1,x_2]$ that satisfies \eqref{eq: values of P(u,v)}
	can be chosen as $P(x,y)$. In the next sections, let $m$ be the smallest degree of such polynomials.
\end{remark}

\begin{lemma} \label{lemma: Phi(C) is linear iff pP(u,v) in C}
	\cite{Xuan-Gray-images}
	Let $\mathcal{C}$ be a $\mathbb{Z}_p\mathbb{Z}_{p^2}$-additive code.
	The $\mathbb{Z}_p\mathbb{Z}_{p^2}$-linear code $C = \Phi(\mathcal{C})$ is linear if and only if $pP(\vec{u},\vec{v}) \in \mathcal{C}$ for all $\vec{u},\vec{v} \in \mathcal{C}$.
\end{lemma}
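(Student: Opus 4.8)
The plan is to derive the equivalence from the quasi-additivity identity \eqref{eq: Phi(u+v)=Phi(u)+Phi(v)+Phi(pP(u,v))} of Lemma \ref{phi(u+v) = phi(u)+phi(v)+...}, combined with two basic facts about the Gray map $\Phi$. First, $\Phi$ is injective on $\mathbb{Z}_p^\alpha\times\mathbb{Z}_{p^2}^\beta$, since it is the identity on the first $\alpha$ coordinates and the (injective) classical Gray map on each of the last $\beta$. Second, for any $\vec u,\vec v\in\mathcal{C}$ the vector $\vec w:=pP(\vec u,\vec v)$ has order dividing $p$: its first $\alpha$ coordinates are $p$ times an element of $\{0,1\}\subseteq\mathbb{Z}_p$, hence $0$, while its last $\beta$ coordinates lie in $p\mathbb{Z}_{p^2}$. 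Consequently the final assertion of Lemma \ref{phi(u+v) = phi(u)+phi(v)+...} applies to $\vec w$, giving $\Phi(\vec a+\vec w)=\Phi(\vec a)+\Phi(\vec w)$ for every $\vec a$; specializing to $\vec a=-\vec w$ and to $\vec a=\vec 0$ also yields $\Phi(-\vec w)=-\Phi(\vec w)$.

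For the ``if'' direction I would assume $pP(\vec u,\vec v)\in\mathcal{C}$ for all $\vec u,\vec v\in\mathcal{C}$ and prove that $C=\Phi(\mathcal{C})$ is closed under addition; since $C$ contains $\vec 0=\Phi(\vec 0)$, this already gives $\mathbb{Z}_p$-linearity of $C$. Fixing $\vec u,\vec v\in\mathcal{C}$ and writing $\vec w=pP(\vec u,\vec v)$, I would compute
\begin{equation*}
\Phi(\vec u+\vec v-\vec w)=\Phi(\vec u+\vec v)+\Phi(-\vec w)=\Phi(\vec u+\vec v)-\Phi(\vec w)=\Phi(\vec u)+\Phi(\vec v),
\end{equation*}
where the first equality uses the additivity of $\Phi$ on the order-$p$ summand $-\vec w$ and the last uses \eqref{eq: Phi(u+v)=Phi(u)+Phi(v)+Phi(pP(u,v))}. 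Since $\mathcal{C}$ is a subgroup and $\vec u,\vec v,\vec w\in\mathcal{C}$, the argument $\vec u+\vec v-\vec w$ belongs to $\mathcal{C}$, so $\Phi(\vec u)+\Phi(\vec v)\in\Phi(\mathcal{C})=C$.

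For the ``only if'' direction I would assume $C$ is linear, fix $\vec u,\vec v\in\mathcal{C}$, and choose $\vec z\in\mathcal{C}$ with $\Phi(\vec z)=\Phi(\vec u)+\Phi(\vec v)$ (possible because $C$ is linear). Then, with $\vec w=pP(\vec u,\vec v)$,
\begin{equation*}
\Phi(\vec z+\vec w)=\Phi(\vec z)+\Phi(\vec w)=\Phi(\vec u)+\Phi(\vec v)+\Phi(\vec w)=\Phi(\vec u+\vec v),
\end{equation*}
and injectivity of $\Phi$ forces $\vec z+\vec w=\vec u+\vec v$, so $pP(\vec u,\vec v)=\vec u+\vec v-\vec z\in\mathcal{C}$ since $\mathcal{C}$ is a subgroup.

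The argument is essentially mechanical once Lemma \ref{phi(u+v) = phi(u)+phi(v)+...} is in hand. The one point to handle carefully is the order-$p$ observation for $\vec w=pP(\vec u,\vec v)$: it is exactly what licenses the additivity of $\Phi$ on that summand and the sign flip $\Phi(-\vec w)=-\Phi(\vec w)$, and I would be careful never to apply these steps to vectors of order $p^2$. Beyond that, and the routine remark that $\mathcal{C}$, being a group, absorbs the sums and differences appearing above, I foresee no genuine obstacle.
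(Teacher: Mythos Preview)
Your proof is correct. Note, however, that the paper does not actually supply its own proof of this lemma: it is stated with a citation to \cite{Xuan-Gray-images} and no argument is given in the present paper, so there is nothing to compare your approach against here.

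On the substance: both directions are sound. The key ingredients you isolate --- injectivity of $\Phi$, the fact that $pP(\vec u,\vec v)$ has all its $\mathbb{Z}_p$-coordinates zero and its $\mathbb{Z}_{p^2}$-coordinates in $p\mathbb{Z}_{p^2}$ (hence order dividing $p$), and the quasi-additivity identity \eqref{eq: Phi(u+v)=Phi(u)+Phi(v)+Phi(pP(u,v))} --- are exactly what is needed, and you apply them carefully. Your remark that closure under addition suffices for $\mathbb{Z}_p$-linearity is correct since $\mathbb{Z}_p$ is a prime field (any additive subgroup of $\mathbb{Z}_p^n$ is automatically a subspace), though it would not hurt to say this explicitly. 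The only cosmetic point is that your justification of $\Phi(-\vec w)=-\Phi(\vec w)$ via $\Phi(\vec 0)=\Phi(-\vec w)+\Phi(\vec w)$ tacitly uses $\Phi(\vec 0)=\vec 0$, which is immediate from the definition of the Gray map.
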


Let $c_j = \left( c\otimes c \otimes \cdots \otimes c \right) $ be the polynomial whose roots are the products $\xi^{i_1} \xi^{i_2} \cdots \xi^{i_j}$
such that $\xi^{i_1}, \xi^{i_2}, \cdots, \xi^{i_j}$ are roots of $c$, the divisor of $x^n-1 \in \mathbb{Z}_p[x]$ with $\gcd(n,p) = 1$.
Then $c_j$ is called the $j_{th}$ \textbf{circle product} of $c$.

\begin{theorem} \label{thm: linearity of images of ZpZp2 cyclic codes}
	\cite{Xuan-Gray-images}
	Let $\mathcal{C} = \langle (a, 0), (b, fh + pf)\rangle$ be a $\mathbb{Z}_p\mathbb{Z}_{p^2}$-additive cyclic code with $fhg = x^\beta-1$.
	Assume that $(x-1) \mid g$ and $\Phi(p \vec{u}_1 \ast \cdots \ast \vec{u}_m) \in \langle \Phi(\mathcal{C}) \rangle$ for any $\vec{u}_i \in C$, $1 \leqslant i \leqslant m$.
	Then $\Phi(\mathcal{C})$ is linear if $\gcd\left( f', \overline{g_m} \right) = 1$ with $f' = \overline{f}a / \gcd(a, \overline{b}g)$,
	where $\overline{g_m} = \left( \overline{g}\otimes\cdots\otimes\overline{g} \right)$ is the $m_{th}$ circle product of $\overline{g}$ and $m=\deg(P(x_1, x_2))$.
\end{theorem}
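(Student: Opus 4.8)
The plan is to verify the criterion of Lemma~\ref{lemma: Phi(C) is linear iff pP(u,v) in C}: it suffices to show $pP(\vec{u},\vec{v})\in\mathcal{C}$ for all $\vec{u},\vec{v}\in\mathcal{C}$. Write $\vec{u}=(\vec{u}',\vec{u}'')$, $\vec{v}=(\vec{v}',\vec{v}'')$ for the $\mathbb{Z}_p$- and $\mathbb{Z}_{p^2}$-parts. Since $P(a,0)=P(0,b)=0$ for all $a,b$, the polynomial $P(x_1,x_2)$ has no monomial involving $x_1$ or $x_2$ alone, so $P(x_1,x_2)=\sum_{i,j\ge1,\,i+j\le m}c_{ij}\,x_1^i x_2^j$ with $c_{ij}\in\mathbb{Z}_p$. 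Evaluating componentwise and multiplying by $p$ annihilates the first $\alpha$ coordinates, leaving $pP(\vec{u},\vec{v})=\bigl(\vec{0},\,p\sum_{i,j\ge1,\,i+j\le m}c_{ij}\,(\vec{u}'')^{\ast i}\ast(\vec{v}'')^{\ast j}\bigr)$ (each coordinate reduced mod $p$ before multiplication by $p$). Because $(x-1)\mid g$, also $(x-1)\mid\overline{g}$, hence $\overline{f}\,\overline{h}\mid(x^{\beta}-1)/(x-1)$, so the all-ones word $1+x+\cdots+x^{\beta-1}$ lies in $\langle\overline{f}\,\overline{h}\rangle$, the reduction of $\mathcal{C}_Y$ modulo $p$; fix $\vec{w}=(\vec{w}',\vec{w}'')\in\mathcal{C}$ with $\vec{w}''\equiv\vec{1}\pmod{p}$. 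Then each term $p\,(\vec{u}'')^{\ast i}\ast(\vec{v}'')^{\ast j}=p\,\vec{u}^{\ast i}\ast\vec{v}^{\ast j}\ast\vec{w}^{\ast(m-i-j)}$ is an honest $m$-fold product $p\,\vec{u}_1\ast\cdots\ast\vec{u}_m$ of codewords of $\mathcal{C}$, so by the hypothesis each maps under $\Phi$ into $\langle\Phi(\mathcal{C})\rangle$; these terms have order dividing $p$, on which $\Phi$ is additive (Lemma~\ref{phi(u+v) = phi(u)+phi(v)+...}), so $\Phi(pP(\vec{u},\vec{v}))\in\langle\Phi(\mathcal{C})\rangle$ as well. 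The remaining task is to upgrade this to $pP(\vec{u},\vec{v})\in\mathcal{C}$, and that is where the gcd condition is used.

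Next I would isolate two cyclic codes over $\mathbb{Z}_p$ of length $\beta$. First, let $Z=\{\vec{z}\in\mathbb{Z}_p[x]/(x^{\beta}-1):(\vec{0},p\vec{z})\in\mathcal{C}\}$. Since $(\vec{0},p\vec{z})$ has order dividing $p$, it lies in $\mathcal{C}$ iff it lies in $\mathcal{C}_p$; expanding a general element of $\mathcal{C}_p$ in the generators of Lemma~\ref{generators of C_p} and using Lemma~\ref{3 generators of zpzp2 cyclic codes} (the multiplier $\mu$ of $(b\overline{g},pfg)$ can be completed to cancel the first $\alpha$ coordinates exactly when $(a/d)\mid\mu$, where $d=\gcd(a,\overline{b}\,\overline{g})$), one obtains $Z=\langle f'\overline{g},\ \overline{f}\,\overline{h}\rangle=\langle\overline{f}\cdot\gcd(a/d,\overline{h})\rangle$, with $f'=\overline{f}\cdot(a/d)$. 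Second, a transform (circle-product) computation: every word of $\langle\overline{f}\,\overline{h}\rangle$ has discrete-Fourier spectrum contained in the index set $T$ of the roots of $\overline{g}$, so $(\vec{u}'')^{\ast i}\ast(\vec{v}'')^{\ast j}$ has spectrum in the $(i+j)$-fold sumset of $T$, and since $(x-1)\mid g$ forces $0\in T$, that sumset lies inside $mT$, which is exactly the root index set of $\overline{g_m}$. Hence the word $\vec{z}$ determined by $pP(\vec{u},\vec{v})=(\vec{0},p\vec{z})$ always lies in $\langle(x^{\beta}-1)/\overline{g_m}\rangle$.

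The endgame is a short divisibility argument showing that $\gcd(f',\overline{g_m})=1$ forces $\langle(x^{\beta}-1)/\overline{g_m}\rangle\subseteq Z$. From $0\in T$ we get $\overline{g}\mid\overline{g_m}$; write $\overline{g_m}=\overline{g}\cdot q$ with $q=\overline{g_m}/\overline{g}$, so $\gcd(\overline{g},q)=1$ and $q\mid\overline{f}\,\overline{h}$. Then $\gcd(f',\overline{g_m})=1$ gives $\gcd(\overline{f},q)=1$, whence $q\mid\overline{h}$; it also gives $\gcd(a/d,\overline{g_m})=1$, whence $\gcd(\gcd(a/d,\overline{h}),q)=1$. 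Therefore $\gcd(a/d,\overline{h})\cdot q\mid\overline{h}$, so $\overline{f}\cdot\gcd(a/d,\overline{h})$ divides $\overline{f}\,\overline{h}/q=(x^{\beta}-1)/\overline{g_m}$, i.e.\ $\langle(x^{\beta}-1)/\overline{g_m}\rangle\subseteq\langle\overline{f}\cdot\gcd(a/d,\overline{h})\rangle=Z$. Combining with the previous paragraph, for all $\vec{u},\vec{v}\in\mathcal{C}$ the word $\vec{z}$ with $pP(\vec{u},\vec{v})=(\vec{0},p\vec{z})$ lies in $Z$, so $pP(\vec{u},\vec{v})\in\mathcal{C}$, and Lemma~\ref{lemma: Phi(C) is linear iff pP(u,v) in C} then gives that $\Phi(\mathcal{C})$ is linear.

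The step I expect to be the main obstacle is the circle-product identification: turning ``all Schur products of $\langle\overline{f}\,\overline{h}\rangle$ of order at most $m$'' into the single cyclic code $\langle(x^{\beta}-1)/\overline{g_m}\rangle$, and in particular verifying that the lower-order products with $i+j<m$ are absorbed — which is precisely what $(x-1)\mid g$ provides, since it makes $1$ a common root of all the $\overline{g_i}$ and so nests their root index sets inside $mT$. A secondary technical point is describing $Z$ exactly from the non-unique generator data of $\mathcal{C}$; once both cyclic codes are correctly pinned down, the divisibility endgame above is immediate. (Note that the hypothesis $\Phi(p\vec{u}_1\ast\cdots\ast\vec{u}_m)\in\langle\Phi(\mathcal{C})\rangle$ enters only at the supporting step indicated, and the argument in fact derives the linearity of $\Phi(\mathcal{C})$ directly from $\gcd(f',\overline{g_m})=1$ and $(x-1)\mid g$.)
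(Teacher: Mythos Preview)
This theorem is not proved in the present paper; it is quoted from the companion manuscript \cite{Xuan-Gray-images} (listed in the bibliography as ``submitted''), and no argument for it appears anywhere in this paper. There is therefore nothing here to compare your proposal against.

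On the substance of your argument: the overall strategy---reduce to $pP(\vec u,\vec v)\in\mathcal C$ via Lemma~\ref{lemma: Phi(C) is linear iff pP(u,v) in C}, expand $P$ into monomials, control Schur products of words of $\mathcal C_Y$ by the circle product $\overline{g_m}$, and compare against the order-$p$ subcode---is exactly the natural one and is consistent with how the paper uses this result later. Two points deserve care. First, your computation of $Z=\{\vec z:(\vec 0,p\vec z)\in\mathcal C\}$ mixes polynomials that live in $\mathbb Z_p[x]/(x^\alpha-1)$ (namely $a$, $b$, and $a/d$) with polynomials in $\mathbb Z_p[x]/(x^\beta-1)$: the constraint ``$(a/d)\mid\mu$'' is a statement modulo $x^\alpha-1$, while the same $\mu$ then multiplies $\overline f\,\overline g$ modulo $x^\beta-1$. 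The passage from one to the other is not automatic, and the identity $Z=\langle\overline f\cdot\gcd(a/d,\overline h)\rangle$ is not obviously well-posed without an assumption linking the two lengths. Second, you observe that your argument never really uses the hypothesis $\Phi(p\,\vec u_1\ast\cdots\ast\vec u_m)\in\langle\Phi(\mathcal C)\rangle$. If your $Z$ computation were airtight that would be a genuine strengthening of the theorem; more likely, this hypothesis is precisely what bridges the $\alpha$--$\beta$ interaction that your sketch of $Z$ glosses over. I would treat the claimed redundancy as a warning sign rather than a bonus.
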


\subsection{Kernels and Ranks}\label{subsec:kernel and rank}

Let $\mathcal{C}$ be a $\mathbb{Z}_p \mathbb{Z}_{p^2}$-additive code, then the \textit{kernel} of $\Phi(\mathcal{C})$ is defined as
\[ \ker(\Phi(\mathcal{C})) = \{ \vec{v} \in \mathbb{Z}_p^{\alpha + p\beta} \mid \vec{v} + \Phi(\mathcal{C}) = \Phi(\mathcal{C}) \}. \]
And, the \textit{kernel} of $\mathcal{C}$ is defined as \[ \mathcal{K}(\mathcal{C}) = \Phi^{-1} (\ker(\Phi(\mathcal{C}))). \]
For convenience, let $ker(\mathcal{C})$ be the dimension of $\mathcal{K}(\mathcal{C})$.

Let rank$(\Phi(\mathcal{C})) = \dim(\langle \Phi (\mathcal{C}) \rangle)$ and
\[ \mathcal{R}(\mathcal{C}) = \{ \vec{v}\in \mathbb{Z}_p^{\alpha} \times \mathbb{Z}_{p^2}^{\beta} \mid \Phi(\vec{v}) \in \langle \Phi(\mathcal{C}) \rangle \}.\]

In this section, we will see that if the code $\mathcal{C}$ is a cyclic code, then $\mathcal{R}(\mathcal{C})$
and $\mathcal{K}(\mathcal{C})$ are both cyclic codes whether the alphabet is $\mathbb{Z}_{p^2}$ or $\mathbb{Z}_p\mathbb{Z}_{p^2}$.

By \cite[Lemma 10]{wsk_z3z9_linear_rank_kernel}, Lemma \ref{lemma: Phi(C) is linear iff pP(u,v) in C} and the definition of $\mathcal{K}(\mathcal{C})$, we have

\begin{proposition} \label{prop: ker(Phi(C)) and K(C)}
	Let $\mathcal{C}$ be a $\mathbb{Z}_p\mathbb{Z}_{p^2}$-additive cyclic code and $C = \Phi(\mathcal{C})$, then
	\[ \ker(\Phi(\mathcal{C})) = \{ \Phi(\vec{u}) \mid \vec{u}\in \mathcal{C} \ \textnormal{and} \ pP(\vec{u}, \vec{v}) \in \mathcal{C}, \forall \vec{v} \in \mathcal{C} \}, \]
	and
	\[ \mathcal{K}(\mathcal{C}) = \{ \vec{u} \in \mathcal{C} \mid pP(\vec{u}, \vec{v}) \in \mathcal{C}, \forall \vec{v} \in \mathcal{C} \}. \]
\end{proposition}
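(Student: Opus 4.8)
The plan is to peel the definition of $\ker(\Phi(\mathcal{C}))$ apart one layer at a time and feed in the translation identity \eqref{eq: Phi(u+v)=Phi(u)+Phi(v)+Phi(pP(u,v))}, mimicking the argument behind Lemma~\ref{lemma: Phi(C) is linear iff pP(u,v) in C} but tracking a \emph{single} codeword instead of quantifying over all of $\mathcal{C}$. First I would note that $\vec{0} = \Phi(\vec{0}) \in \Phi(\mathcal{C}) = C$, so any $\vec{v}$ with $\vec{v} + C = C$ satisfies $\vec{v} = \vec{v} + \vec{0} \in C$; hence $\ker(\Phi(\mathcal{C})) \subseteq C$ and every kernel element has the form $\Phi(\vec{u})$ with $\vec{u} \in \mathcal{C}$. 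Because adding a fixed vector is a bijection of $\mathbb{Z}_p^{\alpha + p\beta}$ and $C$ is finite, the condition $\Phi(\vec{u}) + C = C$ is equivalent to the one-sided inclusion $\Phi(\vec{u}) + C \subseteq C$, i.e.\ to $\Phi(\vec{u}) + \Phi(\vec{v}) \in C$ for every $\vec{v} \in \mathcal{C}$.

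Next I would rewrite $\Phi(\vec{u}) + \Phi(\vec{v})$ with \eqref{eq: Phi(u+v)=Phi(u)+Phi(v)+Phi(pP(u,v))}. The vector $pP(\vec{u},\vec{v})$ has all coordinates equal to $0$ on the $\mathbb{Z}_p$-part and lying in $\{0,p\}$ on the $\mathbb{Z}_{p^2}$-part, so it has additive order dividing $p$; the last assertion of Lemma~\ref{phi(u+v) = phi(u)+phi(v)+...} then gives $\Phi\big(\vec{w} - pP(\vec{u},\vec{v})\big) = \Phi(\vec{w}) - \Phi\big(pP(\vec{u},\vec{v})\big)$ for every $\vec{w}$, whence $\Phi(\vec{u}) + \Phi(\vec{v}) = \Phi\big((\vec{u}+\vec{v}) - pP(\vec{u},\vec{v})\big)$. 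Since $\Phi$ is injective, this element lies in $C = \Phi(\mathcal{C})$ if and only if $(\vec{u}+\vec{v}) - pP(\vec{u},\vec{v}) \in \mathcal{C}$; and because $\vec{u}+\vec{v} \in \mathcal{C}$ and $\mathcal{C}$ is a subgroup, this is equivalent to $pP(\vec{u},\vec{v}) \in \mathcal{C}$.

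Putting the two steps together yields $\ker(\Phi(\mathcal{C})) = \{ \Phi(\vec{u}) : \vec{u} \in \mathcal{C},\ pP(\vec{u},\vec{v}) \in \mathcal{C}\ \forall \vec{v} \in \mathcal{C} \}$, which is the first displayed identity; applying $\Phi^{-1}$ and using the injectivity of $\Phi$ gives the description of $\mathcal{K}(\mathcal{C})$, exactly as recorded in \cite[Lemma~10]{wsk_z3z9_linear_rank_kernel} in the $\mathbb{Z}_3\mathbb{Z}_9$ setting. The only point that needs care is the order-$p$ bookkeeping in the second step: one must verify that $pP(\vec{u},\vec{v})$ genuinely has order dividing $p$ so that $\Phi$ behaves additively against it (in particular $\Phi(-pP(\vec{u},\vec{v})) = -\Phi(pP(\vec{u},\vec{v}))$), and that subtracting it from $\vec{u}+\vec{v}$ does not leave $\mathcal{C}$. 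Once this is in place, everything else is a formal unwinding of the definition of the kernel along the template of Lemma~\ref{lemma: Phi(C) is linear iff pP(u,v) in C}.
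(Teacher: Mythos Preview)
Your proposal is correct and follows exactly the route the paper intends: the paper's own ``proof'' of this proposition is nothing more than an appeal to \cite[Lemma~10]{wsk_z3z9_linear_rank_kernel}, Lemma~\ref{lemma: Phi(C) is linear iff pP(u,v) in C}, and the definition of $\mathcal{K}(\mathcal{C})$, and what you have written is precisely the unwinding of that citation---show $\ker(\Phi(\mathcal{C}))\subseteq C$, reduce to a one-sided inclusion by finiteness, apply \eqref{eq: Phi(u+v)=Phi(u)+Phi(v)+Phi(pP(u,v))} together with the order-$p$ additivity clause of Lemma~\ref{phi(u+v) = phi(u)+phi(v)+...}, and finish by injectivity of $\Phi$. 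Your explicit check that $pP(\vec{u},\vec{v})$ has order dividing $p$ (so that $\Phi$ is additive against it and $\Phi(-pP(\vec{u},\vec{v}))=-\Phi(pP(\vec{u},\vec{v}))$) is the only nontrivial verification, and you handle it correctly.
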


\begin{remark}
	Proposition \ref{prop: ker(Phi(C)) and K(C)} also holds for $\mathbb{Z}_{p^2}$-cyclic codes.
\end{remark}

\section{Kernel of $\mathbb{Z}_p\mathbb{Z}_{p^2}$-Additive Cyclic Codes}\label{sec:kernel}

In this section, we will study the kernel of $\mathbb{Z}_p\mathbb{Z}_{p^2}$-additive cyclic codes.
We will prove that $\mathcal{K}(\mathcal{C})$ is cyclic if $\mathcal{C}$ is a $\mathbb{Z}_p\mathbb{Z}_{p^2}$-additive cyclic code.
For convenience, let $ker(C)$ or $ker(\mathcal{C})$ be the dimension of the kernel of $C=\Phi(\mathcal{C})$.

\begin{lemma} \label{K(C)_Y < K(C_Y)}
	Let $\mathcal{C}$ be a  $\mathbb{Z}_p\mathbb{Z}_{p^2}$-additive code, then $\mathcal{K}(\mathcal{C})_Y \subseteq \mathcal{K}(\mathcal{C}_Y)$.
\end{lemma}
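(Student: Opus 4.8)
The plan is to unwind the definition of the kernel given in Proposition \ref{prop: ker(Phi(C)) and K(C)} and restrict attention to the $Y$-coordinates. First I would take an arbitrary $\vec{w} \in \mathcal{K}(\mathcal{C})_Y$; by definition of the projection, there is a codeword $\vec{u} = (\vec{u}', \vec{w}) \in \mathcal{K}(\mathcal{C})$, i.e. $\vec{u} \in \mathcal{C}$ and $pP(\vec{u}, \vec{v}) \in \mathcal{C}$ for every $\vec{v} \in \mathcal{C}$. I want to conclude that $\vec{w} \in \mathcal{K}(\mathcal{C}_Y)$, that is, $\vec{w} \in \mathcal{C}_Y$ (which is immediate, since $\mathcal{C}_Y$ is the $Y$-projection of $\mathcal{C}$ and $\vec{u} \in \mathcal{C}$) and $pP(\vec{w}, \vec{z}) \in \mathcal{C}_Y$ for every $\vec{z} \in \mathcal{C}_Y$.

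The key observation is that the function $P$ acts coordinatewise: from \eqref{eq: values of P(u,v)}, $P(\vec{u}, \vec{v})$ in the $Y$-block depends only on the $Y$-coordinates of $\vec{u}$ and $\vec{v}$, so if we write $pP(\vec{u}, \vec{v}) = (pP(\vec{u}', \vec{v}'), pP(\vec{w}, \vec{v}''))$ for a codeword $\vec{v} = (\vec{v}', \vec{v}'')\in\mathcal{C}$, then projecting onto $Y$ gives $pP(\vec{w}, \vec{v}'') \in \mathcal{C}_Y$. So the step I need is: given an arbitrary $\vec{z} \in \mathcal{C}_Y$, lift it to some $\vec{v} = (\vec{v}', \vec{z}) \in \mathcal{C}$ (possible by definition of $\mathcal{C}_Y$), apply the membership $pP(\vec{u}, \vec{v}) \in \mathcal{C}$, and project. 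Since the $Y$-part of $P(\vec{u}, \vec{v})$ is exactly $P(\vec{w}, \vec{z})$, this yields $pP(\vec{w}, \vec{z}) \in \mathcal{C}_Y$, as required. Note that here I also use that $\mathcal{C}_Y$ is itself a $\mathbb{Z}_{p^2}$-additive (cyclic) code, so that Proposition \ref{prop: ker(Phi(C)) and K(C)} — via the remark following it — applies to characterize $\mathcal{K}(\mathcal{C}_Y)$ in the same form.

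There is really no serious obstacle here; the only point that needs a moment of care is the coordinatewise/block behavior of $P$ and of $\Phi$, namely that $\Phi$ splits as identity on the $X$-block and $\phi$ applied entrywise on the $Y$-block, so that both the additive defect term $pP(\vec{u},\vec{v})$ and the projection $\pi_Y$ are compatible. I would state this compatibility explicitly (it follows directly from \eqref{eq: Phi(u+v)=Phi(u)+Phi(v)+Phi(pP(u,v))} and \eqref{eq: values of P(u,v)}) and then the inclusion $\mathcal{K}(\mathcal{C})_Y \subseteq \mathcal{K}(\mathcal{C}_Y)$ drops out in one line. It is worth remarking that the reverse inclusion need not hold, which is precisely why the lemma is phrased as an inclusion and not an equality.
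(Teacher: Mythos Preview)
Your proposal is correct and follows essentially the same route as the paper: both arguments use the characterization of $\mathcal{K}$ from Proposition~\ref{prop: ker(Phi(C)) and K(C)}, the coordinatewise nature of $P$, and a lift from $\mathcal{C}_Y$ back to $\mathcal{C}$. The only cosmetic difference is that the paper observes directly that $pP(\vec{u},\vec{v}) = (\vec{0}, pP(\vec{u}_2,\vec{v}_2))$ (the $X$-block vanishes since $p\cdot\mathbb{Z}_p^\alpha = 0$), whereas you write the $X$-part as $pP(\vec{u}',\vec{v}')$ and then project it away; either way the conclusion is the same.
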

\begin{proof}
	Let $\mathcal{C}$ be a $\mathbb{Z}_p\mathbb{Z}_{p^2}$-additive cyclic code and $\vec{u} = (\vec{u}_1 ,  \vec{u}_2) \in \mathcal{C}$ is a codeword.
	Then $\vec{u} \in \mathcal{K}(\mathcal{C})$ if and only if $pP(\vec{u}, \vec{v}) \in \mathcal{C}$ for $\vec{v} = (\vec{v}_1 ,  \vec{v}_2) \in \mathcal{C}$.
	It's easy to check that for all $\vec{v} \in \mathcal{C}$, we have
	\[ pP(\vec{u}, \vec{v}) = (\vec{0}, pP(\vec{u}_2, \vec{v}_2)). \]
	Thus, if $\vec{u} = (\vec{u}_1 ,  \vec{u}_2) \in \mathcal{K}(\mathcal{C})$, then $\vec{u}_2 \in \mathcal{K}(\mathcal{C})_Y$ and $\vec{u}_2 \in \mathcal{K}(\mathcal{C}_Y)$.
\end{proof}

\begin{proposition} \label{K(C) < C_X * K(C_Y)}
	Let $\mathcal{C}$ be a $\mathbb{Z}_p\mathbb{Z}_{p^2}$-additive code, then $\mathcal{K}(\mathcal{C}) \subseteq \mathcal{C}_X \times \mathcal{K}(\mathcal{C}_Y)$.
\end{proposition}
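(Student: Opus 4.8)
The plan is to unwind the definition of $\mathcal{K}(\mathcal{C})$ and show that every codeword in it has its $X$-part in $\mathcal{C}_X$ and its $Y$-part in $\mathcal{K}(\mathcal{C}_Y)$. Take $\vec{u} = (\vec{u}_1, \vec{u}_2) \in \mathcal{K}(\mathcal{C})$. By definition $\vec{u} \in \mathcal{C}$, so projecting onto the first $\alpha$ coordinates gives $\vec{u}_1 \in \mathcal{C}_X$ immediately; this half of the claim is automatic from $\mathcal{K}(\mathcal{C}) \subseteq \mathcal{C}$.

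For the $Y$-part, the key observation is already recorded in the proof of Lemma \ref{K(C)_Y < K(C_Y)}: since the order of every codeword in $\mathcal{C}_X$ is $p$, the ``carry'' function vanishes on the first block, so for any $\vec{v} = (\vec{v}_1, \vec{v}_2) \in \mathcal{C}$ we have $pP(\vec{u}, \vec{v}) = (\vec{0}, pP(\vec{u}_2, \vec{v}_2))$. By Proposition \ref{prop: ker(Phi(C)) and K(C)}, $\vec{u} \in \mathcal{K}(\mathcal{C})$ forces $pP(\vec{u},\vec{v}) = (\vec{0}, pP(\vec{u}_2,\vec{v}_2)) \in \mathcal{C}$ for all $\vec{v} \in \mathcal{C}$, hence $pP(\vec{u}_2, \vec{v}_2) \in \mathcal{C}_Y$ for all such $\vec{v}$.

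The one point that needs a word of care is that to conclude $\vec{u}_2 \in \mathcal{K}(\mathcal{C}_Y)$ one must check $pP(\vec{u}_2, \vec{w}) \in \mathcal{C}_Y$ for \emph{every} $\vec{w} \in \mathcal{C}_Y$, not just for those $\vec{w} = \vec{v}_2$ arising as the $Y$-part of a codeword of $\mathcal{C}$ — but by the definition of $\mathcal{C}_Y$ as the projection of $\mathcal{C}$ onto the last $\beta$ coordinates, every $\vec{w} \in \mathcal{C}_Y$ is of this form, so the two conditions coincide. Combining, $\vec{u}_2 \in \mathcal{K}(\mathcal{C}_Y)$, and therefore $\vec{u} = (\vec{u}_1, \vec{u}_2) \in \mathcal{C}_X \times \mathcal{K}(\mathcal{C}_Y)$, which is exactly the desired inclusion. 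Indeed this proposition is essentially a packaging of Lemma \ref{K(C)_Y < K(C_Y)} together with the trivial inclusion $\mathcal{K}(\mathcal{C}) \subseteq \mathcal{C}$, so there is no real obstacle; the only thing to be careful about is not to overclaim equality — in general $\mathcal{K}(\mathcal{C})$ is strictly smaller than $\mathcal{C}_X \times \mathcal{K}(\mathcal{C}_Y)$ because the $X$- and $Y$-parts of a codeword of $\mathcal{C}$ are linked.
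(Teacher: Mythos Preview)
Your proof is correct and follows essentially the same route as the paper: take $\vec{u}\in\mathcal{K}(\mathcal{C})$, get $\vec{u}_1\in\mathcal{C}_X$ from $\mathcal{K}(\mathcal{C})\subseteq\mathcal{C}$, and use the vanishing of $pP$ on the $X$-block (as in Lemma~\ref{K(C)_Y < K(C_Y)}) together with Proposition~\ref{prop: ker(Phi(C)) and K(C)} to force $\vec{u}_2\in\mathcal{K}(\mathcal{C}_Y)$. Your extra remark that every element of $\mathcal{C}_Y$ is the $Y$-projection of some codeword of $\mathcal{C}$ makes explicit a point the paper leaves implicit, but otherwise the arguments coincide.
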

\begin{proof}
	Let $\vec{u} = (\vec{u}_1 ,  \vec{u}_2) \in \mathcal{K}(\mathcal{C})$, then for all $\vec{v} = (\vec{v}_1 ,  \vec{v}_2) \in \mathcal{C}$,
	we have $pP(\vec{u}, \vec{v}) \in \mathcal{C}$ since $\vec{u}$ is in the kernel.
	By the proof of Lemma \ref{K(C)_Y < K(C_Y)}, we know $\vec{u}_1 \in \mathcal{C}_X$ and $pP(\vec{u}_2, \vec{v}_2) \in \mathcal{C}_Y$,
	for all $\vec{v}_2 \in \mathcal{C}_Y$ which gives $\vec{u}_2 \in \mathcal{K}(\mathcal{C}_Y)$ and $\vec{u} \in \mathcal{C}_X \times \mathcal{K}(\mathcal{C}_Y)$.
\end{proof}

\begin{proposition} \label{K(C) = C_X * K(C_Y)}
	If $\mathcal{C}$ is a separable $\mathbb{Z}_p\mathbb{Z}_{p^2}$-additive code, then $\mathcal{K}(\mathcal{C}) = \mathcal{C}_X \times \mathcal{K}(\mathcal{C}_Y)$.
\end{proposition}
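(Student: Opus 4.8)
The plan is to obtain the equality by sandwiching: Proposition~\ref{K(C) < C_X * K(C_Y)} already gives $\mathcal{K}(\mathcal{C}) \subseteq \mathcal{C}_X \times \mathcal{K}(\mathcal{C}_Y)$ for \emph{every} $\mathbb{Z}_p\mathbb{Z}_{p^2}$-additive code, so the only thing left to prove is the reverse inclusion $\mathcal{C}_X \times \mathcal{K}(\mathcal{C}_Y) \subseteq \mathcal{K}(\mathcal{C})$, and this is exactly where the separability hypothesis $\mathcal{C} = \mathcal{C}_X \times \mathcal{C}_Y$ will be used.

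To establish the reverse inclusion I would take an arbitrary $\vec{u} = (\vec{u}_1, \vec{u}_2)$ with $\vec{u}_1 \in \mathcal{C}_X$ and $\vec{u}_2 \in \mathcal{K}(\mathcal{C}_Y)$, and check the two conditions of the kernel characterization in Proposition~\ref{prop: ker(Phi(C)) and K(C)}. First, $\vec{u} \in \mathcal{C}$: since $\mathcal{K}(\mathcal{C}_Y) \subseteq \mathcal{C}_Y$, separability gives $\vec{u} \in \mathcal{C}_X \times \mathcal{C}_Y = \mathcal{C}$. Second, I must show $pP(\vec{u}, \vec{v}) \in \mathcal{C}$ for every $\vec{v} = (\vec{v}_1, \vec{v}_2) \in \mathcal{C}$; here the componentwise computation already performed in the proof of Lemma~\ref{K(C)_Y < K(C_Y)} yields $pP(\vec{u}, \vec{v}) = (\vec{0}, pP(\vec{u}_2, \vec{v}_2))$, and because $\vec{v}_2 \in \mathcal{C}_Y$ and $\vec{u}_2 \in \mathcal{K}(\mathcal{C}_Y)$ we get $pP(\vec{u}_2, \vec{v}_2) \in \mathcal{C}_Y$, whence $pP(\vec{u}, \vec{v}) \in \{\vec{0}\} \times \mathcal{C}_Y \subseteq \mathcal{C}_X \times \mathcal{C}_Y = \mathcal{C}$. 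Applying Proposition~\ref{prop: ker(Phi(C)) and K(C)} gives $\vec{u} \in \mathcal{K}(\mathcal{C})$, which finishes the argument.

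I do not expect a genuine obstacle here; the proof is a short verification, and the one substantive point to highlight is that separability is used in precisely one place: it is what makes $\{\vec{0}\} \times \mathcal{C}_Y$ sit inside $\mathcal{C}$, so that the element $(\vec{0}, pP(\vec{u}_2, \vec{v}_2))$ is recognized as a codeword. A minor bookkeeping remark worth including is that the characterization of the kernel in Proposition~\ref{prop: ker(Phi(C)) and K(C)}, invoked both for $\mathcal{C}$ and for $\mathcal{C}_Y$ viewed as a $\mathbb{Z}_{p^2}$-additive code, rests only on Lemma~\ref{lemma: Phi(C) is linear iff pP(u,v) in C} and the additive structure, so it is legitimate to use it for $\mathcal{C}_Y$ even though $\mathcal{C}_Y$ is not assumed to be cyclic.
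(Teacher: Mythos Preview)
Your proposal is correct and follows essentially the same route as the paper: prove the reverse inclusion by taking $\vec{u}\in\mathcal{C}_X\times\mathcal{K}(\mathcal{C}_Y)$, compute $pP(\vec{u},\vec{v})=(\vec{0},pP(\vec{u}_2,\vec{v}_2))\in\mathcal{C}_X\times\mathcal{C}_Y=\mathcal{C}$ using separability, and then invoke Proposition~\ref{K(C) < C_X * K(C_Y)} for the other inclusion. Your write-up is slightly more explicit (you separately verify $\vec{u}\in\mathcal{C}$ and note why the kernel characterization applies to $\mathcal{C}_Y$), but the argument is the same.
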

\begin{proof}
	Let $\vec{u} \in \mathcal{C}_X \times \mathcal{K}(\mathcal{C}_Y)$, then $pP(\vec{u}, \vec{v}) = (\vec{0} ,  pP(\vec{u}_2, \vec{v}_2))$ for all
	$\vec{v} \in \mathcal{C}$.
	It's also easy to check that $\vec{u}_2 \in \mathcal{K}(\mathcal{C}_Y)$ and $pP(\vec{u}_2, \vec{v}_2) \in \mathcal{C}_Y$.
	Note that $\vec{0} \in \mathcal{C}_X$.
	Thus, we have \[ pP(\vec{u}, \vec{v}) = (\vec{0} ,  pP(\vec{u}_2, \vec{v}_2) )\in \mathcal{C} = \mathcal{C}_X \times \mathcal{C}_Y, \]
	for all $\vec{v} \in \mathcal{C}$ since $\mathcal{C}$ is separable.
	Hence, $\vec{u} \in \mathcal{K}(\mathcal{C})$ and $\mathcal{C}_X \times \mathcal{K}(\mathcal{C}_Y) \subseteq \mathcal{K}(\mathcal{C})$.
	By Proposition \ref{K(C) < C_X * K(C_Y)}, we know $\mathcal{K}(\mathcal{C}) = \mathcal{C}_X \times \mathcal{K}(\mathcal{C}_Y)$
	if $\mathcal{C}$ is separable.
\end{proof}

From the generator matrix given in \eqref{eq:generator matrix 2}, the code $\mathcal{C}_Y$ has a generator matrix of the following form
\begin{equation}
	\label{eq:generator matrix of C_Y}
	\begin{pmatrix}
		\begin{array}{ccc}
			pT_2 & \vec{0} & \vec{0} \\
			pT_1 & pI_{\gamma - \kappa} & \vec{0} \\
			S & R & I_\delta
		\end{array}
	\end{pmatrix}.
\end{equation}
Obviously, $\mathcal{C}_Y$ is a linear code over $\mathbb{Z}_{p^2}$.
By \cite[Sec.2]{p-adic_cyclic_codes}, $\mathcal{C}_Y$ has type $p^{2\delta} p^{\gamma-\kappa_1}$.
The minimum value for the dimension of $\ker(\phi(\mathcal{C}_Y))$ is $\delta + \gamma - \kappa_1$.

\begin{theorem} \label{thm: dimension of the kernel}
	Let $\mathcal{C}$ be a $\mathbb{Z}_p\mathbb{Z}_{p^2}$-additive code with the generator matrix in the form of \eqref{eq:generator matrix 2} and let $\mathcal{C}'$ be the code
	generated by the matrix $G'$ in \eqref{eq:generator matrix of C'}, where
	\begin{equation} \label{eq:generator matrix of C'}
		G' =
		\left(  \begin{array}{ccc|ccc}
			\vec{0} & \vec{0} & \vec{0} & p T_1 & p I_{\gamma - \kappa} & \vec{0} \\
			\vec{0} & \vec{0} & S' & S & R & I_{\delta} \\
		\end{array} \right).
	\end{equation}
	Then $\dim(\ker(\Phi(\mathcal{C}))) = \kappa_1 + \kappa_2 + \dim(\ker(\phi(\mathcal{C}'_Y)))$.
\end{theorem}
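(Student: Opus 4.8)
\emph{Proof plan.} The plan is to compute $|\mathcal{K}(\mathcal{C})|$ directly, since $\dim(\ker(\Phi(\mathcal{C})))=\log_p|\mathcal{K}(\mathcal{C})|$, starting from the description $\mathcal{K}(\mathcal{C})=\{\vec{u}\in\mathcal{C}\mid pP(\vec{u},\vec{v})\in\mathcal{C}\ \text{for all }\vec{v}\in\mathcal{C}\}$ of Proposition \ref{prop: ker(Phi(C)) and K(C)}. Two elementary facts drive the whole argument: $pP(\vec{u},\vec{v})=(\vec{0},pP(\vec{u}_2,\vec{v}_2))$ is supported on the $\mathbb{Z}_{p^2}$-coordinates, and by \eqref{eq: values of P(u,v)} every coordinate of $P$ depends only on the residues of its arguments modulo $p$. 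Write $\rho\colon\mathcal{C}\to\mathcal{C}_Y$ for the projection onto the $\mathbb{Z}_{p^2}$-part.

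The first step splits off $\kappa_1$. By the first fact, whether $\vec{u}$ lies in $\mathcal{K}(\mathcal{C})$ depends only on $\rho(\vec{u})$; as $\rho$ is surjective with kernel $\{(\vec{u}_1,\vec{0})\in\mathcal{C}\}$ of dimension $\kappa_1$, this gives $\mathcal{K}(\mathcal{C})=\rho^{-1}(K_Y)$, where
\[K_Y:=\{\vec{w}\in\mathcal{C}_Y\mid(\vec{0},pP(\vec{w},\vec{z}))\in\mathcal{C}\ \text{for all }\vec{z}\in\mathcal{C}_Y\},\]
and hence $\dim(\ker(\Phi(\mathcal{C})))=\kappa_1+\log_p|K_Y|$. (In particular $\mathcal{C}_p\subseteq\mathcal{K}(\mathcal{C})$, since an order-$p$ codeword $\vec{u}$ has $\vec{u}_2\equiv\vec{0}\pmod p$, so $pP(\vec{u},\vec{v})=\vec{0}$.)

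The second step identifies $\log_p|K_Y|$ with $\kappa_2+\dim(\ker(\phi(\mathcal{C}'_Y)))$. By the second fact, $K_Y$ depends only on the class of $\vec{w}$ modulo $p$ and is a union of cosets of $(\mathcal{C}_Y)_p$, so $|K_Y|=|\mathcal{S}|\cdot|(\mathcal{C}_Y)_p|$ for a subset $\mathcal{S}$ of the residue code $\mathcal{C}_Y\bmod p$; the same reasoning applied to $\mathcal{K}(\mathcal{C}'_Y)=\{\vec{w}\in\mathcal{C}'_Y\mid pP(\vec{w},\vec{z})\in\mathcal{C}'_Y\ \text{for all }\vec{z}\in\mathcal{C}'_Y\}$ (the proof of Proposition \ref{prop: ker(Phi(C)) and K(C)} does not use cyclicity, so its characterization of the kernel applies to the $\mathbb{Z}_{p^2}$-linear code $\mathcal{C}'_Y$) gives $|\mathcal{K}(\mathcal{C}'_Y)|=|\mathcal{S}'|\cdot|(\mathcal{C}'_Y)_p|=p^{\dim(\ker(\phi(\mathcal{C}'_Y)))}$ with $\mathcal{S}'\subseteq\mathcal{C}'_Y\bmod p$. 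I would then extract from the canonical matrices \eqref{eq:generator matrix 2}, \eqref{eq:generator matrix of C_Y}, \eqref{eq:generator matrix of C'} three facts: (a) $\mathcal{C}$ is the direct sum of $\mathcal{C}'$ and the $\mathbb{Z}_p$-span of the $I_{\kappa_1}$- and $I_{\kappa_2}$-block rows, the only generators with nonzero entries in the first $\kappa$ coordinates, whence $\{\vec{y}\mid(\vec{0},\vec{y})\in\mathcal{C}\}=\{\vec{y}\mid(\vec{0},\vec{y})\in\mathcal{C}'\}$; (b) $\mathcal{C}_Y$ and $\mathcal{C}'_Y$ have equal reduction modulo $p$, since they differ only by the $pT_2$-rows of \eqref{eq:generator matrix of C_Y}, which vanish mod $p$, so $|(\mathcal{C}_Y)_p|/|(\mathcal{C}'_Y)_p|=[\mathcal{C}_Y:\mathcal{C}'_Y]=p^{\kappa_2}$ by the types $p^{2\delta}p^{\gamma-\kappa_1}$ of $\mathcal{C}_Y$ and $p^{2\delta}p^{\gamma-\kappa}$ of $\mathcal{C}'_Y$; and (c) $\mathcal{C}'_p$ lies entirely in the $\mathbb{Z}_{p^2}$-coordinates, being generated by the order-$p$ rows $pI_{\gamma-\kappa}$ and by $p$ times the order-$p^2$ rows of \eqref{eq:generator matrix of C'}, all of which have zero $\mathbb{Z}_p$-component since $p$ annihilates $\mathbb{Z}_p$, so that $(\vec{0},\vec{y})\in\mathcal{C}'$ for every $\vec{y}\in(\mathcal{C}'_Y)_p$. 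Granting (a), (c) and the mod-$p$ insensitivity of $P$, for $\vec{y}$ of order dividing $p$ the conditions $(\vec{0},\vec{y})\in\mathcal{C}$, $(\vec{0},\vec{y})\in\mathcal{C}'$ and $\vec{y}\in\mathcal{C}'_Y$ coincide, and since $\vec{z}\bmod p$ ranges over the common code $\mathcal{C}_Y\bmod p=\mathcal{C}'_Y\bmod p$ in both membership conditions, $\mathcal{S}=\mathcal{S}'$; with (b) this yields $|K_Y|=p^{\kappa_2}|\mathcal{K}(\mathcal{C}'_Y)|$, completing the proof.

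The reduction to $\rho$ and the coset/residue counting are routine once the two facts about $P$ are isolated; the delicate point --- and the main obstacle --- is fact (c): one must verify that deleting precisely the $I_{\kappa_1}$- and $I_{\kappa_2}$-blocks in forming $G'$ is exactly what pushes the entire order-$p$ part of $\mathcal{C}'$ into the $\mathbb{Z}_{p^2}$-coordinates, which is the hinge that lets the three slightly different membership tests be identified on the vectors $pP(\vec{w},\vec{z})$.
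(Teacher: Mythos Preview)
Your argument is correct; the structural facts (a)--(c) you isolate are exactly what is needed, and your residue--code counting via $\mathcal{S}=\mathcal{S}'$ goes through as stated. The route, however, differs from the paper's in organization. The paper splits off $\kappa=\kappa_1+\kappa_2$ in a single step by using the internal direct-sum decomposition $\mathcal{C}=\overline{\mathcal{C}}\oplus\mathcal{C}'$, where $\overline{\mathcal{C}}$ is the span of the first $\kappa$ rows of \eqref{eq:generator matrix 2}: since every element of $\overline{\mathcal{C}}$ has order $p$ one gets $\overline{\mathcal{C}}\subseteq\mathcal{K}(\mathcal{C})$ immediately, and for $\vec{u}\in\mathcal{C}'$ the paper shows $\vec{u}\in\mathcal{K}(\mathcal{C})\Leftrightarrow\vec{u}\in\mathcal{K}(\mathcal{C}')$, then observes that $\rho|_{\mathcal{C}'}\colon\mathcal{C}'\to\mathcal{C}'_Y$ is a bijection carrying $\mathcal{K}(\mathcal{C}')$ onto $\mathcal{K}(\mathcal{C}'_Y)$. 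Your approach instead peels off $\kappa_1$ first via the projection $\rho\colon\mathcal{C}\to\mathcal{C}_Y$ and then recovers $\kappa_2$ from the index $[(\mathcal{C}_Y)_p:(\mathcal{C}'_Y)_p]=p^{\kappa_2}$. The paper's packaging is a bit shorter because it never has to name or compare the two torsion subgroups, but your two-stage split has the virtue of making explicit precisely which matrix features carry each summand: $\kappa_1$ comes purely from the $\mathbb{Z}_p$-support of $pP$, while $\kappa_2$ comes from the extra $pT_2$-rows in $\mathcal{C}_Y$ versus $\mathcal{C}'_Y$. Both arguments hinge on the same delicate point you correctly flag as fact~(c), which the paper uses implicitly when it asserts that $(\vec{0},pP(u',v'))\in\mathcal{C}'\Leftrightarrow pP(u',v')\in\mathcal{C}'_Y$.
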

\begin{proof}
	Let $\vec{u}_i = (u_i ,  u'_i)$ be the first $\gamma$ rows and $\vec{v}_j = (v_j ,  v'_j)$ be the last $\delta$ rows of $G$, respectively,
	where $i=1, 2, \cdots, \gamma$, $j=1, 2, \cdots, \delta$ and $G$ is in the form of \eqref{eq:generator matrix 2}.
	Since $\gamma \geqslant \kappa_1 + \kappa_2 = \kappa$, then let $\overline{\mathcal{C}} = \langle \vec{u}_1, \cdots, \vec{u}_{\kappa_1 + \kappa_2} \rangle$
	and $\mathcal{C}' = \langle \vec{u}_{\kappa_1 + \kappa_2 + 1}, \cdots, \vec{u}_\gamma, \vec{v}_1, \cdots, \vec{v}_\delta \rangle$.
	
	By Proposition \ref{prop: ker(Phi(C)) and K(C)}, $\vec{u} \in \mathcal{K}(\mathcal{C})$ if and only if $pP(\vec{u}, \vec{v}) \in \mathcal{C}$ for all
	$\vec{v} \in \mathcal{C}$.
	Then we have the following cases:
	\begin{enumerate}
		\item[(1)] If $\vec{u} \in \overline{\mathcal{C}}$, we have $pP(\vec{u}, \vec{v}) = \vec{0}$, for all $\vec{v} \in \mathcal{C}$
		since all codewords in $\overline{\mathcal{C}}$ are of order $p$.
		Hence, $\vec{u} \in \mathcal{K}(\mathcal{C})$, $\overline{\mathcal{C}} \subseteq \mathcal{K}(\mathcal{C})$,
		and $\dim(\ker(\overline{\mathcal{C}})) = \kappa_1 + \kappa_2$.
		
		\item[(2)] Let $\vec{u} = (u, u') \in \mathcal{C}'$. By the above discussion, $pP(\vec{u}, \vec{v}) = \vec{0}$ for all $\vec{v} \in \overline{\mathcal{C}}$.
		Then $\vec{u} \in \mathcal{K}(\mathcal{C})$ if and only if $pP(\vec{u}, \vec{v}) \in \mathcal{C}'$ for all $\vec{v} \in \mathcal{C}'$;
		that is $\vec{v} \in \mathcal{K}(\mathcal{C}')$.
		Note that $pP(\vec{u}, \vec{v}) = (\vec{0}, pP(u', v')) \in \mathcal{C}'$ if and only if $pP(u', v') \in \mathcal{C}'_Y$.
		Hence, \[ \dim(\ker(\Phi(\mathcal{C}'))) = \dim(\ker(\phi(\mathcal{C}'_Y))). \]
	\end{enumerate}
	Therefore, $\dim(\ker(\Phi(\mathcal{C}))) = \kappa_1 + \kappa_2 + \dim(\ker(\phi(\mathcal{C}'_Y)))$ and the proof is completed.
\end{proof}

From Remark \ref{remark: conditions of type}, there exists a $\mathbb{Z}_p \mathbb{Z}_{p^2}$-additive code for all possible values of the kernel for a given type.
But this theorem doesn't hold for $\mathbb{Z}_p \mathbb{Z}_{p^2}$-additive cyclic code.

\begin{example} \label{exam: z3z9-kernel-1}
	By Remark \ref{remark: conditions of type}, there exists a $\mathbb{Z}_3 \mathbb{Z}_{9}$-additive code $\mathcal{C}$ of type $(\alpha, 13; 2, 4; 1)$.
	The possible dimensions of the kernel are $6,7,8,9,10$.
	Let $\mathcal{C} = \langle (a, 0), (b, fh+3f) \rangle$ be a $\mathbb{Z}_3 \mathbb{Z}_9$-additive cyclic code of type $(2, 13; 2, 4; 1)$, where $fhg = x^{13}-1$.
	By Magma, factoring $x^{13}-1$ over $\mathbb{Z}_9$, we have \[ x^{13} - 1 = (x + 8)(x^3 + 6x^2 + 2x + 8)(x^3 + 7x^2 + 3x + 8)(x^3 + 4x^2 + 7x + 8)(x^3 + 2x^2 + 5x + 8). \]
	Since $\gamma = \deg(h) = 2$, then there doesn't exist such a $\mathbb{Z}_3 \mathbb{Z}_9$-additive cyclic code.
\end{example}

\begin{theorem} \label{K(C) is cyclic}
	If $\mathcal{C}$ is a $\mathbb{Z}_p\mathbb{Z}_{p^2}$-additive cyclic code, then $\mathcal{K}(\mathcal{C})$ is also a $\mathbb{Z}_p\mathbb{Z}_{p^2}$-additive cyclic code.
\end{theorem}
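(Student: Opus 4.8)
The plan is to show that $\mathcal{K}(\mathcal{C})$ is closed under the cyclic shift $\pi$, since it is already a subcode of the cyclic code $\mathcal{C}$ and hence a subgroup of $\mathbb{Z}_p^\alpha \times \mathbb{Z}_{p^2}^\beta$. By Proposition \ref{prop: ker(Phi(C)) and K(C)}, a codeword $\vec{u} \in \mathcal{C}$ lies in $\mathcal{K}(\mathcal{C})$ precisely when $pP(\vec{u}, \vec{v}) \in \mathcal{C}$ for every $\vec{v} \in \mathcal{C}$. So, given $\vec{u} \in \mathcal{K}(\mathcal{C})$, I must verify that $pP(\pi(\vec{u}), \vec{v}) \in \mathcal{C}$ for all $\vec{v} \in \mathcal{C}$. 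First I would note that $\pi(\vec{u}) \in \mathcal{C}$ because $\mathcal{C}$ is cyclic, so only the membership of $pP(\pi(\vec{u}), \vec{v})$ needs attention.

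The key observation is that the function $P$ acts componentwise (see \eqref{eq: values of P(u,v)} and \eqref{eq: P(u,v)}), so it commutes with the cyclic shift in the sense that $\pi(P(\vec{u}, \vec{v})) = P(\pi(\vec{u}), \pi(\vec{v}))$ — applying a fixed permutation of coordinates to both arguments permutes the list of componentwise values the same way. Hence, for any $\vec{v} \in \mathcal{C}$,
\[
pP(\pi(\vec{u}), \vec{v}) = pP(\pi(\vec{u}), \pi(\pi^{-1}(\vec{v}))) = \pi\bigl(pP(\vec{u}, \pi^{-1}(\vec{v}))\bigr).
\]
Since $\mathcal{C}$ is cyclic, $\pi^{-1}(\vec{v}) \in \mathcal{C}$ as well (the inverse shift $\pi^{-1}$ is just $\pi$ applied $\alpha - 1$ times on the first block and $\beta - 1$ times on the second, so it also preserves $\mathcal{C}$). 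Because $\vec{u} \in \mathcal{K}(\mathcal{C})$ and $\pi^{-1}(\vec{v}) \in \mathcal{C}$, we get $pP(\vec{u}, \pi^{-1}(\vec{v})) \in \mathcal{C}$, and applying $\pi$ once more keeps us inside $\mathcal{C}$ by cyclicity. Therefore $pP(\pi(\vec{u}), \vec{v}) \in \mathcal{C}$ for all $\vec{v} \in \mathcal{C}$, which by Proposition \ref{prop: ker(Phi(C)) and K(C)} says $\pi(\vec{u}) \in \mathcal{K}(\mathcal{C})$.

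Combining this with the fact that $\mathcal{K}(\mathcal{C}) \subseteq \mathcal{C}$ is a subgroup (it is the preimage under $\Phi$ of the kernel of $\Phi(\mathcal{C})$, which is always a subgroup), we conclude that $\mathcal{K}(\mathcal{C})$ is a $\mathbb{Z}_p\mathbb{Z}_{p^2}$-additive cyclic code. The only subtle point — and the step I would be most careful about — is justifying the shift-equivariance $\pi(P(\vec{u}, \vec{v})) = P(\pi(\vec{u}), \pi(\pi^{-1}(\vec{v})))$ cleanly: one has to be precise that the first $\alpha$ coordinates and the last $\beta$ coordinates are shifted separately, that $P$ is genuinely applied entry by entry (taking value $0$ on any $\mathbb{Z}_p$-coordinate, consistent with $pP(\vec{u},\vec{v})$ having zero $X$-part as used in Lemma \ref{K(C)_Y < K(C_Y)}), and that $P(a, b) = P(b, a)$ so the order of arguments does not matter. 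None of this is hard, but it is where a sloppy argument could hide a gap, so I would state the componentwise/equivariance property as a short preliminary remark before invoking it.
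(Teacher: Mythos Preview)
Your proof is correct and follows essentially the same route as the paper: both use the characterization of $\mathcal{K}(\mathcal{C})$ from Proposition~\ref{prop: ker(Phi(C)) and K(C)}, establish the shift-equivariance identity $pP(\pi(\vec{u}),\vec{v}) = \pi\bigl(pP(\vec{u},\pi^{-1}(\vec{v}))\bigr)$, and then conclude by the cyclicity of $\mathcal{C}$. Your version is simply more explicit about why the componentwise definition of $P$ makes that identity hold.
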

\begin{proof}
	It's easy to check that $\mathcal{K}(\mathcal{C})$ is a $\mathbb{Z}_p\mathbb{Z}_{p^2}$-additive code by Proposition \ref{prop: ker(Phi(C)) and K(C)}.
	It's sufficient to show that if $\vec{u} = (u, u') \in \mathcal{K}(\mathcal{C})$, then $\pi(\vec{u}) \in \mathcal{K}(\mathcal{C})$, which means
	$pP(\pi(\vec{u}), \vec{v}) \in \mathcal{C}$ for all $\vec{v} \in \mathcal{C}$.
	
	Let $\vec{u} \in \mathcal{K}(\mathcal{C})$ and $\vec{v} \in \mathcal{C}$.
	It's easy to check that $pP(\pi(\vec{u}), \vec{v}) = \pi(pP(\vec{u}, \pi^{-1}(\vec{v})))$.
	Then $pP(\vec{u}, \pi^{-1}(\vec{v})) \in \mathcal{C}$ by Proposition \ref{prop: ker(Phi(C)) and K(C)}.
	Since $\mathcal{C}$ is cyclic, then $\pi(pP(\vec{u}, \pi^{-1}(\vec{v}))) \in \mathcal{C}$.
	Thus, $pP(\pi(\vec{u}), \vec{v}) \in \mathcal{C}$ for all $\vec{v} \in \mathcal{C}$ and $\pi(\vec{u}) \in \mathcal{K}(\mathcal{C})$.
\end{proof}

\begin{lemma} \label{lemma: cyclic subcodes of cyclic codes over zpzp2}
	Let $\mathcal{C} = \langle (a,0), (b, fh+pf) \rangle$ be a $\mathbb{Z}_p\mathbb{Z}_{p^2}$-additive cyclic code, where $fhg = x^\beta-1$ and $\gcd(\beta,p) = 1$.
	Let $\langle (a, 0), (b_k, fhk + pf) \rangle \subseteq \mathcal{C}$, for $k \mid g$.
	Then $b_k \equiv \overline{k}b + (1-\overline{k}) \overline{\mu} b \overline{g} \pmod{a}$ where $\mu$ satisfies  $\lambda h + \mu g = 1$.
\end{lemma}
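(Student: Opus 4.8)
The plan is to use that the generator $(b_k,\,fhk+pf)$ of the subcode must itself lie in $\mathcal{C}=\langle (a,0),(b,fh+pf)\rangle$, and to read off its first coordinate modulo $a$. First I would write
\[
(b_k,\,fhk+pf)=p_1\star(a,0)+p_2\star(b,\,fh+pf)
\]
for some $p_1,p_2\in\mathbb{Z}_{p^2}[x]$. Comparing coordinates and using $t\star(a,0)=(\overline t\,a,0)$ gives $p_2(fh+pf)\equiv fhk+pf\pmod{x^\beta-1}$ in $R_\beta$ together with $b_k\equiv \overline{p_1}\,a+\overline{p_2}\,b\pmod{x^\alpha-1}$ in $R_\alpha$. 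Since $a\mid x^\alpha-1$, the latter reduces to $b_k\equiv\overline{p_2}\,b\pmod a$, so the whole problem comes down to computing $\overline{p_2}$ precisely enough.

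Next I would extract $\overline{p_2}$ from the $Y$-equation $p_2\,f(h+p)\equiv f(hk+p)\pmod{fhg}$. Reducing modulo $p$ and cancelling the nonzero element $\overline f\,\overline h$ of the domain $\mathbb{Z}_p[x]$ (recall $\overline f\,\overline g\,\overline h=x^\beta-1$) yields $\overline g\mid\overline{p_2}-\overline k$, i.e.\ $\overline{p_2}\equiv\overline k\pmod{\overline g}$. On the other hand, cancelling the monic polynomial $f$ (a non-zero-divisor in $\mathbb{Z}_{p^2}[x]$) gives $p_2(h+p)\equiv hk+p\pmod{gh}$, and reducing this modulo $h$ gives $p(p_2-1)\equiv0\pmod h$; passing to $\mathbb{Z}_p[x]$ and using that $\overline h$ is a non-zero-divisor one deduces $\overline h\mid\overline{p_2}-1$, i.e.\ $\overline{p_2}\equiv1\pmod{\overline h}$. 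Since $\gcd(\overline g,\overline h)=1$, the Chinese Remainder Theorem then forces $\overline{p_2}\equiv\overline k+(1-\overline k)\overline\mu\,\overline g\pmod{\overline g\,\overline h}$, because $\overline k+(1-\overline k)\overline\mu\,\overline g\equiv\overline k\pmod{\overline g}$ is automatic and, using $\overline\lambda\overline h+\overline\mu\overline g=1$ (hence $\overline\mu\overline g\equiv1\pmod{\overline h}$), it is $\equiv1\pmod{\overline h}$.

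Finally I would remove the remaining ambiguity $\overline{p_2}=\overline k+(1-\overline k)\overline\mu\,\overline g+\overline g\,\overline h\,(\ast)$ by showing $\overline g\,\overline h\,b\equiv0\pmod a$. Indeed, $g\star(b,fh+pf)=(\overline g\,b,\,pfg)$ since $fgh\equiv0$ in $R_\beta$, and then $h\star(\overline g\,b,\,pfg)=(\overline g\,\overline h\,b,\,0)$, again using $fgh\equiv0$; thus $(\overline g\,\overline h\,b,\,0)$ lies in the subcode $\{(\vec u_1,\vec0)\in\mathcal{C}\}$, which coincides with $\langle(a,0)\rangle$ by a dimension count ($\kappa_1=\alpha-\deg a$ by Theorem~\ref{thm: type and generator polynomials}, matching $\dim\langle(a,0)\rangle=\alpha-\deg a$ since $a\mid x^\alpha-1$), so $a\mid\overline g\,\overline h\,b$. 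Combining everything, $b_k\equiv\overline{p_2}\,b\equiv(\overline k+(1-\overline k)\overline\mu\,\overline g)\,b\equiv\overline k\,b+(1-\overline k)\overline\mu\,b\,\overline g\pmod a$, as desired. The delicate point throughout is that $\mathbb{Z}_{p^2}[x]$ is not a domain, so $p_2$ is determined only modulo the annihilator of $fh+pf$; each cancellation must be justified (monic polynomials are non-zero-divisors, and reduction mod $p$ lands in the domain $\mathbb{Z}_p[x]$), and one must verify that the leftover ambiguity in $\overline{p_2}$ genuinely disappears after multiplying by $b$ and reducing modulo $a$ — which is exactly the content of $\overline g\,\overline h\,b\equiv0\pmod a$.
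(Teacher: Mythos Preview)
Your argument is correct. The route, however, differs from the paper's. The paper first rewrites $\mathcal{C}$ in the three-generator form
\[
\mathcal{C}=\langle (a,0),\,(b',fh),\,(\overline{\mu}b\overline{g},pf)\rangle,\qquad b'=b-\overline{\mu}b\overline{g},
\]
(this is exactly equation~\eqref{eq: 3 generators of zpzp2 cyclic codes}, closely related to Lemma~\ref{3 generators of zpzp2 cyclic codes}), and then expresses $(b_k,fhk+pf)$ as $c_1\star(a,0)+c_2\star(b',fh)+c_3\star(\overline{\mu}b\overline{g},pf)$. Because the $Y$-parts $fh$ and $pf$ are already separated, the comparison on the second coordinate immediately gives $c_2\equiv k$ and $c_3\equiv 1$ (modulo the relevant annihilators), and the first coordinate then reads off as $b_k\equiv\overline{k}b'+\overline{\mu}b\overline{g}=\overline{k}b+(1-\overline{k})\overline{\mu}b\overline{g}\pmod a$.

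You instead stay with the two generators, obtain the single multiplier $\overline{p_2}$, and reconstruct it via CRT from its residues modulo $\overline g$ and $\overline h$; the three-generator form is, in effect, the CRT decomposition carried out at the level of the generating set rather than at the level of the coefficient $p_2$. What your approach buys is explicitness: you isolate the residual ambiguity $\overline{p_2}=\overline k+(1-\overline k)\overline{\mu}\,\overline g+\overline g\,\overline h\,(\ast)$ and kill it by the clean observation $(\overline g\,\overline h\,b,0)=(gh)\star(b,fh+pf)\in\langle(a,0)\rangle$, a point the paper's proof passes over silently when it writes ``$c_2=k$, $c_3=1$''. What the paper's approach buys is reusability: the three-generator presentation~\eqref{eq: 3 generators of zpzp2 cyclic codes} is invoked again later (e.g.\ in Theorem~\ref{thm: properties of generators of K(C)} and Theorem~\ref{thm: generators of R(C)}), so proving it once inside this lemma pays dividends. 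One small remark: your phrase ``passing to $\mathbb{Z}_p[x]$'' in the step $p(p_2-1)\equiv0\pmod h\Rightarrow\overline h\mid\overline{p_2}-1$ hides a short two-line argument (write $p(p_2-1)=hs$, reduce mod $p$ to get $\overline s=0$, hence $s=ps'$ and $p_2-1\equiv hs'\pmod p$); you have the right conclusion, but the sentence as written could mislead a reader into thinking one simply reduces $p(p_2-1)$ modulo $p$.
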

\begin{proof}
	First, we will prove that $\mathcal{C} = \langle (a, 0), (b', fh), (\overline{\mu}b\overline{g}, pf) \rangle$,
	where $b' = b-\overline{\mu}b\overline{g}$ and $\lambda h + \mu g = 1$. Let $\mathcal{D} = \langle (a, 0), (b', fh), (\overline{\mu}b\overline{g}, pf) \rangle$, and we shall prove that $\mathcal{C} = \mathcal{D}$.
	It's easy to check that $(0, pfh) = p(b, fh + pf)\in \mathcal{C}$ and $(b\overline{g}, pfg) = g \star (b, fh + pf) \in \mathcal{C}$, then
	\[ \lambda \star (0, pfh) + \mu \star (b\overline{g}, pfg) = (\overline{\mu} b \overline{g}, pf) \in \mathcal{C}. \]
	By Lemma \ref{3 generators of zpzp2 cyclic codes}, we have $\mathcal{D} \subseteq \mathcal{C}$.
	Since $(b, fh + pf) = (b', fh) + (\overline{\mu}b\overline{g}, pf) \in \mathcal{D}$, then $\mathcal{C} \subseteq \mathcal{D}$, which means
	\begin{equation} \label{eq: 3 generators of zpzp2 cyclic codes}
		\mathcal{C} = \langle (a, 0), (b', fh), (\overline{\mu}b\overline{g}, pf) \rangle.
	\end{equation}
	Then, since $(b_k, fhk + pf)\in \mathcal{C}$, we have
	\[ (b_k, fhk + pf) = c_1 \star (a, 0) + c_2 \star (b', fh) + c_3 \star (\overline{\mu}b\overline{g}, pf). \]
	Thus, we get $c_2 = k$, $c_3 = 1$ and $b_k \equiv \overline{k}b + (1-\overline{k}) \overline{\mu} b \overline{g} \pmod{a}$.
\end{proof}

\begin{proposition} \label{prop: K(C)=fhk+pfg/k over Zp2}
	Let $\mathcal{C} = \langle fh + pf \rangle$ be a linear cyclic code over $\mathbb{Z}_{p^2}$ of length $n$ with $\gcd(n, p) = 1$.
	Then $\mathcal{K}(\mathcal{C})$ is also cyclic and $\langle pf \rangle \subseteq \mathcal{K}(\mathcal{C})$.
	Moreover, $\mathcal{K}(\mathcal{C}) = \langle fhk, pfg_k\rangle$ where $g_k = g/k$ and $k$ divides $g$.
\end{proposition}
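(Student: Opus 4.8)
The plan is to establish the three assertions in order, using the structure theorem for cyclic codes over $\mathbb{Z}_{p^2}$ together with Lemma~\ref{lemma: f'|f on Zp2} and the explicit values of $P$ in \eqref{eq: values of P(u,v)}.

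\emph{Step 1 (cyclicity and $\langle pf\rangle\subseteq\mathcal{K}(\mathcal{C})$).} First I would note that the proof of Theorem~\ref{K(C) is cyclic} carries over word for word to a cyclic code over $\mathbb{Z}_{p^2}$: the identity $pP(\pi(\vec{u}),\vec{v})=\pi\bigl(pP(\vec{u},\pi^{-1}(\vec{v}))\bigr)$ still holds, $\mathcal{K}(\mathcal{C})$ is closed under addition, and an additive subgroup of $\mathbb{Z}_{p^2}[x]/(x^n-1)$ stable under multiplication by $x$ is an ideal; hence $\mathcal{K}(\mathcal{C})$ is a cyclic code. For the inclusion, recall from \eqref{eq: <fh+pf> = <fh+pfg>} that $\mathcal{C}=\langle fh, pfg\rangle$ and that $pf=p\lambda\cdot fh+\mu\cdot pfg\in\mathcal{C}$ (where $\lambda h+\mu g=1$), so $\langle pf\rangle\subseteq\mathcal{C}$. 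Every $\vec{u}\in\langle pf\rangle$ is a multiple of $p$, so in each coordinate the low $p$-ary digit of $u_i$ vanishes; by \eqref{eq: values of P(u,v)} this gives $P(\vec{u},\vec{v})=\vec{0}$ for every $\vec{v}$, hence $pP(\vec{u},\vec{v})=\vec{0}\in\mathcal{C}$ and $\vec{u}\in\mathcal{K}(\mathcal{C})$ by Proposition~\ref{prop: ker(Phi(C)) and K(C)}.

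\emph{Step 2 (the generator polynomials).} Since $\mathcal{K}(\mathcal{C})$ is a cyclic code over $\mathbb{Z}_{p^2}$ of length $n$ with $\gcd(n,p)=1$, write it as $\mathcal{K}(\mathcal{C})=\langle f'h', pf'g'\rangle$ with $f'h'g'=x^n-1$. From $\mathcal{K}(\mathcal{C})\subseteq\mathcal{C}$ and Lemma~\ref{lemma: f'|f on Zp2} we get $f\mid f'$. For the opposite divisibility I would identify the subcode of codewords of order dividing $p$ in $\langle f'h',pf'g'\rangle$: a direct computation, using $f'h'g'=x^n-1$ and $\gcd(h',g')=1$, shows that $c_1f'h'+c_2pf'g'$ is killed by $p$ exactly when it lies in $\langle pf'\rangle$, so this subcode equals $\langle pf'\rangle$. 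As $pf\in\mathcal{K}(\mathcal{C})$ has order $p$, it belongs to $\langle pf'\rangle$, forcing $f'\mid f$; hence $f'=f$. Then $\mathcal{K}(\mathcal{C})=\langle fh',pfg'\rangle$ with $h'g'=hg$, and $\mathcal{K}(\mathcal{C})\subseteq\mathcal{C}=\langle fh,pfg\rangle$ forces $fh\mid fh'$, i.e.\ $h\mid h'$. Writing $h'=hk$ we obtain $g'=hg/(hk)=g/k$ with $k\mid g$, that is $\mathcal{K}(\mathcal{C})=\langle fhk, pfg_k\rangle$.

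\emph{Main obstacle.} The routine part is the divisor bookkeeping inside $\mathbb{Z}_{p^2}[x]/(x^n-1)$. The point that needs care is making sure $\mathcal{K}(\mathcal{C})$ is a genuine ideal (not merely a subgroup), so that the $\mathbb{Z}_{p^2}$-structure theorem may be invoked in Step~2; this reduces to closure of $\mathcal{K}(\mathcal{C})$ under addition, which follows from the cocycle identity $pP(\vec{u}+\vec{w},\vec{v})=pP(\vec{u},\vec{w}+\vec{v})+pP(\vec{w},\vec{v})-pP(\vec{u},\vec{w})$ obtained by expanding $\Phi(\vec{u}+\vec{w}+\vec{v})$ in two ways and using that $\Phi$ is additive on vectors of order dividing $p$. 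The other mildly technical ingredient is the exact description of the order-$p$ subcode of a $\mathbb{Z}_{p^2}$-cyclic code used to pin down $f'=f$.
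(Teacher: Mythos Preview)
Your proposal is correct and follows essentially the same route as the paper: cyclicity via the shift identity for $pP$, the inclusion $\langle pf\rangle\subseteq\mathcal{K}(\mathcal{C})$ from the vanishing of $P$ on order-$p$ elements, then writing $\mathcal{K}(\mathcal{C})=\langle f'h',pf'g'\rangle$ and pinning down $f'=f$ by the two opposite divisibilities, and finally $h\mid h'$ from the containment $\mathcal{K}(\mathcal{C})\subseteq\mathcal{C}$. The only cosmetic difference is that the paper outsources cyclicity and the last step to \cite{kernel_of_4-ary_code}, and obtains $f'\mid f$ by viewing $\langle pf\rangle$ as the cyclic code $\langle f\cdot hg,\,pf\rangle$ and applying Lemma~\ref{lemma: f'|f on Zp2} to the inclusion $\langle pf\rangle\subseteq\mathcal{K}(\mathcal{C})$, whereas you reach the same conclusion by identifying the order-$p$ subcode of $\mathcal{K}(\mathcal{C})$ as $\langle pf'\rangle$; these are equivalent computations.
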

\begin{proof}
	By the proof of \cite[Theorem 5]{kernel_of_4-ary_code}, we can get $\mathcal{K}(\mathcal{C})$ is cyclic.
	By Lemma \ref{generators of C_p}, the subcode whose codewords are of order $p$ is just $\langle pfg, pfh \rangle$.
	Since $g$ and $h$ are coprime, then $\langle pfg, pfh \rangle = \langle pf \rangle$.
	According to Lemma \ref{phi(u+v) = phi(u)+phi(v)+...} and Proposition \ref{prop: ker(Phi(C)) and K(C)},
	$\langle pf \rangle \subseteq \mathcal{K}(\mathcal{C})$. Let $\mathcal{K}(\mathcal{C}) = \langle f'h' + pf' \rangle$ with $f'h'g' = x^n-1$.
	Since \[ \langle pf \rangle = \langle x^n-1, pf\rangle = \langle f\cdot \frac{x^n-1}{f}, pf\rangle \subseteq \langle f'h', p f'g' \rangle, \]
	then by Proposition \ref{prop: f|f_k}, $f'\mid f$.
	Moreover, $f\mid f'$ since $\mathcal{K}(\mathcal{C}) \subseteq \mathcal{C}$.
	Thus, $f' = f$ and $\mathcal{K}(\mathcal{C}) = \langle fh' + pf' \rangle$.
	The proof of the rest claim is similar to that of \cite[Theorem 9]{kernel_of_4-ary_code}.
\end{proof}

\begin{theorem} \label{thm: properties of generators of K(C)}
	Let $\mathcal{C} = \langle (a,0), (b, fh+pf) \rangle$ be a $\mathbb{Z}_p\mathbb{Z}_{p^2}$-additive cyclic code, where $fhg = x^\beta-1$ and $\gcd(\beta,p) = 1$.
	Then, $\mathcal{K}(\mathcal{C}) = \langle (a, 0), (b_k, fhk + pf) \rangle$,
	where $k \mid g$, $b_k \equiv \overline{k}b + (1-\overline{k}) \overline{\mu} b \overline{g} \pmod{a}$ and $\mu$ satisfies $\lambda h + \mu g = 1$.
\end{theorem}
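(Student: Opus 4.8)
\section*{Proof proposal}

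The plan is to first pin down the canonical shape of $\mathcal{K}(\mathcal{C})$ and only then read off the coefficient $b_k$ from Lemma~\ref{lemma: cyclic subcodes of cyclic codes over zpzp2}. By Theorem~\ref{K(C) is cyclic} we already know $\mathcal{K}(\mathcal{C})$ is a $\mathbb{Z}_p\mathbb{Z}_{p^2}$-additive cyclic code, so it can be written as $\mathcal{K}(\mathcal{C}) = \langle (a_0, 0), (b_0, f_0 h_0 + p f_0)\rangle$ with $f_0 h_0 g_0 = x^\beta - 1$ and $a_0$ a monic divisor of $x^\alpha - 1$. The whole task reduces to showing $a_0 = a$, $f_0 = f$, and $h_0 = hk$ for some $k \mid g$; once this is done, the claimed congruence for $b_0 = b_k$ is exactly the conclusion of Lemma~\ref{lemma: cyclic subcodes of cyclic codes over zpzp2} applied to $\mathcal{K}(\mathcal{C}) \subseteq \mathcal{C}$.

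The structural engine is the chain $\mathcal{C}_p \subseteq \mathcal{K}(\mathcal{C}) \subseteq \mathcal{C}$. The right inclusion is immediate from Proposition~\ref{prop: ker(Phi(C)) and K(C)}. For the left one, any $\vec{u} \in \mathcal{C}_p$ has order dividing $p$, so Lemma~\ref{phi(u+v) = phi(u)+phi(v)+...} gives $\Phi(pP(\vec{u}, \vec{v})) = \vec{0}$ for every $\vec{v} \in \mathcal{C}$; injectivity of $\Phi$ then forces $pP(\vec{u}, \vec{v}) = \vec{0} \in \mathcal{C}$, whence $\vec{u} \in \mathcal{K}(\mathcal{C})$ by Proposition~\ref{prop: ker(Phi(C)) and K(C)}. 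Invoking Lemma~\ref{generators of C_p}, the explicit generators $(a, 0)$, $(b\overline{g}, pfg)$, $(0, pfh)$ of $\mathcal{C}_p$ all lie in $\mathcal{K}(\mathcal{C})$.

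From here I would argue parameter by parameter. Since $(a, 0)$ is an $X$-only codeword of $\mathcal{K}(\mathcal{C})$, while by construction of the generators the $X$-only subcode of $\mathcal{C}$ is $\langle a\rangle$ and contains that of $\mathcal{K}(\mathcal{C})$, the generator $a_0$ of the latter satisfies $a_0 \mid a$ and $a \mid a_0$, so $a_0 = a$. For $f_0$: Proposition~\ref{prop: f|f_k} applied to $\mathcal{K}(\mathcal{C}) \subseteq \mathcal{C}$ gives $f \mid f_0$; on the other hand, combining $(0, pfh), (b\overline{g}, pfg) \in \mathcal{K}(\mathcal{C})$ via $\lambda \star (0, pfh) + \mu \star (b\overline{g}, pfg) = (\overline{\mu} b\overline{g},\, pf)$ shows $pf \in \mathcal{K}(\mathcal{C})_Y = \langle f_0 h_0 + p f_0\rangle$, and since the order-$p$ subcode of $\langle f_0 h_0 + p f_0\rangle$ equals $\langle p f_0\rangle$ (same computation as in Proposition~\ref{prop: K(C)=fhk+pfg/k over Zp2}), we get $\overline{f_0} \mid \overline{f}$, i.e.\ $f_0 \mid f$, hence $f_0 = f$. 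Finally, reducing $\mathcal{K}(\mathcal{C})_Y = \langle f h_0 + p f\rangle \subseteq \langle fh + pf\rangle = \mathcal{C}_Y$ modulo $p$ gives $\overline{fh} \mid \overline{f h_0}$, hence $h \mid h_0$; writing $h_0 = hk$ and comparing with $f h_0 g_0 = x^\beta - 1 = fhg$ forces $k g_0 = g$, so $k \mid g$.

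At this stage $\mathcal{K}(\mathcal{C}) = \langle (a, 0), (b_0, fhk + pf)\rangle \subseteq \mathcal{C}$ with $k \mid g$, which is precisely the hypothesis of Lemma~\ref{lemma: cyclic subcodes of cyclic codes over zpzp2}; that lemma then yields $b_0 \equiv \overline{k} b + (1-\overline{k})\overline{\mu} b\overline{g} \pmod{a}$, finishing the argument. The main obstacle is the third paragraph: establishing that $\mathcal{K}(\mathcal{C})$ genuinely has the canonical shape $\langle (a, 0), (b_k, fhk + pf)\rangle$ with $k \mid g$, since that is exactly what is needed to invoke Lemma~\ref{lemma: cyclic subcodes of cyclic codes over zpzp2}. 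The delicate point there is passing between divisibility in $\mathbb{Z}_{p^2}[x]/(x^\beta-1)$ and divisibility of the mod-$p$ reductions; this is legitimate because $\gcd(\beta, p) = 1$ makes $x^\beta - 1$ squarefree mod $p$, so its monic divisors over $\mathbb{Z}_{p^2}$ are in factor-preserving bijection with those over $\mathbb{Z}_p$.
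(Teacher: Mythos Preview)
Your proposal is correct and follows essentially the same route as the paper: establish the chain $\mathcal{C}_p \subseteq \mathcal{K}(\mathcal{C}) \subseteq \mathcal{C}$, use it to pin down $a_0 = a$ and $f_0 = f$, deduce $h_0 = hk$ with $k \mid g$ from the inclusion $\mathcal{K}(\mathcal{C})_Y \subseteq \mathcal{C}_Y$, and then read off $b_k$. The only cosmetic differences are that you spell out the $\mathcal{C}_p \subseteq \mathcal{K}(\mathcal{C})$ inclusion and the $h \mid h_0$ step via mod-$p$ reduction explicitly, whereas the paper invokes Proposition~\ref{prop: K(C)=fhk+pfg/k over Zp2} for the latter and re-derives $b_k$ inline rather than citing Lemma~\ref{lemma: cyclic subcodes of cyclic codes over zpzp2}.
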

\begin{proof}
	By Theorem \ref{K(C) is cyclic}, let $\mathcal{K}(\mathcal{C}) = \langle (a_k, 0), (b_k ,  f_k h_k + pf_k) \rangle$, where $f_k g_k h_k = x^\beta-1$.
	It's easy to check that $\mathcal{C}_p \subseteq \mathcal{K}(\mathcal{C}) \subseteq \mathcal{C}$.
	By Lemma \ref{generators of C_p}, $(a, 0) \in \mathcal{K}(\mathcal{C})$, then we have $a = a_k$.
	Besides, $(\mathcal{C}_p)_Y = \langle pfg, pfh \rangle = \langle pf \rangle \subseteq \langle f_k h_k + pf_k \rangle = (\mathcal{K}(\mathcal{C}))_Y$
	since $g$ and $h$ are coprime.
	According to Proposition \ref{prop: f|f_k}, we have $f = f_k$.
	
	By Proposition \ref{prop: K(C)=fhk+pfg/k over Zp2}, we get that $h_k = hk$ with $k$ a divisor of $g$
	since $(\mathcal{K}(\mathcal{C}))_Y = \langle f h_k + pf \rangle \subseteq \mathcal{C}_Y$.
	Moreover, just as \eqref{eq: 3 generators of zpzp2 cyclic codes} shows, $(b', fh), (\overline{\mu} b \overline{g}, pf)\in \mathcal{C}$,
	where $b' = b - \overline{\mu} b \overline{g}$ and $\lambda h + \mu g = 1$.
	Thus, we get $b_k = \overline{k}b' + \overline{\mu} b \overline{g} \equiv \overline{k}b + (1-\overline{k}) \overline{\mu} b \overline{g} \pmod{a}$.
\end{proof}

\begin{remark} \label{remark: generators of K(C)}
	Let $\mathcal{C}'$ be a maximal cyclic subcode of $\mathbb{Z}_p\mathbb{Z}_{p^2}$-additive cyclic code $\mathcal{C}$
	whose Gray image $\Phi(\mathcal{C}')$ is a linear subcode of $\Phi(\mathcal{C})$.
	Then we also have $\mathcal{C}_p \subseteq \mathcal{C}' \subseteq \mathcal{C}$.
	Thus, by an argument analogous to that of Theorem \ref{thm: properties of generators of K(C)},
	we can get the statement in Theorem \ref{thm: properties of generators of K(C)} also holds for $\mathcal{C}'$.
\end{remark}

\begin{proposition} \label{prop: intersection of C_1 and C_2}
	Let $\mathcal{C} = \langle (a, 0), (b, fh + pf)\rangle$ be a $\mathbb{Z}_p\mathbb{Z}_{p^2}$-additive cyclic code.
	Assume that $\mathcal{C}_1 = \langle (a, 0), (b_1, fhk_1 + pf)\rangle$ and $\mathcal{C}_2 = \langle (a, 0), (b_2, fhk_2 + pf)\rangle$ are $\mathbb{Z}_p\mathbb{Z}_{p^2}$-additive
	maximal subcodes of $\mathcal{C}$ whose images under the Gray map are linear subcodes of $\Phi(\mathcal{C})$, then
	\[ \mathcal{C}_1 \cap \mathcal{C}_2 = \langle (a, 0), (b', fhk' + pf)\rangle, \]
	where $k' = \lcm(k_1, k_2)$ and $b' \equiv \overline{k'} b + (1-\overline{k'})\overline{\mu} b \overline{g} \pmod{a}$,
	with $\lambda h + \mu g = 1$.
\end{proposition}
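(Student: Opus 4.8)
The plan is to mimic the proof of Theorem~\ref{thm: properties of generators of K(C)}, transferring the problem to the $Y$-component --- where Proposition~\ref{prop: intersection of C_1 and C_2 over Zp2} computes the intersection --- and then recovering the $X$-component from the explicit description of the $b$-part of a subcode. First I would note that $\mathcal{C}_3 := \mathcal{C}_1 \cap \mathcal{C}_2$ is again a $\mathbb{Z}_p\mathbb{Z}_{p^2}$-additive cyclic code with $\Phi(\mathcal{C}_3)$ linear: an intersection of additive (resp.\ cyclic) codes is additive (resp.\ cyclic), and $\Phi(\mathcal{C}_3) = \Phi(\mathcal{C}_1) \cap \Phi(\mathcal{C}_2)$ is linear because $\Phi$ is a bijection. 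By Remark~\ref{remark: generators of K(C)} each $\mathcal{C}_i$ contains $\mathcal{C}_p$, so $\mathcal{C}_p \subseteq \mathcal{C}_3 \subseteq \mathcal{C}$. Since the proof of Theorem~\ref{thm: properties of generators of K(C)} uses only that the code in question is cyclic and squeezed between $\mathcal{C}_p$ and $\mathcal{C}$ (precisely the setting of Remark~\ref{remark: generators of K(C)}), the same argument --- via Lemma~\ref{lemma: cyclic subcodes of cyclic codes over zpzp2} and the three-generator form \eqref{eq: 3 generators of zpzp2 cyclic codes} --- yields $\mathcal{C}_3 = \langle (a,0), (b_3, fhk_3 + pf) \rangle$ with the same $a$ and $f$ as for $\mathcal{C}$, some divisor $k_3$ of $g$, and $b_3 \equiv \overline{k_3}b + (1 - \overline{k_3})\overline{\mu}b\overline{g} \pmod{a}$; in particular $\mathcal{C}_3$ is completely determined by the single divisor $k_3$.

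It then remains to show $k_3 = k' := \lcm(k_1, k_2)$. Passing to $Y$-components, $(\mathcal{C}_3)_Y = \langle fhk_3 + pf \rangle$ and $(\mathcal{C}_i)_Y = \langle fhk_i + pf \rangle$, so Proposition~\ref{prop: intersection of C_1 and C_2 over Zp2} (applied with $h_i = hk_i$, $g_i = g/k_i$), together with $\lcm(hk_1, hk_2) = h\lcm(k_1,k_2)$, $\gcd(g/k_1, g/k_2) = g/\lcm(k_1,k_2)$, and the identity $\langle fhk' + pf\rangle = \langle fhk', pf(g/k')\rangle$ from \eqref{eq: <fh+pf> = <fh+pfg>} (valid since $x^\beta - 1$ is squarefree, whence $hk'$ and $g/k'$ are coprime), gives $(\mathcal{C}_1)_Y \cap (\mathcal{C}_2)_Y = \langle fhk' + pf \rangle$. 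From $(\mathcal{C}_3)_Y \subseteq (\mathcal{C}_1)_Y \cap (\mathcal{C}_2)_Y$, reducing modulo $p$ and using that divisors of $x^\beta - 1$ lift uniquely to $\mathbb{Z}_{p^2}[x]$, I obtain $k' \mid k_3$. For the opposite divisibility I would place the codeword $(b', fhk' + pf)$, with $b' \equiv \overline{k'}b + (1-\overline{k'})\overline{\mu}b\overline{g} \pmod{a}$, inside $\mathcal{C}_1 \cap \mathcal{C}_2$: writing $k' = k_1\ell$ and applying Lemma~\ref{lemma: cyclic subcodes of cyclic codes over zpzp2} to $\mathcal{C}_1$ (with $g$, $k$ replaced by $g_1 = g/k_1$, $\ell$) shows $(b^{(1)}, fhk' + pf) \in \mathcal{C}_1$ where $b^{(1)} \equiv \overline{\ell}b_1 + (1-\overline{\ell})\overline{\mu_1}b_1\overline{g_1} \pmod{a}$ and $\lambda_1 hk_1 + \mu_1 g_1 = 1$, and symmetrically $(b^{(2)}, fhk' + pf) \in \mathcal{C}_2$ for the analogous $b^{(2)}$. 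If $b^{(1)} \equiv b' \equiv b^{(2)} \pmod{a}$, then $(b', fhk' + pf) \in \mathcal{C}_1 \cap \mathcal{C}_2 = \mathcal{C}_3$, and since $(a,0) \in \mathcal{C}_3$ this forces $\langle fhk' + pf\rangle \subseteq (\mathcal{C}_3)_Y = \langle fhk_3 + pf\rangle$, i.e.\ $k_3 \mid k'$. Hence $k_3 = k'$; substituting into the formula for $b_3$ gives $b_3 \equiv b' \pmod{a}$, and therefore $\mathcal{C}_1 \cap \mathcal{C}_2 = \langle (a,0), (b', fhk' + pf)\rangle$.

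The main obstacle is the congruence $b^{(1)} \equiv \overline{k'}b + (1-\overline{k'})\overline{\mu}b\overline{g} \pmod{a}$ (and its mirror for $b^{(2)}$): this is a polynomial identity modulo $a$ to be checked by substituting $b_1 \equiv \overline{k_1}b + (1-\overline{k_1})\overline{\mu}b\overline{g}$ and expanding, using the B\'ezout relations $\lambda h + \mu g = 1$ and $\lambda_1 hk_1 + \mu_1 g_1 = 1$, the factorization $g = k_1 g_1$ (so $\overline{g} = \overline{k_1}\,\overline{g_1}$ and $\overline{k'} = \overline{k_1}\,\overline{\ell}$), and the structural constraints on $(a,b)$ that make $\mathcal{C}$ a $\mathbb{Z}_p\mathbb{Z}_{p^2}$-additive cyclic code. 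Once this identity is in hand it also yields the consistency $b^{(1)} \equiv b^{(2)}$ needed above, and everything else reduces to bookkeeping with the divisibilities among $k_1$, $k_2$, $k'$.
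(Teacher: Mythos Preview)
Your approach is essentially the paper's, only more carefully written. The paper's proof is three lines: it computes $(\mathcal{C}_1)_Y\cap(\mathcal{C}_2)_Y=\langle fhk'+pf\rangle$ via Proposition~\ref{prop: intersection of C_1 and C_2 over Zp2}, notes that $\langle (a,0)\rangle\subseteq\mathcal{C}_1\cap\mathcal{C}_2$, asserts from this that $\mathcal{C}_1\cap\mathcal{C}_2=\langle(a,0),(b',fhk'+pf)\rangle$, and then reads off $b'$ from Lemma~\ref{lemma: cyclic subcodes of cyclic codes over zpzp2}. You follow the same skeleton but supply the missing justification for the step the paper leaves implicit, namely that $(\mathcal{C}_1\cap\mathcal{C}_2)_Y$ is all of $(\mathcal{C}_1)_Y\cap(\mathcal{C}_2)_Y$.

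The ``obstacle'' you flag is not a real obstacle, however, and you are making it harder than it is. You do not need to verify the congruence $b^{(1)}\equiv \overline{k'}b+(1-\overline{k'})\overline{\mu}b\overline{g}\pmod a$ by substituting and expanding via the two B\'ezout relations. Once you have produced an element $(b^{(1)},fhk'+pf)\in\mathcal{C}_1$, you also have $(b^{(1)},fhk'+pf)\in\mathcal{C}$ since $\mathcal{C}_1\subseteq\mathcal{C}$; now apply Lemma~\ref{lemma: cyclic subcodes of cyclic codes over zpzp2} \emph{to the ambient code $\mathcal{C}$} with divisor $k'$ of $g$, and the congruence $b^{(1)}\equiv \overline{k'}b+(1-\overline{k'})\overline{\mu}b\overline{g}\pmod a$ drops out immediately. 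The same reasoning gives $b^{(2)}\equiv b'\pmod a$, hence $b^{(1)}\equiv b^{(2)}\pmod a$, and your argument closes without any polynomial bookkeeping. In short: Lemma~\ref{lemma: cyclic subcodes of cyclic codes over zpzp2} already encodes the uniqueness of the $X$-component (modulo $a$) once the $Y$-component $fhk+pf$ is fixed and the code sits inside $\mathcal{C}$; this is exactly what forces $b^{(1)}$ and $b^{(2)}$ to agree.
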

\begin{proof}
	By Proposition \ref{prop: intersection of C_1 and C_2 over Zp2},
	we have $(\mathcal{C}_1)_Y \cap (\mathcal{C}_2)_Y = \langle f \cdot \lcm(hk_1, hk_2), pf\rangle = \langle fhk' + pf\rangle$ with $k' = \lcm(k_1, k_2)$.
	Since $\langle (a, 0) \rangle \subseteq \mathcal{C}_1 \cap \mathcal{C}_2$, then $\mathcal{C}_1 \cap \mathcal{C}_2 = \langle(a, 0), (b', fhk' + pf)\rangle$.
	By Lemma \ref{lemma: cyclic subcodes of cyclic codes over zpzp2}, we can get $b' \equiv \overline{k'} b + (1-\overline{k'})\overline{\mu} b \overline{g} \pmod{a}$,
	with $\lambda h + \mu g = 1$.
\end{proof}

\begin{lemma} \label{lemma: K(C) is the intersection of maximal subcodes}
	Let $\mathcal{C}$ be a $\mathbb{Z}_p\mathbb{Z}_{p^2}$-additive cyclic code.
	If $\mathcal{D}$ is a maximal cyclic subcode with linear $p$-ary image, then $\mathcal{K}(\mathcal{C}) \subseteq \mathcal{D}$.
	Let $\mathcal{C}_i$ be all the maximal subcodes of $\mathcal{C}$ such that $\Phi(\mathcal{C}_i)$ is a linear subcode of $\Phi(\mathcal{C})$ for $1=1, \cdots, s$, then
	\[ \mathcal{K}(\mathcal{C}) = \bigcap_{i=1}^s \mathcal{C}_i = \bigcap_{i=1}^s \mathcal{D}_i, \]
	where $\mathcal{D}_i$ is a maximal cyclic subcode of $\mathcal{C}$ with linear $p$-ary image and $\mathcal{D}_i \subseteq \mathcal{C}_i$.
\end{lemma}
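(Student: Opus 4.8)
The plan is to derive the first assertion and the inclusion $\mathcal{K}(\mathcal{C})\subseteq\bigcap_{i}\mathcal{C}_i$ from one fact, call it the \emph{key fact}: for every additive subcode $\mathcal{A}\subseteq\mathcal{C}$ with $\Phi(\mathcal{A})$ linear, the sum $\mathcal{B}:=\langle\mathcal{K}(\mathcal{C}),\mathcal{A}\rangle=\mathcal{K}(\mathcal{C})+\mathcal{A}$ again has linear $p$-ary image. First note that $\mathcal{K}(\mathcal{C})$ is itself a cyclic subcode of $\mathcal{C}$ with linear image: it is cyclic by Theorem \ref{K(C) is cyclic}, and $\Phi(\mathcal{K}(\mathcal{C}))=\ker(\Phi(\mathcal{C}))$ is linear because the kernel of any code is linear. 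Hence, if $\mathcal{D}$ is a maximal cyclic subcode of $\mathcal{C}$ with linear image, then $\langle\mathcal{K}(\mathcal{C}),\mathcal{D}\rangle$ is cyclic (a sum of cyclic codes is cyclic) and, by the key fact, has linear image, so by maximality it equals $\mathcal{D}$, i.e.\ $\mathcal{K}(\mathcal{C})\subseteq\mathcal{D}$; this is the first assertion, and it applies in particular to each $\mathcal{D}_i$. Running the same argument with $\mathcal{A}=\mathcal{C}_i$ (no cyclicity is needed now, only maximality among subcodes with linear image) gives $\mathcal{K}(\mathcal{C})\subseteq\mathcal{C}_i$ for every $i$; since moreover $\bigcap_j\mathcal{D}_j\subseteq\mathcal{D}_i\subseteq\mathcal{C}_i$, we obtain $\mathcal{K}(\mathcal{C})\subseteq\bigcap_i\mathcal{D}_i\subseteq\bigcap_i\mathcal{C}_i$. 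In particular $\mathcal{C}_p\subseteq\mathcal{K}(\mathcal{C})\subseteq\mathcal{C}_i$, so every $\mathcal{C}_i$ contains $\mathcal{C}_p$.

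To prove the key fact, by Lemma \ref{lemma: Phi(C) is linear iff pP(u,v) in C} it suffices to check $pP(\vec{u},\vec{v})\in\mathcal{B}$ for $\vec{u}=\vec{k}_1+\vec{a}_1$, $\vec{v}=\vec{k}_2+\vec{a}_2$ with $\vec{k}_1,\vec{k}_2\in\mathcal{K}(\mathcal{C})$ and $\vec{a}_1,\vec{a}_2\in\mathcal{A}$. Expanding $\Phi(\vec{u}+\vec{v})$ in two ways using \eqref{eq: Phi(u+v)=Phi(u)+Phi(v)+Phi(pP(u,v))} --- once as $(\vec{k}_1+\vec{a}_1)+(\vec{k}_2+\vec{a}_2)$ and once as $(\vec{k}_1+\vec{k}_2)+(\vec{a}_1+\vec{a}_2)$ --- cancelling the common terms $\Phi(\vec{k}_1),\Phi(\vec{k}_2),\Phi(\vec{a}_1),\Phi(\vec{a}_2)$, and using that $\Phi$ is additive on codewords of order dividing $p$ (Lemma \ref{phi(u+v) = phi(u)+phi(v)+...}) and is injective, one obtains the cocycle identity
\[
pP(\vec{u},\vec{v})=pP(\vec{k}_1,\vec{k}_2)+pP(\vec{a}_1,\vec{a}_2)+pP(\vec{k}_1+\vec{k}_2,\vec{a}_1+\vec{a}_2)-pP(\vec{k}_1,\vec{a}_1)-pP(\vec{k}_2,\vec{a}_2).
\]
Here $pP(\vec{k}_1,\vec{k}_2)\in\mathcal{K}(\mathcal{C})$ and $pP(\vec{a}_1,\vec{a}_2)\in\mathcal{A}$ by Lemma \ref{lemma: Phi(C) is linear iff pP(u,v) in C} applied to $\mathcal{K}(\mathcal{C})$ and to $\mathcal{A}$; and $pP(\vec{k}_1,\vec{a}_1),pP(\vec{k}_2,\vec{a}_2),pP(\vec{k}_1+\vec{k}_2,\vec{a}_1+\vec{a}_2)$ all lie in $\mathcal{C}$ by Proposition \ref{prop: ker(Phi(C)) and K(C)} (the first argument lies in $\mathcal{K}(\mathcal{C})$) and have order dividing $p$, hence lie in $\mathcal{C}_p\subseteq\mathcal{K}(\mathcal{C})$, the inclusion $\mathcal{C}_p\subseteq\mathcal{K}(\mathcal{C})$ being the one recorded in the proof of Theorem \ref{thm: properties of generators of K(C)}. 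Thus all five summands lie in $\mathcal{B}$, so $pP(\vec{u},\vec{v})\in\mathcal{B}$.

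It remains to prove $\bigcap_i\mathcal{C}_i\subseteq\mathcal{K}(\mathcal{C})$; together with the previous paragraph this forces $\mathcal{K}(\mathcal{C})=\bigcap_i\mathcal{D}_i=\bigcap_i\mathcal{C}_i$, and I expect it to be the main obstacle. The plan is to pin down $\bigcap_i\mathcal{D}_i$ through its generator polynomials and then verify the kernel condition of Proposition \ref{prop: ker(Phi(C)) and K(C)} for it. By Remark \ref{remark: generators of K(C)} each $\mathcal{D}_i$ equals $\langle(a,0),(b_{k_i},fhk_i+pf)\rangle$ for a divisor $k_i$ of $g$; iterating Proposition \ref{prop: intersection of C_1 and C_2} gives $\bigcap_i\mathcal{D}_i=\langle(a,0),(b_{k^{\ast}},fhk^{\ast}+pf)\rangle$ with $k^{\ast}=\lcm(k_1,\dots,k_s)$; and by Theorem \ref{thm: properties of generators of K(C)} we have $\mathcal{K}(\mathcal{C})=\langle(a,0),(b_{k_0},fhk_0+pf)\rangle$ for some $k_0\mid g$, with $k^{\ast}\mid k_0$ from the inclusion $\mathcal{K}(\mathcal{C})\subseteq\bigcap_i\mathcal{D}_i$ already proved. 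So everything reduces to $k_0\mid k^{\ast}$. A convenient reformulation is that any subcode $\mathcal{A}$ with $\mathcal{C}_p\subseteq\mathcal{A}\subseteq\mathcal{C}$ is completely determined by the subspace $\overline{\mathcal{A}^Y}\subseteq\overline{\mathcal{C}^Y}$ obtained by reducing its $\mathbb{Z}_{p^2}$-coordinates modulo $p$, that (by the computation above, $pP(\vec{u},\vec{v})$ depends only on $\overline{\vec{u}^Y},\overline{\vec{v}^Y}$) linearity of $\Phi(\mathcal{A})$ depends only on this subspace, and that every maximal subcode with linear image contains $\mathcal{C}_p$; hence the maximal subcodes with linear image correspond to the maximal subspaces of $\overline{\mathcal{C}^Y}$ that are isotropic for the symmetric map $(\vec{w},\vec{w}')\mapsto pP(\vec{w},\vec{w}')\bmod\mathcal{C}$, while $\mathcal{K}(\mathcal{C})$ corresponds to its radical, so one must show this radical equals the intersection of the maximal isotropic subspaces. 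The difficulty --- and the point at which the odd-$p$ case genuinely differs from $p=2$ --- is that this map is not bilinear, so the usual Witt-type argument does not apply verbatim; the intended way around it is to use the circle-product characterization of linearity from Theorem \ref{thm: linearity of images of ZpZp2 cyclic codes} (equivalently, to test with all $m$-fold products, $m=\deg P$, rather than with pairs) to control the lattice of subcodes with linear image and, for any $\vec{u}\notin\mathcal{K}(\mathcal{C})$, to produce a maximal subcode with linear image that omits $\vec{u}$.
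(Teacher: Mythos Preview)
Your argument for $\mathcal{K}(\mathcal{C})\subseteq\mathcal{D}$ (and hence $\mathcal{K}(\mathcal{C})\subseteq\bigcap_i\mathcal{D}_i\subseteq\bigcap_i\mathcal{C}_i$) via the ``key fact'' is correct and in fact more explicit than the paper's. The paper argues by contradiction: assuming $\mathcal{K}(\mathcal{C})\not\subseteq\mathcal{D}$, it forms the code $\mathcal{D}'$ generated by $\mathcal{K}(\mathcal{C})\cup\mathcal{D}\cup\{pP(\vec{u},\vec{v}):\vec{u},\vec{v}\in\mathcal{K}(\mathcal{C})\cup\mathcal{D}\}$ and asserts that $\Phi(\mathcal{D}')$ is linear, contradicting maximality of $\mathcal{D}$. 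Your cocycle identity shows directly that the sum $\mathcal{K}(\mathcal{C})+\mathcal{D}$ already has linear image (so the extra $pP$-generators are superfluous), which makes that assertion precise.

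There is, however, a genuine gap: the reverse inclusion $\bigcap_i\mathcal{C}_i\subseteq\mathcal{K}(\mathcal{C})$ is not proved in your proposal --- you give only a plan and explicitly flag it as ``the main obstacle.'' The paper handles this direction completely differently and in one line: it claims that for any $\vec{v}\in\bigcap_i\mathcal{C}_i$ one has $\Phi(\vec{v})+\Phi(\mathcal{C})=\Phi(\mathcal{C})$, whence $\vec{v}\in\mathcal{K}(\mathcal{C})$ by the very definition of the kernel, and then restates this as ``the kernel of a $p$-ary code is the intersection of all of its maximal linear subspaces.'' No generator polynomials, no reduction to $k_0\mid k^{\ast}$, no isotropic subspaces, radicals, or circle products are invoked. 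Your entire third paragraph is thus aimed at a difficulty the paper does not acknowledge: the paper treats the reverse inclusion as an immediate fact about the image code $\Phi(\mathcal{C})$, not as something to be extracted from the polynomial structure of $\mathcal{C}$. Whether that one-line claim is fully self-evident for odd $p$ (where $P$ is not bilinear, as you note) is a fair question, but it is precisely the point at which your proposal and the paper's proof diverge.
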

\begin{proof}
	If $\mathcal{K}(\mathcal{C}) \nsubseteq \mathcal{D}$, then consider the code $\mathcal{D}'$ generated by
	$\mathcal{K}(\mathcal{C}) \cup \mathcal{D} \cup \{ pP(\vec{u}, \vec{v}) | \vec{u}, \vec{v}\in \mathcal{K}(\mathcal{C}) \cup \mathcal{D} \}$,
	where $pP(\vec{u}, \vec{v})$ is defined as \eqref{eq: Phi(u+v)=Phi(u)+Phi(v)+Phi(pP(u,v))} in Lemma \ref{phi(u+v) = phi(u)+phi(v)+...}.
	Obviously, $\mathcal{K}(\mathcal{C}) \cup \mathcal{D}$ is cyclic, not necessarily linear, then $\mathcal{D}'$ is cyclic.
	Moreover, for any $\vec{u}, \vec{v}\in \mathcal{D}'$, $pP(\vec{u}, \vec{v}) \in \mathcal{D}'$, then $\Phi(\mathcal{D}')$ is linear,
	leading to a contradiction since we assume $\mathcal{D}$ is maximal.
	
	It's easy to check that $\Phi(\vec{v}) + \Phi(\mathcal{C}) = \Phi(\mathcal{C})$ for any $\vec{v} \in \cap_i \mathcal{C}_i$,
	then $\cap_i \mathcal{C}_i \subseteq \mathcal{K}(\mathcal{C})$.
	Since $\mathcal{D}_i \subseteq \mathcal{C}_i$, and both of $\mathcal{D}_i$ and $\mathcal{C}_i$ have linear Gray images,
	then $\mathcal{K}(\mathcal{C}) \subseteq \mathcal{D}_i \subseteq \mathcal{C}_i$.
	Hence, we have \[ \bigcap_{i=1}^s \mathcal{C}_i \subseteq \mathcal{K}(\mathcal{C}) \subseteq \bigcap_{i=1}^s \mathcal{D}_i \subseteq \bigcap_{i=1}^s \mathcal{C}_i. \]
	Thus, the kernel of a $p$-ary code is the intersection of all of its maximal linear subspaces.
\end{proof}

\begin{theorem} \label{thm: generators of K(C)}
	Let $\mathcal{C} = \langle (a,0), (b, fh+pf) \rangle$ be a $\mathbb{Z}_p\mathbb{Z}_{p^2}$-additive cyclic code, where $fhg = x^\beta-1$ and $\gcd(\beta,p) = 1$.
	Assume that $k_1, \cdots, k_s$ are all the divisors of $g$ with minimum degree such that the Gray image of $\langle (a, 0), (b_i, fhk_i + pf)\rangle$ is linear,
	where $b_i \equiv \overline{k_i}b + (1-\overline{k_i}) \overline{\mu} b \overline{g} \pmod{a}$ and $\lambda h +\mu g = 1$.
	Then \[ \mathcal{K}(\mathcal{C}) = \langle (a, 0), (b', fhk' + pf)\rangle, \]
	where $k' = \lcm(k_1, \cdots, k_s)$ and $b' \equiv \overline{k}' b + (1-\overline{k}' \overline{\mu} b \overline{g}) \pmod{a}$.
\end{theorem}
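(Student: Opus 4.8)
The plan is to derive the statement by combining Lemma~\ref{lemma: K(C) is the intersection of maximal subcodes}, Remark~\ref{remark: generators of K(C)}, and the intersection formula of Proposition~\ref{prop: intersection of C_1 and C_2}, so that $\mathcal{K}(\mathcal{C})$ is first exhibited as an intersection of maximal cyclic subcodes with linear Gray image and then this intersection is evaluated. By Lemma~\ref{lemma: K(C) is the intersection of maximal subcodes} we have $\mathcal{K}(\mathcal{C})=\bigcap_{i}\mathcal{D}_i$, where the $\mathcal{D}_i$ are the maximal \emph{cyclic} subcodes of $\mathcal{C}$ whose Gray images are linear subcodes of $\Phi(\mathcal{C})$. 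Since $\mathcal{C}_p\subseteq\mathcal{D}_i\subseteq\mathcal{C}$ for each such $\mathcal{D}_i$, Remark~\ref{remark: generators of K(C)} (the extension of Theorem~\ref{thm: properties of generators of K(C)} to these subcodes) forces $\mathcal{D}_i=\langle (a,0),(b_{k_i},fhk_i+pf)\rangle$ for some divisor $k_i$ of $g$, with $b_{k_i}\equiv\overline{k_i}b+(1-\overline{k_i})\overline{\mu}b\overline{g}\pmod{a}$.

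Next I would match the divisors $k_i$ arising from the $\mathcal{D}_i$ with the list $k_1,\dots,k_s$ of the statement. If $k\mid k'$ with $k,k'\mid g$, then $fhk\mid fhk'$, so Proposition~\ref{prop: f|f_k} (together with Lemma~\ref{lemma: cyclic subcodes of cyclic codes over zpzp2}) gives $\langle(a,0),(b_{k'},fhk'+pf)\rangle\subseteq\langle(a,0),(b_{k},fhk+pf)\rangle$. Hence a subcode of the above shape whose Gray image is linear can be maximal only if its parameter $k_i$ is minimal, with respect to degree, among the divisors of $g$ producing a linear Gray image; conversely each $\langle(a,0),(b_{k_j},fhk_j+pf)\rangle$ with $k_j$ of minimum degree is such a maximal subcode. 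Thus $\{k_i\}=\{k_1,\dots,k_s\}$ and $\mathcal{K}(\mathcal{C})=\bigcap_{j=1}^{s}\langle(a,0),(b_{k_j},fhk_j+pf)\rangle$.

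It then remains to compute this intersection. The argument proving Proposition~\ref{prop: intersection of C_1 and C_2} uses only Proposition~\ref{prop: intersection of C_1 and C_2 over Zp2} on the $\mathbb{Z}_{p^2}$-component and Lemma~\ref{lemma: cyclic subcodes of cyclic codes over zpzp2} on the $\mathbb{Z}_p$- and mixed components, neither of which needs the maximality hypothesis, so I would iterate it. At each stage the intersection of $\langle(a,0),(b'',fh\ell+pf)\rangle$ (with $\ell\mid g$ the running least common multiple) and $\langle(a,0),(b_{k_{j}},fhk_{j}+pf)\rangle$ equals $\langle(a,0),(b''',fh\,\lcm(\ell,k_{j})+pf)\rangle$, and $\lcm(\ell,k_{j})$ is again a divisor of $g$, so the shape is preserved. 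After exhausting all $s$ factors this yields $\mathcal{K}(\mathcal{C})=\langle(a,0),(b',fhk'+pf)\rangle$ with $k'=\lcm(k_1,\dots,k_s)$, and the congruence $b'\equiv\overline{k'}b+(1-\overline{k'})\overline{\mu}b\overline{g}\pmod{a}$ follows from Lemma~\ref{lemma: cyclic subcodes of cyclic codes over zpzp2} applied to $\langle(a,0),(b',fhk'+pf)\rangle\subseteq\mathcal{C}$.

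The main obstacle is the middle step: verifying rigorously that the maximal cyclic subcodes of $\mathcal{C}$ with linear Gray image are exactly the codes $\langle(a,0),(b_{k_j},fhk_j+pf)\rangle$ attached to the minimum-degree divisors $k_j$ of $g$ with the linearity property, and that no maximal such subcode hides behind a divisor of larger degree; establishing this requires combining the structural description in Remark~\ref{remark: generators of K(C)} with the divisibility behaviour of the $Y$-parts coming from Lemma~\ref{lemma: f'|f on Zp2} and Lemma~\ref{lemma: cyclic subcodes of cyclic codes over zpzp2}. Once that identification is in place, the iterated intersection is a routine application of Propositions~\ref{prop: intersection of C_1 and C_2 over Zp2} and~\ref{prop: intersection of C_1 and C_2}.
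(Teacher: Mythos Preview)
Your proposal follows essentially the same approach as the paper's proof: identify $\mathcal{K}(\mathcal{C})$ as the intersection of the maximal cyclic subcodes with linear Gray image via Lemma~\ref{lemma: K(C) is the intersection of maximal subcodes}, use Remark~\ref{remark: generators of K(C)} to put these in the form $\langle (a,0),(b_{k_i}, fhk_i+pf)\rangle$ with the $k_i$ of minimum degree, and then evaluate the intersection by (iterated) Proposition~\ref{prop: intersection of C_1 and C_2}. The paper's argument is terser but invokes exactly the same ingredients in the same order, and it too treats the identification of the maximal cyclic subcodes with the minimum-degree divisors~$k_i$ as essentially immediate rather than proving it in detail.
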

\begin{proof}
	Let $\mathcal{D}_i = \langle (a, 0), (b_i, fhk_i + pf)\rangle$ be a cyclic subcode of $\mathcal{C}$ for $i=1, \cdots, s$.
	Since $k_i$ is the polynomial of minimum degree dividing $g$, then $\mathcal{D}_i$ is a maximal cyclic subcode of $\mathcal{C}$ with the linear Gray image.
	Moreover, we can extend $\mathcal{D}_i$ to $\mathcal{C}_i$, the maximal subcode of $\mathcal{C}$, not necessarily cyclic, with linear Gray image.
	It's possible that $\mathcal{D}_i = \mathcal{C}_i$ for some $i$.
	By Lemma \ref{lemma: K(C) is the intersection of maximal subcodes}, we know $\mathcal{K}(\mathcal{C}) = \cap_i \mathcal{D}_i$.
	According to Remark \ref{remark: generators of K(C)} and Proposition \ref{prop: intersection of C_1 and C_2}, we can obtain the result.
\end{proof}

\begin{remark}
	Note that Theorem \ref{thm: generators of K(C)} can be applied to $\mathbb{Z}_{p^2}$-additive cyclic codes, when $\alpha = 0$.
	Moreover, one can give a better result if the necessary and sufficient condition for the Gray image to be linear is found.
\end{remark}

\section{Rank of $\mathbb{Z}_p\mathbb{Z}_{p^2}$-Additive Cyclic Codes}\label{sec:rank}

Denote the linear span of $C=\Phi(\mathcal{C})$ by $\langle C \rangle$. The dimension of $\langle C \rangle$ is called the \textit{rank} of the code $C$, denoted by \textit{rank}$(C)$.

\begin{theorem} \label{thm: generators of <Phi(C)>}
	Let $\mathcal{C}$ be a $\mathbb{Z}_p\mathbb{Z}_{p^2}$-additive code of type $(\alpha, \beta; \gamma, \delta; \kappa)$ and $C=\Phi(\mathcal{C})$.
	Let $\mathcal{G}$ be the generator matrix of $\mathcal{C}$ as in \eqref{eq:generator matrix} and let $\{ \vec{u}_i \}_{i=1}^{\gamma}$ be the rows of order $p$
	and $\{ \vec{v}_j \}_{j=1}^{\delta}$ be the rows of order $p^2$ in $\mathcal{G}$.
	Then $\langle C \rangle$ is generated by $\{ \Phi(\vec{u}_i) \}_{i=1}^{\gamma}$, $\{ \Phi(\vec{v}_j), \Phi(p \vec{v}_j) \}_{j=1}^{\delta}$
	and $\{ \Phi(pP(\vec{w}_1, \vec{w}_2)) : \vec{w}_1, \vec{w}_2 \in \mathcal{C} \}$.
\end{theorem}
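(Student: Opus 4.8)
The plan is to prove the two inclusions between $\langle C \rangle$ and the $\mathbb{Z}_p$-span $S$ of the proposed generating set, with the whole argument resting on the translation identity $\Phi(\vec{u}+\vec{v}) = \Phi(\vec{u})+\Phi(\vec{v})+\Phi(pP(\vec{u},\vec{v}))$ of Lemma \ref{phi(u+v) = phi(u)+phi(v)+...} together with its specialization: the correction term $\Phi(pP(\vec{u},\vec{v}))$ vanishes as soon as one of the two arguments has order $p$.

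First I would check $S \subseteq \langle C \rangle$. The images $\Phi(\vec{u}_i)$ and $\Phi(\vec{v}_j)$ lie in $C$ by definition, and $\Phi(p\vec{v}_j)\in C$ because $p\vec{v}_j\in\mathcal{C}$. For the twisted terms, rewrite the translation identity as $\Phi(pP(\vec{w}_1,\vec{w}_2)) = \Phi(\vec{w}_1+\vec{w}_2)-\Phi(\vec{w}_1)-\Phi(\vec{w}_2)$; since $\vec{w}_1,\vec{w}_2,\vec{w}_1+\vec{w}_2\in\mathcal{C}$, the right-hand side is a $\mathbb{Z}_p$-combination of codewords of $C$, so $\Phi(pP(\vec{w}_1,\vec{w}_2))\in\langle C\rangle$. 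Note that $pP(\vec{w}_1,\vec{w}_2)$ itself need not belong to $\mathcal{C}$, which is precisely why these terms appear as separate generators. Hence every listed generator lies in $\langle C\rangle$ and $S\subseteq\langle C\rangle$.

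For the reverse inclusion it suffices to show $\Phi(\vec{c})\in S$ for every $\vec{c}\in\mathcal{C}$, since those images span $\langle C\rangle$ and $S$ is a linear subspace. I would write $\vec{c}=\sum_{i=1}^{\gamma}\lambda_i\vec{u}_i+\sum_{j=1}^{\delta}\mu_j\vec{v}_j$ with $\lambda_i\in\mathbb{Z}_p$, $\mu_j\in\mathbb{Z}_{p^2}$. Each $\lambda_i\vec{u}_i$ has order dividing $p$, so peeling these summands off one at a time via the specialized formula produces no correction, giving $\Phi(\vec{c})=\sum_i\lambda_i\Phi(\vec{u}_i)+\Phi\bigl(\sum_j\mu_j\vec{v}_j\bigr)$. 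Then I would expand $\Phi\bigl(\sum_j\mu_j\vec{v}_j\bigr)$ by induction on $\delta$: at each step split off $\mu_j\vec{v}_j$ using the full identity, and observe that the correction $\Phi\bigl(pP(\mu_j\vec{v}_j,\sum_{\ell>j}\mu_\ell\vec{v}_\ell)\bigr)$ is of the listed twisted type since both arguments lie in $\mathcal{C}$. Finally, for a single $\Phi(\mu_j\vec{v}_j)$, write $\mu_j=\mu_j^{(0)}+p\mu_j^{(1)}$ with $\mu_j^{(0)},\mu_j^{(1)}\in\{0,\dots,p-1\}$; since $p\vec{v}_j$ has order $p$ the specialized formula gives $\Phi(\mu_j\vec{v}_j)=\Phi(\mu_j^{(0)}\vec{v}_j)+\mu_j^{(1)}\Phi(p\vec{v}_j)$, and adding $\vec{v}_j$ to itself $\mu_j^{(0)}$ times via the full identity gives $\Phi(\mu_j^{(0)}\vec{v}_j)=\mu_j^{(0)}\Phi(\vec{v}_j)+\sum_{\ell=1}^{\mu_j^{(0)}-1}\Phi(pP(\vec{v}_j,\ell\vec{v}_j))$, where every $\ell\vec{v}_j\in\mathcal{C}$. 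Collecting all pieces shows $\Phi(\vec{c})$ is a $\mathbb{Z}_p$-combination of the listed generators, so $\langle C\rangle\subseteq S$.

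The routine parts are the two peeling inductions; the only point that needs genuine care is the bookkeeping of the order in which summands are combined, so that each correction term produced really has both of its arguments in $\mathcal{C}$ (and not merely in some larger set), together with the use of \eqref{eq: values of P(u,v)} to confirm that the correction vanishes whenever an argument has order $p$. Both of these are supplied directly by Lemma \ref{phi(u+v) = phi(u)+phi(v)+...} as stated, so I do not expect any step to present a real obstacle.
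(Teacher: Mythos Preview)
Your proposal is correct and follows essentially the same approach as the paper: both arguments rest on repeatedly applying the identity $\Phi(\vec{u}+\vec{v})=\Phi(\vec{u})+\Phi(\vec{v})+\Phi(pP(\vec{u},\vec{v}))$ from Lemma~\ref{phi(u+v) = phi(u)+phi(v)+...}, peeling off the order-$p$ part without correction and then decomposing the order-$p^2$ part so that every correction term has both arguments in $\mathcal{C}$. Your write-up is in fact more complete than the paper's, which only sketches the inclusion $\langle C\rangle\subseteq S$ and leaves $S\subseteq\langle C\rangle$ implicit; your explicit verification of the latter via $\Phi(pP(\vec{w}_1,\vec{w}_2))=\Phi(\vec{w}_1+\vec{w}_2)-\Phi(\vec{w}_1)-\Phi(\vec{w}_2)$ is the right way to close that gap.
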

\begin{proof}
	If $\vec{x}\in \mathcal{C}$, then $\vec{x}$ can be expressed as $\vec{x} = \vec{v}_{j_1} + \cdots + \vec{v}_{j_m} + \vec{w}$,
	where $\{j_1, \cdots, j_m\} \subseteq \{1, \cdots, \delta \}$ and $\vec{w}$ is a codeword of order $p$ or $\vec{w} = \vec{0}$.
	Note that $j_s$ and $j_t$ may be equal.
	By Lemma \ref{phi(u+v) = phi(u)+phi(v)+...}, we have $\Phi(\vec{x}) = \Phi(\vec{v}_{j_1} + \cdots + \vec{v}_{j_m}) + \Phi(\vec{w})$,
	where $\Phi(\vec{w})$ is a linear combination of $\{ \Phi(\vec{u}_i) \}_{i=1}^{\gamma}$, $\{\Phi(p\vec{v}_j)\}_{j=1}^{\delta}$, $\{\Phi(2p\vec{v}_j)\}_{j=1}^{\delta}$,
	$\cdots$, $\{\Phi((p-1) p\vec{v}_j)\}_{j=1}^{\delta}$.
	It's easy to check that $\Phi(i p\vec{v}_j) = i \Phi(p\vec{v}_j)$, where $i$ is a positive integer.
	Moreover,
	\[
	\Phi(\vec{v}_{j_1} + \cdots + \vec{v}_{j_m}) = \Phi(\vec{w}_1 + \vec{w}_2) = \Phi(\vec{w}_1) + \Phi(\vec{w}_2) + \Phi(pP(\vec{w}_1, \vec{w}_2)),
	\]
	where $\vec{w}_1$ and $\vec{w}_2$ are just two codewords in $\mathcal{C}$ that satisfy $\vec{w}_1 + \vec{w}_2 = \vec{v}_{j_1} + \cdots + \vec{v}_{j_m}$.
	Thus, $\Phi(\vec{x})$ is generated by $\{ \Phi(\vec{u}_i) \}_{i=1}^{\gamma}$, $\{ \Phi(\vec{v}_j), \Phi(p \vec{v}_j) \}_{j=1}^{\delta}$
	and $\{ \Phi(pP(\vec{w}_1, \vec{w}_2)) : \vec{w}_1, \vec{w}_2 \in \mathcal{C} \}$.
\end{proof}

\begin{proposition} \label{prop: generators of R(C)}
	Let $\mathcal{C}$ be a $\mathbb{Z}_p\mathbb{Z}_{p^2}$-additive code of type $(\alpha, \beta; \gamma, \delta; \kappa)$ and $C=\Phi(\mathcal{C})$.
	Let $\mathcal{G}$ be the generator matrix of $\mathcal{C}$ as in \eqref{eq:generator matrix} and let $\{ \vec{u}_i \}_{i=1}^{\gamma}$ be the rows of order $p$
	and $\{ \vec{v}_j \}_{j=1}^{\delta}$ be the rows of order $p^2$ in $\mathcal{G}$.
	Then, $\mathcal{R}(\mathcal{C}) = \langle \{\vec{u}_i\}_{i=1}^{\gamma}, \{\vec{v}_j\}_{j=1}^{\delta},
	\{ pP(\vec{w}_1, \vec{w}_2): \vec{w}_1, \vec{w}_2 \in \mathcal{C} \} \rangle$.
\end{proposition}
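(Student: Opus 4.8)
Write $\mathcal{R}'$ for the right-hand side, $\mathcal{R}' = \langle \{\vec{u}_i\}_{i=1}^{\gamma}, \{\vec{v}_j\}_{j=1}^{\delta}, \{pP(\vec{w}_1,\vec{w}_2):\vec{w}_1,\vec{w}_2\in\mathcal{C}\}\rangle$. The plan is to prove the two inclusions $\mathcal{R}'\subseteq\mathcal{R}(\mathcal{C})$ and $\mathcal{R}(\mathcal{C})\subseteq\mathcal{R}'$ separately. The tools I will use repeatedly are: the identity $\Phi(\vec{a}+\vec{b})=\Phi(\vec{a})+\Phi(\vec{b})+\Phi(pP(\vec{a},\vec{b}))$ from Lemma~\ref{phi(u+v) = phi(u)+phi(v)+...}, together with the fact (also from that lemma) that $\Phi$ is additive on the subgroup of $\mathbb{Z}_p^{\alpha}\times\mathbb{Z}_{p^2}^{\beta}$ consisting of the elements of order dividing $p$; the injectivity of $\Phi$ (for $\theta=\theta_0 p+\theta_1$ the first two coordinates of $\phi(\theta)$ recover $\theta_0$ and $\theta_1$, so $\phi$, and hence $\Phi$, is injective); and Theorem~\ref{thm: generators of <Phi(C)>}, which guarantees that $\Phi(pP(\vec{w}_1,\vec{w}_2))\in\langle\Phi(\mathcal{C})\rangle$ whenever $\vec{w}_1,\vec{w}_2\in\mathcal{C}$. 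Note also that $pP(\vec{w}_1,\vec{w}_2)$ always has order dividing $p$.

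For $\mathcal{R}'\subseteq\mathcal{R}(\mathcal{C})$ I would take $\vec{x}\in\mathcal{R}'$ and regroup it as $\vec{x}=\vec{c}+\vec{d}$, where $\vec{c}$ is an integer combination of the generators $\vec{u}_i$ and $\vec{v}_j$ (so $\vec{c}\in\mathcal{C}$, as $\mathcal{C}$ is a group containing all of these) and $\vec{d}$ is an integer combination of elements $pP(\vec{w}_1,\vec{w}_2)$ with $\vec{w}_1,\vec{w}_2\in\mathcal{C}$. Since $\vec{d}$ is a sum of elements of order dividing $p$ it has order dividing $p$, so $\Phi(\vec{x})=\Phi(\vec{c})+\Phi(\vec{d})$, and $\Phi(\vec{d})$ is the corresponding $\mathbb{Z}_p$-combination of the vectors $\Phi(pP(\vec{w}_1,\vec{w}_2))$. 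Now $\Phi(\vec{c})\in\Phi(\mathcal{C})$ and, by Theorem~\ref{thm: generators of <Phi(C)>}, each $\Phi(pP(\vec{w}_1,\vec{w}_2))\in\langle\Phi(\mathcal{C})\rangle$, whence $\Phi(\vec{x})\in\langle\Phi(\mathcal{C})\rangle$, i.e.\ $\vec{x}\in\mathcal{R}(\mathcal{C})$.

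For the reverse inclusion I would take $\vec{x}\in\mathcal{R}(\mathcal{C})$, so $\Phi(\vec{x})\in\langle\Phi(\mathcal{C})\rangle$, and invoke Theorem~\ref{thm: generators of <Phi(C)>} to write
\[ \Phi(\vec{x})=\sum_i a_i\Phi(\vec{u}_i)+\sum_j\bigl(b_j\Phi(\vec{v}_j)+c_j\Phi(p\vec{v}_j)\bigr)+\sum_t d_t\Phi\bigl(pP(\vec{w}_1^{(t)},\vec{w}_2^{(t)})\bigr), \]
with $a_i,b_j,c_j,d_t\in\mathbb{Z}_p$ and $\vec{w}_1^{(t)},\vec{w}_2^{(t)}\in\mathcal{C}$. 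Then I would set $\vec{c}=\sum_i a_i\vec{u}_i+\sum_j(b_j\vec{v}_j+c_j\,p\vec{v}_j)\in\mathcal{C}$ and expand $\Phi(\vec{c})$ by iterating Lemma~\ref{phi(u+v) = phi(u)+phi(v)+...}: every correction term produced is of the form $\Phi(pP(\vec{p},\vec{q}))$ with $\vec{p},\vec{q}$ partial sums of the listed summands, hence $\vec{p},\vec{q}\in\mathcal{C}$, so $\Phi(\vec{c})$ equals the first three sums above plus a $\mathbb{Z}_p$-combination $\sum_s e_s\Phi(pP(\vec{p}_s,\vec{q}_s))$ with $\vec{p}_s,\vec{q}_s\in\mathcal{C}$. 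Consequently $\Phi(\vec{x})=\Phi(\vec{c})+\Phi(\vec{d})$, where $\vec{d}=\sum_t d_t\,pP(\vec{w}_1^{(t)},\vec{w}_2^{(t)})-\sum_s e_s\,pP(\vec{p}_s,\vec{q}_s)$ has order dividing $p$ and $\Phi(\vec{d})$ is identified using additivity of $\Phi$ on the order-$p$ subgroup. Since $\vec{d}$ has order dividing $p$, $\Phi(\vec{x})=\Phi(\vec{c})+\Phi(\vec{d})=\Phi(\vec{c}+\vec{d})$, and injectivity of $\Phi$ forces $\vec{x}=\vec{c}+\vec{d}$. Finally $\vec{c}\in\mathcal{C}\subseteq\mathcal{R}'$ and $\vec{d}\in\mathcal{R}'$ (each $pP(\vec{p}_s,\vec{q}_s)$ and $pP(\vec{w}_1^{(t)},\vec{w}_2^{(t)})$ being a generator of $\mathcal{R}'$), so $\vec{x}\in\mathcal{R}'$.

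The step I expect to be the main obstacle is the bookkeeping forced by the non-additivity of $\Phi$: one must check carefully that, when $\Phi(\vec{c})$ is expanded, \emph{every} correction term $\Phi(pP(\cdot,\cdot))$ that appears has \emph{both} of its arguments in $\mathcal{C}$ (this is what ties those terms to Theorem~\ref{thm: generators of <Phi(C)>} and to generators of $\mathcal{R}'$), and that amalgamating all of the resulting order-$p$ contributions into a single codeword $\vec{d}$ does not alter their combined Gray image. This last point — additivity of $\Phi$ on the order-$p$ subgroup together with injectivity of $\Phi$ — is exactly what lets one upgrade the equality of Gray images to the equality $\vec{x}=\vec{c}+\vec{d}$ of the underlying vectors.
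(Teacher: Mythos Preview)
Your argument is correct, but it follows a different route from the paper's. The paper does not prove the two inclusions directly; instead it observes that $\mathcal{R}(\mathcal{C})=\Phi^{-1}(\langle\Phi(\mathcal{C})\rangle)$ is the \emph{smallest} $\mathbb{Z}_p\mathbb{Z}_{p^2}$-additive code containing $\mathcal{C}$ whose Gray image is linear, and then verifies that $\mathcal{R}'$ has exactly this minimality property. Linearity of $\Phi(\mathcal{R}')$ comes from Lemma~\ref{lemma: Phi(C) is linear iff pP(u,v) in C}: for arbitrary $\vec{x},\vec{y}\in\mathcal{R}'$ one has $pP(\vec{x},\vec{y})\in\mathcal{R}'$, because by \eqref{eq: values of P(u,v)} the function $P$ depends only on its arguments modulo $p$, and $\mathcal{R}'$ differs from $\mathcal{C}$ only by vectors of order $p$. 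Minimality is immediate since any additive $\mathcal{D}'\supseteq\mathcal{C}$ with linear image must contain every $pP(\vec{w}_1,\vec{w}_2)$, $\vec{w}_i\in\mathcal{C}$, again by Lemma~\ref{lemma: Phi(C) is linear iff pP(u,v) in C}. Your approach, by contrast, works entirely from the definition $\mathcal{R}(\mathcal{C})=\Phi^{-1}(\langle\Phi(\mathcal{C})\rangle)$, leaning on Theorem~\ref{thm: generators of <Phi(C)>} for the generating set of $\langle\Phi(\mathcal{C})\rangle$ and on injectivity of $\Phi$ to pull equalities of Gray images back to equalities in $\mathbb{Z}_p^{\alpha}\times\mathbb{Z}_{p^2}^{\beta}$. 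The paper's argument is shorter and yields the useful reinterpretation of $\mathcal{R}(\mathcal{C})$ as a minimal linear-image envelope (a fact used later, e.g.\ in the proof of Theorem~\ref{thm: generators of R(C)}); yours is more self-contained in that it never invokes the linearity criterion and does not implicitly rely on $\mathrm{Image}(\Phi)$ being a linear subspace, at the cost of the iterated-correction bookkeeping you flagged.
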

\begin{proof}
	Denote the code $\langle \{\vec{u}_i\}_{i=1}^{\gamma}, \{\vec{v}_j\}_{j=1}^{\delta}, \{ pP(\vec{w}_1, \vec{w}_2): \vec{w}_1, \vec{w}_2 \in \mathcal{C} \} \rangle$
	by $\mathcal{D}$, then we prove that $\mathcal{D}$ is the minimum $\mathbb{Z}_p \mathbb{Z}_{p^2}$-additive code containing $\mathcal{C}$ with the linear Gray image.
	
	Obviously, for any $\vec{x}, \vec{y}\in \mathcal{D}$, we have $pP(\vec{x}, \vec{y})\in \mathcal{D}$.
	Thus, the Gray image of $\mathcal{D}$ is linear by Lemma \ref{lemma: Phi(C) is linear iff pP(u,v) in C}.
	Let $\mathcal{D}'$ be another $\mathbb{Z}_p \mathbb{Z}_{p^2}$-additive code containing $\mathcal{C}$ with linear Gray image.
	If $\vec{w}_1 \in \mathcal{C} \subseteq \mathcal{D}'$, then, by Lemma \ref{lemma: Phi(C) is linear iff pP(u,v) in C},
	we know that $pP(\vec{w}_1, \vec{w}_2) \in \mathcal{D}'$ for all $\vec{w}_2 \in \mathcal{C}$.
	Hence, $\mathcal{D} \subseteq \mathcal{D}'$ and $\mathcal{D}$ is the minimum code.
	
	By definition, $\mathcal{R}(\mathcal{C}) = \{ \vec{v}\in \mathbb{Z}_p^{\alpha} \times \mathbb{Z}_{p^2}^{\beta} \mid \Phi(\vec{v}) \in \langle \Phi(\mathcal{C}) \rangle \}$
	is the minimum $\mathbb{Z}_p \mathbb{Z}_{p^2}$-additive code containing $\mathcal{C}$ whose Gray image is linear.
\end{proof}

\begin{lemma} \label{R(C)_Y = R(C_Y)}
	Let $\mathcal{C}$ be a $\mathbb{Z}_p \mathbb{Z}_{p^2}$-additive code, then $\mathcal{R}(\mathcal{C})_Y = \mathcal{R}(\mathcal{C}_Y)$.
\end{lemma}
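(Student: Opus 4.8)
The plan is to prove the two inclusions separately, using the explicit description of $\mathcal{R}(\mathcal{C})$ furnished by Proposition \ref{prop: generators of R(C)}: $\mathcal{R}(\mathcal{C})$ is the minimal $\mathbb{Z}_p\mathbb{Z}_{p^2}$-additive code containing $\mathcal{C}$ whose Gray image is linear, and it is generated by the rows $\{\vec{u}_i\}$, $\{\vec{v}_j\}$ of a generator matrix together with all vectors $pP(\vec{w}_1,\vec{w}_2)$, $\vec{w}_1,\vec{w}_2\in\mathcal{C}$. The key structural fact to exploit is the one already used in the proof of Lemma \ref{K(C)_Y < K(C_Y)}: for any $\vec{w}_1=(\vec{w}_1',\vec{w}_1'')$ and $\vec{w}_2=(\vec{w}_2',\vec{w}_2'')$ in $\mathbb{Z}_p^\alpha\times\mathbb{Z}_{p^2}^\beta$, we have $pP(\vec{w}_1,\vec{w}_2)=(\vec{0},\,pP(\vec{w}_1'',\vec{w}_2''))$, since $P(a+pc,b+pd)=P(a,b)$ depends only on the reductions and on the $\mathbb{Z}_{p^2}$-coordinates in the relevant way — on the $\mathbb{Z}_p$-part the ``carry'' is vacuous.

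First I would show $\mathcal{R}(\mathcal{C})_Y\subseteq\mathcal{R}(\mathcal{C}_Y)$. Take $\vec{v}\in\mathcal{R}(\mathcal{C})$, so $\Phi(\vec{v})\in\langle\Phi(\mathcal{C})\rangle$; write $\vec{v}=(\vec{v}',\vec{v}'')$ so that $\vec{v}''\in\mathcal{R}(\mathcal{C})_Y$. Projecting onto the last $p\beta$ coordinates, $\phi(\vec{v}'')$ is a linear combination of the $Y$-parts of $\Phi(\mathcal{C})$-codewords, and the $Y$-part of $\Phi(\vec{c})$ for $\vec{c}=(\vec{c}',\vec{c}'')\in\mathcal{C}$ is exactly $\phi(\vec{c}'')$ with $\vec{c}''\in\mathcal{C}_Y$. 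Hence $\phi(\vec{v}'')\in\langle\phi(\mathcal{C}_Y)\rangle$, i.e.\ $\vec{v}''\in\mathcal{R}(\mathcal{C}_Y)$.

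For the reverse inclusion $\mathcal{R}(\mathcal{C}_Y)\subseteq\mathcal{R}(\mathcal{C})_Y$, I would invoke the minimality characterization. Consider the $\mathbb{Z}_p\mathbb{Z}_{p^2}$-additive code $\mathcal{D}=\mathcal{C}_X\times\mathcal{R}(\mathcal{C}_Y)$. It contains $\mathcal{C}$ because $\mathcal{C}\subseteq\mathcal{C}_X\times\mathcal{C}_Y\subseteq\mathcal{C}_X\times\mathcal{R}(\mathcal{C}_Y)$. Its Gray image is linear: for $\vec{x}=(\vec{x}',\vec{x}''),\vec{y}=(\vec{y}',\vec{y}'')\in\mathcal{D}$ we have $pP(\vec{x},\vec{y})=(\vec{0},pP(\vec{x}'',\vec{y}''))$ with $\vec{x}'',\vec{y}''\in\mathcal{R}(\mathcal{C}_Y)$, and since $\Phi(\mathcal{R}(\mathcal{C}_Y))=\phi(\mathcal{R}(\mathcal{C}_Y))$ is linear, Lemma \ref{lemma: Phi(C) is linear iff pP(u,v) in C} gives $pP(\vec{x}'',\vec{y}'')\in\mathcal{R}(\mathcal{C}_Y)$, whence $pP(\vec{x},\vec{y})\in\{\vec{0}\}\times\mathcal{R}(\mathcal{C}_Y)\subseteq\mathcal{D}$; so $\Phi(\mathcal{D})$ is linear by the same lemma. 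By the minimality of $\mathcal{R}(\mathcal{C})$ (Proposition \ref{prop: generators of R(C)}), $\mathcal{R}(\mathcal{C})\subseteq\mathcal{D}$, and projecting onto $Y$ gives $\mathcal{R}(\mathcal{C})_Y\subseteq\mathcal{R}(\mathcal{C}_Y)$. Wait — this only re-proves the first inclusion; for the genuine reverse inclusion I instead argue directly: pick a generating set of $\mathcal{R}(\mathcal{C}_Y)$ consisting of the $Y$-rows of $\mathcal{C}$'s generator matrix together with vectors $pP(\vec{w}_1'',\vec{w}_2'')$, $\vec{w}_i''\in\mathcal{C}_Y$ (this is exactly the description of $\mathcal{R}(\mathcal{C}_Y)$ from Proposition \ref{prop: generators of R(C)} applied with $\alpha=0$). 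Each $Y$-row is the $Y$-part of a row of $\mathcal{C}$, hence lies in $\mathcal{R}(\mathcal{C})_Y$. Each $\vec{w}_i''$ is the $Y$-part of some $\vec{w}_i\in\mathcal{C}$, and $pP(\vec{w}_1,\vec{w}_2)=(\vec{0},pP(\vec{w}_1'',\vec{w}_2''))\in\mathcal{R}(\mathcal{C})$ by Proposition \ref{prop: generators of R(C)}, so $pP(\vec{w}_1'',\vec{w}_2'')\in\mathcal{R}(\mathcal{C})_Y$. Since $\mathcal{R}(\mathcal{C})_Y$ is a subgroup and contains a generating set of $\mathcal{R}(\mathcal{C}_Y)$, we conclude $\mathcal{R}(\mathcal{C}_Y)\subseteq\mathcal{R}(\mathcal{C})_Y$. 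Combining the two inclusions finishes the proof.

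The only subtle point — the ``main obstacle'' such as it is — is ensuring that the generator set for $\mathcal{R}(\mathcal{C}_Y)$ really does come from $Y$-parts of $\mathcal{C}$'s data rather than from an independently chosen generator matrix of $\mathcal{C}_Y$; this is handled by noting (via \eqref{eq:generator matrix of C_Y}) that a generator matrix of $\mathcal{C}_Y$ is obtained precisely by taking $Y$-parts of a generator matrix of $\mathcal{C}$, so the two descriptions are compatible. Everything else is the bookkeeping identity $pP((\vec{w}_1',\vec{w}_1''),(\vec{w}_2',\vec{w}_2''))=(\vec{0},pP(\vec{w}_1'',\vec{w}_2''))$ together with the minimality/linearity criteria already established.
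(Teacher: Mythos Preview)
Your proposal is correct and follows essentially the same approach as the paper: both arguments rest on Proposition \ref{prop: generators of R(C)} together with the identity $pP((\vec{w}_1',\vec{w}_1''),(\vec{w}_2',\vec{w}_2''))=(\vec{0},pP(\vec{w}_1'',\vec{w}_2''))$. The paper simply compresses the two inclusions into the single chain $\mathcal{R}(\mathcal{C})_Y=\langle\mathcal{C},\{pP(\vec{w}_1,\vec{w}_2)\}\rangle_Y=\langle\mathcal{C}_Y,\{pP(\vec{w}_1'',\vec{w}_2'')\}\rangle=\mathcal{R}(\mathcal{C}_Y)$, whereas you separate the inclusions (with an unnecessary detour through the minimality characterization before arriving at the same generator-matching argument).
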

\begin{proof}
	Let $\mathcal{G}$ be the generator matrix of $\mathcal{C}$ as in \eqref{eq:generator matrix 2} and let $\{ \vec{u}_i = (u_i , u'_i) \}_{i=1}^{\gamma}$ be the rows of order $p$
	and $\{ \vec{v}_j = (v_j , v'_j )\}_{j=1}^{\delta}$ be the rows of order $p^2$ in $\mathcal{G}$.
	By Proposition \ref{prop: generators of R(C)}, $\mathcal{R}(\mathcal{C}) = \langle \mathcal{C}, \{p P(\vec{w}_1, \vec{w}_2): \vec{w}_1, \vec{w}_2 \in \mathcal{C} \} \rangle$.
	Since $p P(\vec{w}_1, \vec{w}_2) = (\vec{0}, p P(w''_1, w''_2))$, where $\vec{w}_1 = (w'_1,w''_1)$ and $\vec{w}_2 = (w'_2, w''_2)$,
	then
	\[
	\begin{aligned}
		\mathcal{R}(\mathcal{C})_Y & = \langle \mathcal{C}, \{p P(\vec{w}_1, \vec{w}_2): \vec{w}_1, \vec{w}_2 \in \mathcal{C} \} \rangle_Y \\
		& = \langle \mathcal{C}_Y, \{p P(w''_1, w''_2): w''_1, w''_2 \in \mathcal{C}_Y \} \rangle = \mathcal{R}(\mathcal{C}_Y),
	\end{aligned}
	\]
	and the proof is completed.
\end{proof}

\begin{theorem} \label{thm: R(C) is cyclic}
	Let $\mathcal{C}$ be a $\mathbb{Z}_p \mathbb{Z}_{p^2}$-additive cyclic code, then $\mathcal{R}(\mathcal{C})$ is also a $\mathbb{Z}_p \mathbb{Z}_{p^2}$-additive cyclic code.
\end{theorem}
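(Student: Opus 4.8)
The plan is to follow the strategy already used for $\mathcal{K}(\mathcal{C})$ in Theorem~\ref{K(C) is cyclic}, but to run the argument on the explicit generating set of $\mathcal{R}(\mathcal{C})$ supplied by Proposition~\ref{prop: generators of R(C)}. We already know that $\mathcal{R}(\mathcal{C})$ is a $\mathbb{Z}_p\mathbb{Z}_{p^2}$-additive code, so the only thing to establish is invariance under the cyclic shift $\pi$. Since $\pi$ is an additive, coordinate-permuting bijection of $\mathbb{Z}_p^{\alpha}\times\mathbb{Z}_{p^2}^{\beta}$ of finite order (namely $\mathrm{lcm}(\alpha,\beta)$), it will suffice to show that $\pi$ maps a generating set of $\mathcal{R}(\mathcal{C})$ into $\mathcal{R}(\mathcal{C})$: then $\pi(\mathcal{R}(\mathcal{C}))\subseteq\mathcal{R}(\mathcal{C})$, and applying the power of $\pi$ equal to $\pi^{-1}$ yields the reverse inclusion, so $\pi(\mathcal{R}(\mathcal{C}))=\mathcal{R}(\mathcal{C})$.

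By Proposition~\ref{prop: generators of R(C)}, $\mathcal{R}(\mathcal{C})$ is generated, as an additive code, by $\mathcal{C}$ together with all vectors $pP(\vec{w}_1,\vec{w}_2)$ with $\vec{w}_1,\vec{w}_2\in\mathcal{C}$. For a generator $\vec{z}\in\mathcal{C}$ we have $\pi(\vec{z})\in\mathcal{C}\subseteq\mathcal{R}(\mathcal{C})$ because $\mathcal{C}$ is cyclic. For a generator $\vec{z}=pP(\vec{w}_1,\vec{w}_2)$ with $\vec{w}_1,\vec{w}_2\in\mathcal{C}$, the key point is the commutation identity
\[
\pi\bigl(pP(\vec{w}_1,\vec{w}_2)\bigr)=pP\bigl(\pi(\vec{w}_1),\pi(\vec{w}_2)\bigr),
\]
which is the same componentwise fact already used in the proof of Theorem~\ref{K(C) is cyclic}: there one has $pP(\pi(\vec{u}),\vec{v})=\pi(pP(\vec{u},\pi^{-1}(\vec{v})))$, and taking $\vec{u}=\vec{w}_1$, $\vec{v}=\pi(\vec{w}_2)$ gives the displayed equality. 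Since $\mathcal{C}$ is cyclic, $\pi(\vec{w}_1),\pi(\vec{w}_2)\in\mathcal{C}$, so $pP(\pi(\vec{w}_1),\pi(\vec{w}_2))$ is again one of the listed generators of $\mathcal{R}(\mathcal{C})$; hence $\pi(\vec{z})\in\mathcal{R}(\mathcal{C})$. As $\mathcal{R}(\mathcal{C})$ is closed under addition and $\pi$ is additive, $\pi$ carries every $\mathbb{Z}$-linear combination of generators into $\mathcal{R}(\mathcal{C})$, i.e. $\pi(\mathcal{R}(\mathcal{C}))\subseteq\mathcal{R}(\mathcal{C})$, and equality follows because $\pi$ is a bijection of a finite set. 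Thus $\mathcal{R}(\mathcal{C})$ is a $\mathbb{Z}_p\mathbb{Z}_{p^2}$-additive cyclic code.

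The only delicate step is the identity $\pi\circ pP=pP\circ(\pi\times\pi)$, and this is the part I would check most carefully. It is routine in essence — $P$ is evaluated coordinate by coordinate (with the convention from Lemma~\ref{phi(u+v) = phi(u)+phi(v)+...}), and the first $\alpha$ entries of $pP(\vec{w}_1,\vec{w}_2)$ vanish, so the shift on the $\mathbb{Z}_p$-block and the shift on the $\mathbb{Z}_{p^2}$-block each merely relabel the coordinates on which $P$ is computed — but it is precisely where one must confirm that the block structure of the mixed alphabet is respected. As an alternative route one could use the minimality characterization in Proposition~\ref{prop: generators of R(C)} (that $\mathcal{R}(\mathcal{C})$ is the smallest additive code containing $\mathcal{C}$ whose Gray image is linear, via Lemma~\ref{lemma: Phi(C) is linear iff pP(u,v) in C}) and show $\pi(\mathcal{R}(\mathcal{C}))$ enjoys the same property; since that still reduces to the same commutation identity, the generator-set argument above seems the most economical.
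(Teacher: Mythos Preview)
Your proof is correct and follows essentially the same route as the paper: both use the generating set of $\mathcal{R}(\mathcal{C})$ from Proposition~\ref{prop: generators of R(C)}, the cyclicity of $\mathcal{C}$, the additivity of $\pi$, and the key commutation identity $\pi(pP(\vec{w}_1,\vec{w}_2))=pP(\pi(\vec{w}_1),\pi(\vec{w}_2))$. If anything, your version is slightly more careful in explicitly treating a general element as an additive combination of generators and in invoking bijectivity of $\pi$ to pass from $\pi(\mathcal{R}(\mathcal{C}))\subseteq\mathcal{R}(\mathcal{C})$ to equality.
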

\begin{proof}
	By Proposition \ref{prop: generators of R(C)}, for any $\vec{x} \in \mathcal{R}(\mathcal{C})$, then $\vec{x} = \vec{u} + \lambda p P(\vec{w}_1, \vec{w}_2)$
	for some $\vec{u}, \vec{w}_1, \vec{w}_2 \in \mathcal{C}$, where $\lambda \in \mathbb{Z}_{p}$.
	Since $\mathcal{C}$ is a $\mathbb{Z}_p \mathbb{Z}_{p^2}$-additive cyclic code, then $\pi(\vec{u}) \in \mathcal{C}$.
	It's easy to check that $\pi(\vec{a} + \vec{b}) = \pi(\vec{a}) + \pi(\vec{b})$, then
	\[ \pi\left(\vec{u} + \lambda p P(\vec{w}_1, \vec{w}_2) \right) = \pi(\vec{u}) + \lambda \pi\left( p P(\vec{w}_1, \vec{w}_2) \right). \]
	It's sufficient to check that
	\[ \pi\left( p P(\vec{w}_1, \vec{w}_2) \right) = p P(\pi(\vec{w}_1), \pi(\vec{w}_2)) \in \mathcal{R}(\mathcal{C}). \]
	By Lemma \ref{phi(u+v) = phi(u)+phi(v)+...}, we know $P(x_1,x_2) \in \mathbb{Z}_p[x_1, x_2]$, then the equality holds obviously.
	Since $\vec{w}_1, \vec{w}_2 \in \mathcal{C}$, then $\pi(p P(\vec{w}_1, \vec{w}_2)) \in \mathcal{R}(\mathcal{C})$ by Lemma \ref{phi(u+v) = phi(u)+phi(v)+...}.
	Thus, $\pi(\vec{x}) \in \mathcal{R}(\mathcal{C})$ means $\mathcal{R}(\mathcal{C})$ is cyclic.
\end{proof}

\begin{theorem} \label{thm: rank of the code}
	Let $\mathcal{C}$ be a $\mathbb{Z}_p \mathbb{Z}_{p^2}$-additive cyclic code with the generator matrix $\mathcal{G}$ in the form \eqref{eq:generator matrix 2}
	and let $\mathcal{C}'$ be the subcode generated by the matrix $\mathcal{G}'$ in the form \eqref{eq:generator matrix of C'}, then
	\[ rank(\Phi(\mathcal{C})) = \kappa_1 + \kappa_2 + rank(\phi(\mathcal{C}'_Y)). \]
\end{theorem}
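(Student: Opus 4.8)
The plan is to mimic the proof of Theorem~\ref{thm: dimension of the kernel}, working with $\mathcal{R}(\mathcal{C})$ and Proposition~\ref{prop: generators of R(C)} (equivalently Theorem~\ref{thm: generators of <Phi(C)>}) in place of $\mathcal{K}(\mathcal{C})$ and Proposition~\ref{prop: ker(Phi(C)) and K(C)}. Since $\Phi$ is a bijection and $\Phi(\mathcal{R}(\mathcal{C})) = \langle\Phi(\mathcal{C})\rangle$ is linear — this is exactly how $\mathcal{R}(\mathcal{C})$ is characterised in Proposition~\ref{prop: generators of R(C)} — one has $rank(\Phi(\mathcal{C})) = \log_p|\mathcal{R}(\mathcal{C})|$, and in the same way $rank(\phi(\mathcal{C}'_Y)) = \log_p|\mathcal{R}(\mathcal{C}'_Y)|$. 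Hence it is enough to prove the size identity $|\mathcal{R}(\mathcal{C})| = p^{\kappa_1+\kappa_2}\,|\mathcal{R}(\mathcal{C}'_Y)|$.

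Write $\mathcal{G}$ in the form \eqref{eq:generator matrix 2}, let $\overline{\mathcal{C}}$ be generated by the rows carrying the identity blocks $I_{\kappa_1}$ and $I_{\kappa_2}$ (these rows all have order $p$), and let $\mathcal{C}'$ be generated by the remaining rows, i.e. by the matrix $\mathcal{G}'$ of \eqref{eq:generator matrix of C'}, so that $\mathcal{C} = \overline{\mathcal{C}} + \mathcal{C}'$. By Proposition~\ref{prop: generators of R(C)}, $\mathcal{R}(\mathcal{C})$ is generated by $\overline{\mathcal{C}}$, $\mathcal{C}'$ and the vectors $pP(\vec w_1, \vec w_2) = (\vec 0, pP(w_1'', w_2''))$ with $\vec w_i = (w_i', w_i'') \in \mathcal{C}$; all of the latter lie in $\vec 0 \times p\mathbb{Z}_{p^2}^{\beta}$. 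By \eqref{eq: values of P(u,v)} the value $pP(w_1'', w_2'')$ depends only on the reductions of $w_1''$ and $w_2''$ modulo $p$; since the extra generator of $\mathcal{C}_Y$ over $\mathcal{C}'_Y$ (the first row block of \eqref{eq:generator matrix of C_Y}) lies in $p\mathbb{Z}_{p^2}^{\beta}$ and therefore reduces to $\vec 0$ modulo $p$, we get $\{pP(w_1'', w_2'') : w_i'' \in \mathcal{C}_Y\} = \{pP(c_1'', c_2'') : c_i'' \in \mathcal{C}'_Y\}$. Consequently $\mathcal{R}(\mathcal{C}) = \overline{\mathcal{C}} + \mathcal{R}'$ with $\mathcal{R}' = \langle\mathcal{C}', \{(\vec 0, pP(c_1'', c_2'')) : c_i'' \in \mathcal{C}'_Y\}\rangle$, and $\mathcal{R}'_Y = \mathcal{C}'_Y + \langle\{pP(c_1'',c_2'')\}\rangle = \mathcal{R}(\mathcal{C}'_Y)$ by Proposition~\ref{prop: generators of R(C)} applied over $\mathbb{Z}_{p^2}$ (or Lemma~\ref{R(C)_Y = R(C_Y)}). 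The rows of $\mathcal{G}'$ begin with $\vec 0\ \vec 0$ and the vectors $(\vec 0, pP(\cdot,\cdot))$ have zero $\mathbb{Z}_p^{\alpha}$-part, so both $\mathcal{C}'$ and $\mathcal{R}'$ vanish on the first $\kappa_1+\kappa_2$ coordinates, where $\overline{\mathcal{C}}$ is systematic; hence $\overline{\mathcal{C}} \cap \mathcal{R}' = \{\vec 0\}$ and $|\mathcal{R}(\mathcal{C})| = p^{\kappa_1+\kappa_2}\,|\mathcal{R}'|$.

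It remains to show $|\mathcal{R}'| = |\mathcal{R}(\mathcal{C}'_Y)|$, i.e. that the projection $(v', v'') \mapsto v''$ from $\mathcal{R}'$ onto $\mathcal{R}'_Y = \mathcal{R}(\mathcal{C}'_Y)$ is injective. Take $v = (v', v'') \in \mathcal{R}'$ with $v'' = \vec 0$, and write $v = \vec c + (\vec 0, \vec z)$ with $\vec c = (c', c'') \in \mathcal{C}'$ and $\vec z$ a $\mathbb{Z}_p$-combination of the vectors $pP(c_1'', c_2'')$, so that $v' = c'$ and $c'' = -\vec z \in p\mathbb{Z}_{p^2}^{\beta}$. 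Expanding $\vec c$ in the rows of $\mathcal{G}'$, the order-$p^2$ rows $\vec v_j$ carry the identity block $I_{\delta}$ in the $\mathbb{Z}_{p^2}^{\beta}$-part while the order-$p$ rows carry $\vec 0$ there, so $c'' \in p\mathbb{Z}_{p^2}^{\beta}$ forces every $\mathbb{Z}_{p^2}$-coefficient of the $\vec v_j$ to be divisible by $p$; thus $\vec c$ is a $\mathbb{Z}_p$-combination of the order-$p$ rows of $\mathcal{G}'$ and of the vectors $p\vec v_j$, all of which have zero $\mathbb{Z}_p^{\alpha}$-part. Therefore $v' = c' = \vec 0$ and $v = \vec 0$. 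Combining the two size identities and taking $\log_p$ yields $rank(\Phi(\mathcal{C})) = \kappa_1 + \kappa_2 + rank(\phi(\mathcal{C}'_Y))$.

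The delicate point is this last injectivity step: one must rule out that the $\mathbb{Z}_p^{\alpha}$-coordinates carried by the order-$p^2$ rows of $\mathcal{C}'$ (the blocks $S_{\delta_1}, S_p$ of \eqref{eq:generator matrix 2}) enlarge $|\mathcal{R}'|$ beyond $|\mathcal{R}(\mathcal{C}'_Y)|$. This rests on the systematic shape of $\mathcal{G}'$ inside the $\mathbb{Z}_{p^2}^{\beta}$-block — so that a codeword of $\mathcal{C}'$ with $p$-divisible $Y$-part is automatically a combination of the order-$p$ rows and the $p\vec v_j$ — together with the elementary fact that $p$ annihilates $\mathbb{Z}_p^{\alpha}$. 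Everything else (the mod-$p$ invariance of $P$, and knowing from Theorem~\ref{thm: R(C) is cyclic} and its $\mathbb{Z}_{p^2}$-analogue that $\mathcal{R}(\mathcal{C})$ and $\mathcal{R}(\mathcal{C}'_Y)$ are additive codes whose cardinalities can be read off their types) is routine given the earlier results.
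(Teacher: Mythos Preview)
Your proof is correct and follows essentially the same route as the paper: split $\mathcal{C}$ as $\overline{\mathcal{C}}+\mathcal{C}'$ (the paper calls $\overline{\mathcal{C}}$ by $\mathcal{D}$), use the mod-$p$ invariance of $P$ to reduce all $pP$-terms to ones coming from $\mathcal{C}'$, show $\overline{\mathcal{C}}\cap\mathcal{R}(\mathcal{C}')=\{\vec 0\}$ via the systematic $I_{\kappa_1},I_{\kappa_2}$ columns, and then identify $\mathcal{R}(\mathcal{C}')$ with $\mathcal{R}(\mathcal{C}'_Y)$. The one place you go further than the paper is the final step: the paper simply invokes Lemma~\ref{R(C)_Y = R(C_Y)} to conclude $\mathrm{rank}(\Phi(\mathcal{C}'))=\mathrm{rank}(\phi(\mathcal{C}'_Y))$, whereas that lemma only gives $\mathcal{R}(\mathcal{C}')_Y=\mathcal{R}(\mathcal{C}'_Y)$ and says nothing about sizes; your injectivity argument (using the $I_\delta$ block to force the $\vec v_j$-coefficients into $p\mathbb{Z}_{p^2}$, and then $p\cdot\mathbb{Z}_p^{\alpha}=0$) is exactly what is needed to close that gap.
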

\begin{proof}
	Let $\{ \vec{u}_i = (u_i, u'_i) \}_{i=1}^{\gamma}$ be the first $\gamma$ rows of $\mathcal{G}$
	and $\{ \vec{v}_j = (v_j, v'_j) \}_{j=1}^{\delta}$ be the last $\delta$ rows of $\mathcal{G}$, respectively.
	Then, $\mathcal{C}' = \langle \{ \vec{u}_i \}_{i=\kappa + 1}^{\gamma}, \{ \vec{v}_j \}_{j=1}^{\delta} \rangle$ since $\kappa = \kappa_1 + \kappa_2$.
	Define the code $\mathcal{D} = \langle \{ \vec{u}_i \}_{i=1}^{\kappa} \rangle$.
	Obviously, by Proposition \ref{prop: generators of R(C)}, $\mathcal{R}(\mathcal{D}) = \langle \{ \vec{u}_i \}_{i=1}^{\kappa} \rangle$,
	\[
	\mathcal{R}(\mathcal{C}') = \langle \{ \vec{u}_i \}_{i=\kappa + 1}^{\gamma}, \{ \vec{v}_j \}_{j=1}^{\delta},
	\{ pP(\vec{w}_1,\vec{w}_2): \vec{w}_1, \vec{w}_2 \in \mathcal{C}' \} \rangle.
	\]
	Note that for any codeword $\vec{w}_i \in \mathcal{C}$, $\vec{w}_i$ can be written as $\vec{w}_i = \vec{w}_i' + \vec{w}_i''$,
	for some $\vec{w}_i' \in \mathcal{D}$ and $\vec{w}_i'' \in \mathcal{C}'$, $i=1,2$.
	By Remark \ref{remark: values of P(a,b)}, \[ pP(\vec{w}_1, \vec{w}_2) = pP(\vec{w}''_1, \vec{w}''_2), \]
	then \[ \mathcal{R}(\mathcal{C}) = \langle \mathcal{R}(\mathcal{C}'), \mathcal{R}(\mathcal{D}) \rangle. \]
	Considering that $\mathcal{R}(\mathcal{D}) \cap \mathcal{R}(\mathcal{C}') = \{ \vec{0} \}$ since $pP(\vec{w}_1, \vec{w}_2) = (\vec{0}, pP(w'_1, w'_2))$,
	then, \[ \text{rank}(\Phi(\mathcal{C})) = \text{rank}(\Phi(\mathcal{D})) + \text{rank}(\Phi(\mathcal{C}')) = \kappa + \text{rank}(\Phi(\mathcal{C}')). \]
	Besides, by Lemma \ref{R(C)_Y = R(C_Y)}, $\text{rank}(\Phi(\mathcal{C}')) = \text{rank}(\phi(\mathcal{C}'_Y))$ and the equality holds.
\end{proof}

In Section \ref{sec:kernel}, we discussed the kernel of a $\mathbb{Z}_p \mathbb{Z}_{p^2}$-additive cyclic code.
It's shown that there doesn't always exist a $\mathbb{Z}_p \mathbb{Z}_{p^2}$-additive cyclic code for a given type.
Similarly, there may not exist a $\mathbb{Z}_p \mathbb{Z}_{p^2}$-additive cyclic code for all possible values of the rank for a given type.

\begin{example} \label{exam: z3z9-rank-1}
	\cite[Example 17]{wsk_z3z9_linear_rank_kernel}
	By Remark \ref{remark: conditions of type}, there exists a $\mathbb{Z}_3 \mathbb{Z}_{9}$-additive code $\mathcal{C}$ of type $(\alpha, 13; 2, 4; 1)$.
	The possible dimensions of the kernel are $10$, $11$, $12$, $13$, $14$, $15$, $16$, $17$, $18$.
	But just as Example \ref{exam: z3z9-kernel-1} shows, the $\mathbb{Z}_3 \mathbb{Z}_{9}$-additive cyclic code of type $(\alpha, 13; 2, 4; 1)$ does not exist.
\end{example}

\begin{proposition} \label{prop: R(C)=fh + pf/r over Zp2}
	Let $\mathcal{C} = \langle fh + pf \rangle$ be a cyclic code over $\mathbb{Z}_{p^2}$ of length $n$ with $\gcd(\beta, p) = 1$ and $fhg = x^\beta - 1$.
	Then $\mathcal{R}(\mathcal{C})$ is cyclic and $\mathcal{R}(\mathcal{C}) = \langle fh + p\frac{f}{r} \rangle$ for some $r \mid f$.
\end{proposition}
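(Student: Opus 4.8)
The plan is to mirror the structure of Proposition~\ref{prop: K(C)=fhk+pfg/k over Zp2}, but working "from above" instead of "from below." First I would observe that $\mathcal{R}(\mathcal{C})$ is cyclic: this follows immediately from the $\alpha = 0$ case of Theorem~\ref{thm: R(C) is cyclic}, or directly from Proposition~\ref{prop: generators of R(C)} together with the fact that $P(x_1, x_2) \in \mathbb{Z}_p[x_1, x_2]$, so that $\pi(pP(\vec{w}_1, \vec{w}_2)) = pP(\pi(\vec{w}_1), \pi(\vec{w}_2))$ stays in $\mathcal{R}(\mathcal{C})$. Hence $\mathcal{R}(\mathcal{C})$ is a linear cyclic code over $\mathbb{Z}_{p^2}$, so by \cite[Theorem 6]{p-adic_cyclic_codes} it has the form $\langle \tilde f \tilde h + p \tilde f \rangle$ with $\tilde f \tilde h \tilde g = x^n - 1$.

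Next I would pin down $\tilde f$ and $\tilde h$. Since $\mathcal{C} \subseteq \mathcal{R}(\mathcal{C})$, Lemma~\ref{lemma: f'|f on Zp2} gives $\tilde f \mid f$. For the reverse-type containment, I would use that $\mathcal{R}(\mathcal{C})$ is the \emph{minimum} $\mathbb{Z}_{p^2}$-additive code containing $\mathcal{C}$ with linear Gray image (the $\alpha = 0$ instance of Proposition~\ref{prop: generators of R(C)}): by Proposition~\ref{prop: generators of R(C)}, $\mathcal{R}(\mathcal{C}) = \langle \mathcal{C}, \{ pP(\vec{w}_1, \vec{w}_2) : \vec{w}_1, \vec{w}_2 \in \mathcal{C}\} \rangle$, and since every $pP(\vec{w}_1,\vec{w}_2)$ has order $p$, adjoining these elements changes only the order-$p$ part of the code. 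Concretely, writing $\mathcal{C} = \langle fh, pfg \rangle$ as in \eqref{eq: <fh+pf> = <fh+pfg>}, the new generators $pP(\vec{w}_1,\vec{w}_2)$ all lie in $\langle pf \rangle$ (they are order-$p$ codewords of $\mathbb{Z}_{p^2}$, hence multiples of $p$, and they are obtained from codewords of $\mathcal{C}$, whose order-$p$ subcode contains them only up to $\langle pf\rangle$). Therefore $\mathcal{R}(\mathcal{C}) = \langle fh, \, p f r' \rangle$ for some $r' \mid g$, i.e. its "$h$-part" is still $h$ (the order-$p^2$ behaviour is untouched) while its "$g$-part" shrinks from $pfg$ down to $pfr'$. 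Setting $r = g/r'$ and rewriting $\langle fh, pf(g/r) \rangle = \langle fh + p (f/r)\cdot r' \cdot r /r' \rangle$... more carefully, I would use the correspondence $\langle fh, pfg_k \rangle = \langle fh k_0 \rangle$-type identities from Proposition~\ref{prop: K(C)=fhk+pfg/k over Zp2} in reverse: the code $\langle fh, pf g/r\rangle$ equals $\langle f h + p(f/r')\rangle$ after absorbing $r'$ into $f$, which gives exactly $\mathcal{R}(\mathcal{C}) = \langle fh + p\tfrac{f}{r}\rangle$ with $r \mid f$. (The bookkeeping between "$r \mid f$" and "$g/r' $" is where I would be most careful, matching the convention in Proposition~\ref{prop: R(C)=fh + pf/r over Zp2}'s statement.)

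The main obstacle I anticipate is precisely this last identification: showing that enlarging the order-$p$ subcode of $\langle fh + pf\rangle$ by the polynomials $pP(\vec{w}_1,\vec{w}_2)$ produces a code of the form $\langle fh + p(f/r)\rangle$ rather than something whose $f$-part has genuinely changed. The clean way is to note that $\mathcal{C} \subseteq \mathcal{R}(\mathcal{C}) \subseteq \langle fh \rangle$ (since every generator, including each $pP(\vec{w}_1,\vec{w}_2)$, is a polynomial multiple of $fh$ once we observe $P$ evaluated on codewords of $\mathcal{C}$ lands in $\langle fh\rangle$ modulo $p$), so that $\mathcal{R}(\mathcal{C})$ lies between $\langle fh + pf\rangle$ and $\langle fh \rangle = \langle fh + p f h \cdot(\text{unit})\rangle$; by the classification of ideals of $\mathbb{Z}_{p^2}[x]/(x^n-1)$ and Lemma~\ref{lemma: f'|f on Zp2} applied in both directions, the only codes in this interval are $\langle fh + p(f/r)\rangle$ for divisors $r$ of $f$, which completes the proof. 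The remaining verification that this $\langle fh + p(f/r)\rangle$ is genuinely $\mathcal{R}(\mathcal{C})$ — i.e. it has linear Gray image and is minimal with that property — follows from Proposition~\ref{prop: generators of R(C)} exactly as in the $\mathbb{Z}_p\mathbb{Z}_{p^2}$ case, since the circle-product / linearity machinery of Theorem~\ref{thm: linearity of images of ZpZp2 cyclic codes} was already set up to handle this.
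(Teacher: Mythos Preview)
Your high-level strategy matches the paper's: show $\mathcal{R}(\mathcal{C})$ is cyclic, write it as $\langle f'h' + pf'\rangle$, and use that the generators $pP(\vec{w}_1,\vec{w}_2)$ adjoined to $\mathcal{C}$ are all of order at most $p$ to conclude the order-$p^2$ structure is unchanged. But the execution of that last step goes wrong in two places.

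First, the intermediate form $\mathcal{R}(\mathcal{C}) = \langle fh,\, pfr'\rangle$ with $r' \mid g$ is not what you want: since $\gcd(h,r') \mid \gcd(h,g) = 1$, the elements $pfh$ and $pfr'$ already generate $pf$, so $\langle fh, pfr'\rangle = \langle fh, pf\rangle = \mathcal{C}$ for \emph{every} such $r'$. You have described no enlargement at all. The enlargement must happen by replacing $pf$ with $p(f/r)$ for some $r \mid f$, not by adjusting a factor of $g$.

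Second, the sandwich $\mathcal{C} \subseteq \mathcal{R}(\mathcal{C}) \subseteq \langle fh\rangle$ is false: the ideal $\langle fh\rangle$ in $\mathbb{Z}_{p^2}[x]/(x^n-1)$ has order-$p$ part $\langle pfh\rangle$, which does not contain $pf \in \mathcal{C}$ unless $h$ is a unit. So even $\mathcal{C} \not\subseteq \langle fh\rangle$, and there is no reason the vectors $pP(\vec{w}_1,\vec{w}_2)$ should be polynomial multiples of $fh$.

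The clean argument, which is what the paper does, bypasses both issues: from Proposition~\ref{prop: generators of R(C)} one sees that $\mathcal{C}$ and $\mathcal{R}(\mathcal{C})$ have the same generators of order $p^2$ (the adjoined elements $pP(\vec{w}_1,\vec{w}_2)$ are all of order $\le p$, so the reduction mod $p$ of the two codes coincides). This forces $f'h' = fh$, hence $g' = g$. Then $\mathcal{C} \subseteq \mathcal{R}(\mathcal{C})$ with Lemma~\ref{lemma: f'|f on Zp2} gives $f' \mid f$, i.e.\ $f' = f/r$ for some $r \mid f$, and $\mathcal{R}(\mathcal{C}) = \langle fh,\, p(f/r)\rangle = \langle fh + p(f/r)\rangle$ as claimed.
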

\begin{proof}
	If $\mathcal{D} = \langle (a, 0), (b, fh+pf)\rangle$ is a $\mathbb{Z}_p\mathbb{Z}_{p^2}$-additive cyclic code, then $\mathcal{D}_Y = \mathcal{C}$.
	By Theorem \ref{thm: R(C) is cyclic}, we can get $\mathcal{R}(\mathcal{C})$ is a linear cyclic code over $\mathbb{Z}_{p^2}$.
	Without loss of generality, let $\mathcal{R}(\mathcal{C}) = \langle f'h' + pf' \rangle$ with $f'h'g' = x^n - 1$.
	According to Lemma \ref{R(C)_Y = R(C_Y)}, we have that $\mathcal{R}(\mathcal{D}_Y) = \mathcal{R}(\mathcal{C}) = (\mathcal{R}(\mathcal{D}))_Y$.
	By Proposition \ref{prop: generators of R(C)}, it's clear that $\mathcal{C}$ and $\mathcal{R}(\mathcal{C})$ have the same generators of order $p^2$.
	Thus, $fh = f'h'$, which implies that $g = g'$.
	Then by Lemma \ref{lemma: f'|f on Zp2}, we have $f' \mid f$ and $f = f' r$ for some $r$.
	Hence, \[ \mathcal{R}(\mathcal{C}) = \langle f'h', pf' \rangle = \langle fh, pf_r \rangle, \] where $f_r = f/r$.
\end{proof}

\begin{remark}
	Keep the notations in Proposition \ref{prop: R(C)=fh + pf/r over Zp2}.
	If $r$ is the Hensel lift of $\gcd\left( \overline{f}, \overline{g}_m \right)$, then $\overline{f'} = \overline{f}/\overline{r}$ and $\overline{g}_m$ are coprime,
	i.e., $\gcd\left( \overline{f}/\overline{r}, \overline{g}_m \right) = 1$.
	Thus, by Theorem \ref{thm: linearity of images of ZpZp2 cyclic codes}, if $g_2 \mid g_3 \mid \cdots \mid g_m$, then $\langle fh, pf/r \rangle$ has linear Gray image.
	It's known that $\mathcal{R}(\mathcal{C})$ is the minimum cyclic code with linear Gray image over $\mathbb{Z}_{p^2}$ containing $\mathcal{C}$,
	then $\mathcal{R}(\mathcal{C}) \subseteq \langle fh + p\frac{f}{r} \rangle$.
	Considering that $\mathcal{R}(\mathcal{C}) = \langle fh, pf_r \rangle$ is cyclic, if $r$ is the polynomial of minimum degree dividing $f$,
	then $\mathcal{R}(\mathcal{C}) = \langle fh + p\frac{f}{r} \rangle$.
\end{remark}

\begin{lemma} \label{lemma: properties of generators of R(C)}
	Let $\mathcal{C} = \langle (a, 0), (b, fh+pf) \rangle$ be a $\mathbb{Z}_p \mathbb{Z}_{p^2}$-additive cyclic code, where $fhg = x^\beta-1$ and $\gcd(\beta, p) = 1$.
	Then, $\mathcal{R}(\mathcal{C}) = \langle (a_r, 0), (b_r, fh+p\frac{f}{r}) \rangle$, where $r$ divides $f$ and $a_r$ divides $a$.
\end{lemma}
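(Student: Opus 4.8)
The plan is to reduce the statement to two facts already available: that $\mathcal{R}(\mathcal{C})$ is again a $\mathbb{Z}_p\mathbb{Z}_{p^2}$-additive cyclic code (Theorem \ref{thm: R(C) is cyclic}), and the precise shape of $\mathcal{R}$ over $\mathbb{Z}_{p^2}$ (Proposition \ref{prop: R(C)=fh + pf/r over Zp2}). First I would use Theorem \ref{thm: R(C) is cyclic} to write $\mathcal{R}(\mathcal{C}) = \langle (a_r, 0), (b_r, f_r h_r + pf_r) \rangle$ with $f_r h_r g_r = x^\beta - 1$ and $a_r \mid x^\alpha - 1$, where by Theorem \ref{thm: type and generator polynomials} applied to $\mathcal{R}(\mathcal{C})$ the ``pure-$X$'' subcode $\{(\vec{w}, \vec{0}) \in \mathcal{R}(\mathcal{C})\}$ equals $\langle (a_r, 0) \rangle$. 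This is the canonical generating set we must match against the claimed one.

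Next, to identify the $Y$-part, I would invoke Lemma \ref{R(C)_Y = R(C_Y)}: $\mathcal{R}(\mathcal{C})_Y = \mathcal{R}(\mathcal{C}_Y)$. Projecting $\mathcal{C} = \langle (a,0), (b, fh+pf)\rangle$ onto the second coordinate gives $\mathcal{C}_Y = \langle fh + pf \rangle$, a cyclic code over $\mathbb{Z}_{p^2}$ with $fhg = x^\beta - 1$, so Proposition \ref{prop: R(C)=fh + pf/r over Zp2} yields $\mathcal{R}(\mathcal{C}_Y) = \langle fh + p\tfrac{f}{r} \rangle$ for some $r \mid f$. Comparing the two descriptions of $\mathcal{R}(\mathcal{C})_Y$ forces $f_r h_r = fh$ (hence $g_r = g$) and $f_r = f/r$; thus the order-$p^2$ generator of $\mathcal{R}(\mathcal{C})$ may be taken to be $(b_r, fh + p\tfrac{f}{r})$ with $r \mid f$. (If one prefers to avoid leaning on the full structure description, the divisibility $f_r \mid f$ also follows directly from $\mathcal{C} \subseteq \mathcal{R}(\mathcal{C})$ via Proposition \ref{prop: f|f_k}, after which Proposition \ref{prop: R(C)=fh + pf/r over Zp2} pins down $f_r = f/r$.)

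Finally, for $a_r \mid a$: the codeword $(a,0)$ lies in $\mathcal{C} \subseteq \mathcal{R}(\mathcal{C})$ and has trivial $Y$-component, so it belongs to the pure-$X$ subcode $\{(\vec{w}, \vec{0}) \in \mathcal{R}(\mathcal{C})\} = \langle (a_r, 0) \rangle$; hence $a$ is a multiple of $a_r$ in $\mathbb{Z}_p[x]/(x^\alpha - 1)$, and since both are divisors of $x^\alpha - 1$ this means $a_r \mid a$. Assembling the three observations gives $\mathcal{R}(\mathcal{C}) = \langle (a_r, 0), (b_r, fh + p\tfrac{f}{r}) \rangle$ with $r \mid f$ and $a_r \mid a$, as required.

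I do not expect a genuine obstacle here: the only point needing care is checking that the generating set for $\mathcal{R}(\mathcal{C})$ coming from its cyclicity is compatible with the $Y$-description obtained from $\mathcal{R}(\mathcal{C}_Y)$, which is precisely what Lemma \ref{R(C)_Y = R(C_Y)} together with the normalized form in Proposition \ref{prop: R(C)=fh + pf/r over Zp2} delivers; everything else (that $\mathcal{C}_Y = \langle fh+pf\rangle$, that the pure-$X$ subcode is generated by a single divisor, and the elementary divisibility arguments in $\mathbb{Z}_p[x]/(x^\alpha-1)$) is routine.
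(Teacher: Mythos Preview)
Your proposal is correct and follows essentially the same approach as the paper: write $\mathcal{R}(\mathcal{C})$ in canonical cyclic form via Theorem~\ref{thm: R(C) is cyclic}, identify its $Y$-part with $\mathcal{R}(\mathcal{C}_Y)=\langle fh+pf/r\rangle$ (you via Lemma~\ref{R(C)_Y = R(C_Y)} and Proposition~\ref{prop: R(C)=fh + pf/r over Zp2}; the paper via Proposition~\ref{prop: generators of R(C)} together with Proposition~\ref{prop: R(C)=fh + pf/r over Zp2} and Proposition~\ref{prop: f|f_k}), and get $a_r\mid a$ from $(a,0)\in\mathcal{R}(\mathcal{C})$. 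The only difference is cosmetic: your route through Lemma~\ref{R(C)_Y = R(C_Y)} packages the $Y$-identification in one step, whereas the paper separates the ``same order-$p^2$ generators'' observation from the divisibility $f_r\mid f$.
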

\begin{proof}
	By Theorem \ref{thm: R(C) is cyclic}, let $\mathcal{R}(\mathcal{C}) = \langle (a_r, 0), (b_r, f_r h_r + pf_r)\rangle$ be a $\mathbb{Z}_p\mathbb{Z}_{p^2}$-additive
	cyclic code.
	Since $(a, 0)\in \mathcal{R}(\mathcal{C})$, then $a_r$ divides $a$.
	By Proposition \ref{prop: generators of R(C)}, we know that $\mathcal{C}$ and $\mathcal{R}(\mathcal{C})$ have the same generators of order $p^2$.
	Since $\mathcal{C} \subseteq \mathcal{R}(\mathcal{C})$ and $\mathcal{C}_Y \subseteq (\mathcal{R}(\mathcal{C}))_Y$,
	then, according to Proposition \ref{prop: R(C)=fh + pf/r over Zp2}, $g_r = g$ and thus, $f_rh_r = fh$.
	Moreover, by Proposition \ref{prop: f|f_k}, we have $f_r$ divides $f$.
	Thus, there exists a polynomial $r\in \mathbb{Z}_{p^2}[x]$ such that $f = r f_r$ and $h_r = hr$.
\end{proof}

\begin{theorem} \label{thm: generators of R(C)}
	Let $\mathcal{C} = \langle (a, 0), (b, fh+pf) \rangle$ be a $\mathbb{Z}_p \mathbb{Z}_{p^2}$-additive cyclic code, where $fhg = x^\beta-1$ and $\gcd(\beta, p) = 1$.
	If $r$ is the polynomial of minimum degree dividing $f$ such that $\langle fh + pf/r \rangle$ has linear Gray image,
	then \[ \mathcal{R}(\mathcal{C}) = \langle (a', 0), (b', fh + pf/r)\rangle, \]
	where $\overline{g}_m$ is the $m_{th}$ circle product of $\overline{g}$, $a' = \gcd(a, \overline{\mu}\overline{g} b)$, $b' = b - \overline{\mu}\overline{g} b$
	and $\lambda h + \mu g = 1$.
\end{theorem}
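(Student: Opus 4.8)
The plan is to pin down, one at a time, the three quantities $r$, $a'$, $b'$ in the description of $\mathcal{R}(\mathcal{C})$. By Theorem~\ref{thm: R(C) is cyclic}, $\mathcal{R}(\mathcal{C})$ is a $\mathbb{Z}_p\mathbb{Z}_{p^2}$-additive cyclic code, and by Lemma~\ref{lemma: properties of generators of R(C)} it has the form $\langle(a_r,0),(b_r,fh+p\frac{f}{r_0})\rangle$ for some $r_0\mid f$ and some $a_r\mid a$; moreover, by Proposition~\ref{prop: generators of R(C)}, it is the smallest $\mathbb{Z}_p\mathbb{Z}_{p^2}$-additive cyclic code containing $\mathcal{C}$ with linear Gray image, and $\mathcal{R}(\mathcal{C})=\langle\mathcal{C},\{pP(\vec{w}_1,\vec{w}_2):\vec{w}_1,\vec{w}_2\in\mathcal{C}\}\rangle=\mathcal{C}+(\{\vec{0}\}\times J)$, where $J$ is the ideal of $\mathbb{Z}_{p^2}[x]/(x^\beta-1)$ generated by the (zero-$X$-part) vectors $pP(\vec{w}_1,\vec{w}_2)$. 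So it suffices to prove $r_0=r$ and to identify $a_r=a'$, $b_r=b'$.

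To obtain $r_0=r$: by Lemma~\ref{R(C)_Y = R(C_Y)}, $(\mathcal{R}(\mathcal{C}))_Y=\mathcal{R}(\mathcal{C}_Y)=\langle fh+p\frac{f}{r_0}\rangle$, and since each $pP(\vec{w}_1,\vec{w}_2)$ has trivial $X$-component, linearity of $\Phi(\mathcal{R}(\mathcal{C}))$ forces, via Lemma~\ref{lemma: Phi(C) is linear iff pP(u,v) in C}, linearity of $\phi((\mathcal{R}(\mathcal{C}))_Y)$; hence $\langle fh+p\frac{f}{r_0}\rangle$ has linear Gray image. On the other hand $\langle fh+p\frac{f}{r}\rangle\supseteq\langle fh+pf\rangle=\mathcal{C}_Y$ and, by the definition of $r$, it is a cyclic code over $\mathbb{Z}_{p^2}$ with linear Gray image, so minimality of $\mathcal{R}(\mathcal{C}_Y)$ (Proposition~\ref{prop: R(C)=fh + pf/r over Zp2} and the remark following it) gives $\mathcal{R}(\mathcal{C}_Y)\subseteq\langle fh+p\frac{f}{r}\rangle$, whence $\frac{f}{r}\mid\frac{f}{r_0}$ by Lemma~\ref{lemma: f'|f on Zp2}, i.e.\ $r_0\mid r$. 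Since $\langle fh+p\frac{f}{r_0}\rangle$ has linear image and $r_0\mid f$, minimality of $\deg r$ gives $\deg r\leqslant\deg r_0$; combined with $r_0\mid r$ and unique factorization of $x^\beta-1$ into monic irreducibles over $\mathbb{Z}_{p^2}$ (here $\gcd(\beta,p)=1$), this yields $r_0=r$.

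For $a_r$ and $b_r$ I would use double inclusion with the candidate $\mathcal{R}':=\langle(a',0),(b',fh+p\frac{f}{r})\rangle$, where $a'=\gcd(a,\overline{\mu}\,\overline{g}\,b)$ and $b'=b-\overline{\mu}\,\overline{g}\,b$, working throughout with the three-generator form $\mathcal{C}=\langle(a,0),(b\overline{g},pfg),(b',fh)\rangle$ of Lemma~\ref{3 generators of zpzp2 cyclic codes} and the relation $\lambda\star(0,pfh)+\mu\star(b\overline{g},pfg)=(\overline{\mu}\,\overline{g}\,b,pf)\in\mathcal{C}$. For $\mathcal{C}\subseteq\mathcal{R}'$: the generator $(a,0)$ lies in $\mathcal{R}'$ because $a'\mid a$, and the other generators follow once $(\vec{0},p\frac{f}{r})\in\mathcal{R}'$ is verified — this is precisely where the size of $a'$ is used, the pairwise coprimality of $r$, $g$, $h$ and the identity $1-\overline{\mu}\,\overline{g}=\overline{\lambda}\,\overline{h}$ being needed to solve the ensuing congruences. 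Then $\Phi(\mathcal{R}')$ is linear: by Lemma~\ref{lemma: Phi(C) is linear iff pP(u,v) in C} this reduces (the $pP$'s having trivial $X$-part) to checking that $pP(\vec{w}_1',\vec{w}_2')$ lies in the $X$-trivial subcode of $\mathcal{R}'$ for $\vec{w}_1',\vec{w}_2'\in\langle fh+p\frac{f}{r}\rangle$, which holds by the definition of $r$ (cf.\ Theorem~\ref{thm: linearity of images of ZpZp2 cyclic codes} and the circle product $\overline{g}_m$) together with the fact that the choice $a'=\gcd(a,\overline{\mu}\,\overline{g}\,b)$ makes that subcode large enough; hence $\mathcal{R}(\mathcal{C})\subseteq\mathcal{R}'$. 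For the reverse inclusion one uses $\mathcal{R}(\mathcal{C})=\mathcal{C}+(\{\vec{0}\}\times J)$: since $(b',fh)\in\mathcal{C}$ is retained and $J$ raises the $Y$-part to $\langle fh+p\frac{f}{r}\rangle$, one checks $(\vec{0},p\frac{f}{r})\in\mathcal{R}(\mathcal{C})$ and then that the $X$-trivial subcode of $\mathcal{R}(\mathcal{C})$ — built from $(a,0)$ and those shifts of $(\overline{\mu}\,\overline{g}\,b,pf)$ that become $X$-trivial modulo $(\vec{0},p\frac{f}{r})$ — is generated by $\gcd(a,\overline{\mu}\,\overline{g}\,b)=a'$, while $(b',fh)+(\vec{0},p\frac{f}{r})=(b',fh+p\frac{f}{r})\in\mathcal{R}(\mathcal{C})$ supplies the order-$p^2$ generator. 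Thus $\mathcal{R}(\mathcal{C})=\mathcal{R}'$.

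The main obstacle is this last identification: showing that enlarging the $\mathbb{Z}_{p^2}$-part from $\langle fh+pf\rangle$ to $\langle fh+p\frac{f}{r}\rangle$ forces the $\mathbb{Z}_p$-data to become exactly $\big(\gcd(a,\overline{\mu}\,\overline{g}\,b),\,b-\overline{\mu}\,\overline{g}\,b\big)$ — neither larger nor smaller — i.e.\ reconciling the cyclic closure of $\mathcal{C}\cup\{pP(\vec{w}_1,\vec{w}_2)\}$ with the canonical two-generator form. The bookkeeping is kept manageable by consistently using the three-generator form, the pairwise coprimality of $r$, $g$, $h$, the identity $1-\overline{\mu}\,\overline{g}=\overline{\lambda}\,\overline{h}$, and the annihilator description of $\langle fh+p\frac{f}{r}\rangle$ as $\langle grh+pg\rangle$.
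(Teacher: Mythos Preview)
Your proposal is correct and ultimately hits the same checkpoints as the paper, but it is organised less efficiently. The paper bypasses your separate determination of $r_0=r$ and the double inclusion altogether by defining the auxiliary code
\[
\mathcal{D}=\langle \mathcal{C},\,(0,pf/r)\rangle
\]
first, and then simplifying it algebraically. Using the three-generator form $\mathcal{C}=\langle (a,0),(b',fh),(\overline{\mu}b\overline{g},pf)\rangle$ (equation~\eqref{eq: 3 generators of zpzp2 cyclic codes}), the single subtraction $(\overline{\mu}b\overline{g},pf)-r\star(0,pf/r)=(\overline{\mu}b\overline{g},0)$ immediately collapses $\mathcal{D}$ to $\langle (a',0),(b',fh+pf/r)\rangle$ with $a'=\gcd(a,\overline{\mu}\overline{g}b)$. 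The containment $\mathcal{C}\subseteq\mathcal{D}$ is then free, and $\mathcal{R}(\mathcal{C})\subseteq\mathcal{D}$ follows from Proposition~\ref{prop: generators of R(C)} once one notes that each $pP(\vec{w}_1,\vec{w}_2)$, being of order $p$ with zero $X$-part, lies in $(\mathcal{D}_Y)_p=\langle pf/r\rangle$ and hence in $\langle(0,pf/r)\rangle$. The reverse inclusion is dispatched by the minimality of $\deg r$. What your approach buys is a more explicit audit trail (you isolate exactly where minimality of $r$, Lemma~\ref{R(C)_Y = R(C_Y)}, and the size of $a'$ enter), at the cost of redundant work: the verification that $(0,pf/r)\in\mathcal{R}'$, that $\Phi(\mathcal{R}')$ is linear, and the reverse-inclusion bookkeeping you flag as the ``main obstacle'' are all absorbed into the paper's one-line simplification of $\mathcal{D}$. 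A minor slip: in your reverse-inclusion paragraph you write ``$X$-trivial subcode'' where you clearly mean the $Y$-trivial part (the codewords $(u,0)$).
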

\begin{proof}
	Obviously, the Gray image of $\mathcal{R}(\mathcal{C})_Y = \mathcal{R}(\mathcal{C}_Y) = \langle fh, pf/r \rangle$ is linear.
	Let $\mathcal{D} = \langle \mathcal{C}, (0, pf/r)$, then
	\[
	\begin{aligned}
		\mathcal{D} & = \langle (a, 0), (b', fh), (\overline{\mu} b \overline{g}, pf), (0, pf/r) \rangle \\
		& = \langle (a, 0), (\overline{\mu} b \overline{g}, 0), (b', fh + pf/r) \rangle \\
		& = \langle (a', 0), (b', fh + pf/r) \rangle,
	\end{aligned}
	\]
	where $a' = \gcd(a, \overline{\mu}\overline{g} b)$, $b' = b - \overline{\mu}\overline{g} b$ and $\lambda h + \mu g = 1$.
	
	It's known that $pP(\vec{w}_1, \vec{w}_2) = (\vec{0}, pP(w'_1, w'_2))$ for any $\vec{w}_1, \vec{w}_2 \in \mathcal{C}$.
	By the definition of $\mathcal{R}(\mathcal{C})$, $pP(w'_1, w'_2) \in \mathcal{R}(\mathcal{C})_Y = \mathcal{R}(\mathcal{C}_Y) = \langle fh, pf/r \rangle = \mathcal{D}_Y$.
	Note that $(\mathcal{D}_Y)_p = \langle pfh + pf/r \rangle = \langle pf/r \rangle$, then $(\vec{0}, pP(v'_j, v'_k)) \in \langle (0, pf/r) \rangle$.
	Thus, $pP(\vec{w}_1, \vec{w}_2) \in \mathcal{D}$ since $\mathcal{D} = \langle \mathcal{C}, (0, pf/r)$,
	which implies that $\mathcal{R}(\mathcal{C}) \subseteq \mathcal{D}$.
	Besides, $r$ is the polynomial with minimum degree dividing $f$ satisfying that $\langle fh+pf/r \rangle$ has linear Gray image, then $\mathcal{R}(\mathcal{C}) = \mathcal{D}$.
\end{proof}

\section{Kernel and Rank of Some Special Cyclic Codes}\label{sec:pairs and classification}

Let $f$ be a polynomial over $\mathbb{Z}_{p^2}$, then $f$ is called \textit{basic irreducible} if $\overline{f} \equiv f \pmod{p}$ is irreducible over $\mathbb{Z}_p$.
Let us recall Hensel's Lemma and Hensel's lift.
\begin{lemma} \cite[Theorem 13.11]{Hensel_lemma} \label{Hensel's Lemma}
	(Hensel's Lemma)
	Let $f$ be a monic polynomial in $\mathbb{Z}_{p^2}[x]$ and assume that $\overline{f} = g_1 g_2 $ over $\mathbb{Z}_p$, where $g_1$ and $g_2$ are coprime monic polynomials.
	Then there exist coprime monic polynomials $f_1$ and $f_2$ over $\mathbb{Z}_{p^2}$ such that $f=f_1f_2$ and $\overline{f_1} = g_1$, $\overline{f_2} = g_2$.
	$f \in \mathbb{Z}_{p^2}[x]$ is called the Hensel lift of $g \in \mathbb{Z}_p[x]$ if $\overline{f} = g$ and $f$ divides $x^n-1$ for some $n$ with $\gcd(n,p)=1$.
\end{lemma}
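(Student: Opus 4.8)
The plan is to obtain the factorization by a single \emph{linearized} correction step, exploiting that $p^2=0$ in $\mathbb{Z}_{p^2}$. First I would pick arbitrary monic lifts $G_1,G_2\in\mathbb{Z}_{p^2}[x]$ of $g_1,g_2$ (lifting coefficients entrywise), so that $\deg G_i=\deg g_i$ and $\overline{G_1G_2}=g_1g_2=\overline f$. Since $f$ is monic, $\deg f=\deg g_1+\deg g_2$; hence $f-G_1G_2$ has every coefficient divisible by $p$ and degree $<\deg g_1+\deg g_2$, so we may write $f-G_1G_2=p\,e$ for some $e\in\mathbb{Z}_{p^2}[x]$ with $\deg\overline e<\deg g_1+\deg g_2$ (only $\overline e$ will matter, as $p^2=0$).

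Next I would look for the factors in the shape $f_1=G_1+p\,u$ and $f_2=G_2+p\,v$ with $\deg u<\deg g_1$ and $\deg v<\deg g_2$. Because $p^2=0$ in $\mathbb{Z}_{p^2}$,
\[
  f_1f_2=(G_1+pu)(G_2+pv)=G_1G_2+p(uG_2+vG_1),
\]
so the equality $f_1f_2=f$ is equivalent to the single congruence $\overline u\,g_2+\overline v\,g_1\equiv\overline e\pmod p$ in $\mathbb{Z}_p[x]$. Using that $g_1,g_2$ are coprime, fix $a,b\in\mathbb{Z}_p[x]$ with $ag_1+bg_2=1$, so $(\overline e\,b)g_2+(\overline e\,a)g_1=\overline e$; dividing $\overline e\,b=q\,g_1+r$ with $\deg r<\deg g_1$ and putting $\overline u=r$, $\overline v=\overline e\,a+q\,g_2$ solves the congruence. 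A degree count — using $\deg\overline e<\deg g_1+\deg g_2$ and $\deg(\overline u\,g_2)<\deg g_1+\deg g_2$ — forces $\deg\overline v<\deg g_2$, as required. Lifting $\overline u,\overline v$ arbitrarily back to $\mathbb{Z}_{p^2}[x]$ and setting $f_1,f_2$ as above, the degree bounds make $f_1,f_2$ monic of degrees $\deg g_1,\deg g_2$, one clearly has $\overline{f_i}=g_i$, and $f_1f_2=f$ by construction; the same construction, applied to a coprime monic factorization of $\overline f$ when $f$ divides some $x^n-1$, is exactly what the last sentence names a Hensel lift.

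It remains to check coprimality of $f_1,f_2$ in $\mathbb{Z}_{p^2}[x]$. Lifting a relation $Ag_1+Bg_2=1$ with $A,B\in\mathbb{Z}_p[x]$ to $\tilde A,\tilde B\in\mathbb{Z}_{p^2}[x]$ and reducing mod $p$ gives $\tilde A f_1+\tilde B f_2=1-p\,w$ for some $w\in\mathbb{Z}_{p^2}[x]$; now $1-pw$ is a unit with inverse $1+pw$, since $(1-pw)(1+pw)=1-p^2w^2=1$. Multiplying through by $1+pw$ yields a B\'ezout identity $A'f_1+B'f_2=1$ with $A'=(1+pw)\tilde A$ and $B'=(1+pw)\tilde B$, so $f_1$ and $f_2$ generate the unit ideal and are coprime. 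I expect the only genuinely delicate point to be the degree bookkeeping that keeps $u$ and $v$ small enough to preserve monicity and to make the two correction polynomials land in the right degree ranges; everything else is forced by linearizing the product over $\mathbb{Z}_{p^2}$ together with the coprimality hypothesis.
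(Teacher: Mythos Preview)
The paper does not prove this lemma at all: it is quoted as \cite[Theorem 13.11]{Hensel_lemma} and the last sentence is merely the definition of the term ``Hensel lift,'' so there is nothing in the paper to compare your argument against.

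That said, your proof is correct and is exactly the standard one-step Hensel construction specialized to $\mathbb{Z}_{p^2}$: because $p^2=0$, the product $(G_1+pu)(G_2+pv)$ is linear in $(u,v)$, so a single B\'ezout step over $\mathbb{Z}_p$ suffices, with the division-with-remainder trick ensuring the degree bounds that keep $f_1,f_2$ monic. Your coprimality argument via lifting a B\'ezout relation and inverting the unit $1-pw$ is also the standard one. The only quibble is expository: the final sentence of the lemma is a \emph{definition}, not a claim requiring proof, so your remark that ``the same construction \dots\ is exactly what the last sentence names a Hensel lift'' is unnecessary and slightly misstates what a Hensel lift is (it is not the factorization procedure itself, but a specific polynomial $f$ over $\mathbb{Z}_{p^2}$ dividing some $x^n-1$ and reducing to a given $g$).
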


For a $\mathbb{Z}_p\mathbb{Z}_{p^2}$-additive code $\mathcal{C}$,
let $ker(\mathcal{C}) = \dim(\ker(\Phi(\mathcal{C})))$ and $rank(\mathcal{C}) = \dim(\langle \Phi(\mathcal{C}) \rangle)$.
In this section, we will show that there does not exist a $\mathbb{Z}_p\mathbb{Z}_{p^2}$-additive cyclic code
for all possible values of $rank(\mathcal{C})$ and $ker(\mathcal{C})$.

\subsection{Some Special Factorizations of $x^n - 1$}\label{subsec:factors of x^n-1}
Recall some special factorizations of $x^n-1$ over $\mathbb{F}_p$ mentioned in \cite{kernel_of_4-ary_code} and \cite{Xuan-Gray-images}.

Let $n$ and $m$ be two positive integers such that $\gcd(n,m) = 1$,
then the least positive integer $k$ for which $m^{k} \equiv 1 \pmod{n}$ is called the \textit{multiplicative order} of $m$ modulo $n$.

\begin{enumerate}
	\renewcommand{\labelenumi}{(\theenumi)}
	\item If $p$ is a primitive root modulo $n$ and $n$ is prime, then \[ x^n - 1 = (x-1) (x^{n-1} + x^{n-2} + \cdots + x + 1), \]
	where the two factors are both irreducible over $\mathbb{F}_p$.
	
	\item Assume $n = q^2$ with the prime $q$ and $x^n - 1 = (x-1)ab$,
	where $a = Q_q(x) = 1 + x + \cdots + x^{q-1}$ and $b = Q_{q^2}(x) = 1 + x^q + x^{2q} + \cdots + x^{(q-1)q}$ are both basic irreducible polynomials over $\mathbb{Z}_{p^2}$.
	
	\item Let $n$ be prime and $\ell = (n-1)/2$ be the multiplicative order of $p$ modulo $n$, then over $\mathbb{Z}_{p^2}$, we have $x^n - 1 = (x-1) f_1 f_2$,
	where $\overline{f_1}$ and $\overline{f_2}$ are two irreducible polynomials of degree $\ell$ over $\mathbb{F}_p$.
	Moreover, $\overline{f}_i \otimes \overline{f}_i = \frac{x^n-1}{x-1}$,
	$\overline{f}_i \otimes \overline{f}_i \otimes \overline{f}_i = x^n -1$ and $\gcd(\overline{f_i}, \overline{f_j} \otimes \overline{f_j}) = \overline{f_i}$, where $i=1,2$.
\end{enumerate}
\subsection{Rank and Kernel of Some Special Cyclic Codes over $\mathbb{Z}_{p^2}$}\label{subsec:class 1}
For a cyclic code $\mathcal{C} = \langle (fh + pf) \rangle$ over $\mathbb{Z}_{p^2}$, it can be regarded as a $\mathbb{Z}_p \mathbb{Z}_{p^2}$-additive cyclic code with $\alpha = 0$.
Then as for its type, $\delta = \deg(g)$ and $\gamma = \deg(h)$.

\begin{lemma}\label{lemma: rank=gamma + 2delta + deg(r) over zp2}
	Let $\mathcal{C} = \langle (fh + pf) \rangle$ be a cyclic code over $\mathbb{Z}_{p^2}$ of length $n$, where $\gcd(n, p) = 1$ and $fhg = x^n- 1$.
	Let $r$ be the polynomial such that $\mathcal{R}(\mathcal{C}) = \langle fh + pf/r \rangle$,
	then the dimension of $\mathcal{R}(\mathcal{C})$ is $rank(\mathcal{C}) = \gamma + 2\delta + \deg(r)$.
	Moreover, the dimension of the kernel is $ker(\mathcal{C}) = \gamma + 2 \delta - \deg(k)$, where $k$ divides $g$.
\end{lemma}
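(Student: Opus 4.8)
The plan is to reduce both equalities to counting codewords, since the generator structure of $\mathcal{R}(\mathcal{C})$ and $\mathcal{K}(\mathcal{C})$ has already been obtained. The key auxiliary fact I would record first is the elementary cardinality formula: a cyclic code $\langle f'h' + pf'\rangle$ over $\mathbb{Z}_{p^2}$ with $f'h'g' = x^n-1$ and $\gcd(n,p)=1$ has exactly $p^{\,2\deg g' + \deg h'}$ codewords. This is the statement that such a code has type $p^{2\delta'}p^{\gamma'}$ with $\delta' = \deg g'$, $\gamma' = \deg h'$, cited from \cite[Sec.~2]{p-adic_cyclic_codes}; alternatively it follows directly from Hensel's Lemma (Lemma \ref{Hensel's Lemma}), since $x^n-1$ then factors into pairwise coprime basic irreducibles over $\mathbb{Z}_{p^2}$, so $\mathbb{Z}_{p^2}[x]/(x^n-1)$ is a direct product of the local rings $\mathbb{Z}_{p^2}[x]/(q)$ indexed by these factors $q$, and projecting $\langle f'h' + pf'\rangle = \langle f'h', pf'g'\rangle$ onto each factor the $q$-component is the whole ring if $q\mid g'$, equals $p$ times the ring if $q\mid h'$, and is zero if $q\mid f'$; multiplying the sizes yields $p^{\,2\deg g' + \deg h'}$.

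Next I would apply this to $\mathcal{R}(\mathcal{C})$ and $\mathcal{K}(\mathcal{C})$. By Proposition \ref{prop: R(C)=fh + pf/r over Zp2}, $\mathcal{R}(\mathcal{C}) = \langle fh + pf/r\rangle = \langle (f/r)(hr) + p(f/r)\rangle$, and since $(f/r)(hr)g = fhg = x^n-1$, the three normal-form factors of $\mathcal{R}(\mathcal{C})$ are $f/r$, $hr$ and $g$; hence $|\mathcal{R}(\mathcal{C})| = p^{\,2\deg g + \deg(hr)} = p^{\,2\delta + \gamma + \deg r}$. Similarly, by Proposition \ref{prop: K(C)=fhk+pfg/k over Zp2}, $\mathcal{K}(\mathcal{C}) = \langle fhk, pf(g/k)\rangle = \langle f(hk) + pf\rangle$ with $f(hk)(g/k) = x^n-1$, so its factors are $f$, $hk$ and $g/k$, giving $|\mathcal{K}(\mathcal{C})| = p^{\,2\deg(g/k) + \deg(hk)} = p^{\,2(\delta-\deg k) + (\gamma + \deg k)} = p^{\,2\delta + \gamma - \deg k}$.

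Finally I would convert these cardinalities into $p$-ary dimensions. Since the Gray map $\Phi$ is injective, $|\Phi(\mathcal{R}(\mathcal{C}))| = |\mathcal{R}(\mathcal{C})|$ and $|\Phi(\mathcal{K}(\mathcal{C}))| = |\mathcal{K}(\mathcal{C})|$. By Proposition \ref{prop: R(C)=fh + pf/r over Zp2} together with Proposition \ref{prop: generators of R(C)} (taken with $\alpha = 0$), $\mathcal{R}(\mathcal{C})$ is the smallest cyclic code over $\mathbb{Z}_{p^2}$ containing $\mathcal{C}$ with linear Gray image, so $\Phi(\mathcal{R}(\mathcal{C}))$ is $\mathbb{F}_p$-linear and contains $\Phi(\mathcal{C})$, whence $\langle\Phi(\mathcal{C})\rangle\subseteq\Phi(\mathcal{R}(\mathcal{C}))$; conversely $\Phi(\mathcal{R}(\mathcal{C}))\subseteq\langle\Phi(\mathcal{C})\rangle$ by the definition of $\mathcal{R}(\mathcal{C})$, so $\langle\Phi(\mathcal{C})\rangle=\Phi(\mathcal{R}(\mathcal{C}))$ and $rank(\mathcal{C}) = \dim\langle\Phi(\mathcal{C})\rangle = \log_p|\mathcal{R}(\mathcal{C})| = \gamma + 2\delta + \deg r$. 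For the kernel, $\ker(\Phi(\mathcal{C}))\subseteq\Phi(\mathcal{C})$ because $\vec{0}\in\Phi(\mathcal{C})$, and $\mathcal{K}(\mathcal{C}) = \Phi^{-1}(\ker(\Phi(\mathcal{C})))$, so $\Phi(\mathcal{K}(\mathcal{C})) = \ker(\Phi(\mathcal{C}))$, which is an $\mathbb{F}_p$-subspace; hence $ker(\mathcal{C}) = \dim\ker(\Phi(\mathcal{C})) = \log_p|\mathcal{K}(\mathcal{C})| = \gamma + 2\delta - \deg k$. I do not expect a genuine obstacle: the only points requiring care are the two identifications $\langle\Phi(\mathcal{C})\rangle = \Phi(\mathcal{R}(\mathcal{C}))$ and $\ker(\Phi(\mathcal{C})) = \Phi(\mathcal{K}(\mathcal{C}))$, after which the statement is just the cardinality formula for $\mathbb{Z}_{p^2}$-cyclic codes read off from the already-known generator forms of $\mathcal{R}(\mathcal{C})$ and $\mathcal{K}(\mathcal{C})$.
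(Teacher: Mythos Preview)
Your proposal is correct and follows essentially the same route as the paper: identify the generator polynomials of $\mathcal{R}(\mathcal{C})$ and $\mathcal{K}(\mathcal{C})$ via Propositions \ref{prop: R(C)=fh + pf/r over Zp2} and \ref{prop: K(C)=fhk+pfg/k over Zp2}, read off the type parameters, and convert the cardinality $p^{2\deg g' + \deg h'}$ into a $p$-ary dimension. The paper's proof is terser---it writes down the size of $\mathcal{R}(\mathcal{C})$ in one line and defers the kernel part to \cite[Corollary~4]{kernel_of_4-ary_code}---whereas you make the two identifications $\langle\Phi(\mathcal{C})\rangle = \Phi(\mathcal{R}(\mathcal{C}))$ and $\ker(\Phi(\mathcal{C})) = \Phi(\mathcal{K}(\mathcal{C}))$ explicit and handle the kernel computation in the same self-contained way; this adds clarity without changing the underlying argument. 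One minor wording point: when you say ``the smallest \emph{cyclic} code over $\mathbb{Z}_{p^2}$ containing $\mathcal{C}$ with linear Gray image,'' what you actually use (and what Proposition~\ref{prop: generators of R(C)} gives) is that $\mathcal{R}(\mathcal{C})$ is the smallest \emph{additive} code with linear Gray image containing $\mathcal{C}$; cyclicity is incidental to the inclusion $\langle\Phi(\mathcal{C})\rangle\subseteq\Phi(\mathcal{R}(\mathcal{C}))$.
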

\begin{proof}
	By Proposition \ref{prop: R(C)=fh + pf/r over Zp2}, we have $\mathcal{R}(\mathcal{C}) = \langle (f'h' + pf') \rangle$ with $f' = f/r$, $h' = hr$ and $g' = g$.
	Then the size of $\mathcal{R}(\mathcal{C})$ is $(p^2)^{\deg(g')} p^{\deg(h')} = p^{2\deg(g) + \deg(h) + \deg(r)}$,
	i.e., $rank(\mathcal{C}) = \gamma + 2\delta + \deg(r)$.
	As for $ker(\mathcal{C})$, it follows the proof of \cite[Corollary 4]{kernel_of_4-ary_code}.
\end{proof}

\begin{proposition}
	Let $\mathcal{C} = \langle (fh + pf) \rangle$ be a cyclic code over $\mathbb{Z}_{p^2}$ of length $n$, where $\gcd(n, p) = 1$ and $fhg = x^n- 1$.
	Assume that $n$ is prime and $p$ is a primitive root modulo $n$.
	If $g = Q_n(x) = x^{n-1} + \cdots + 1$ and $h = 1$, then $ker(\mathcal{C}) = \gamma + \delta = n$ and $rank(\mathcal{C}) = \gamma + 2\delta + 1$.
	In all other cases, the Gray image of $\mathcal{C}$ is linear, i.e., $rank(\mathcal{C}) = ker(\mathcal{C}) = \gamma + 2\delta$.
\end{proposition}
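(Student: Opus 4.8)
The plan is to exploit the very short list of divisors that $x^n-1$ has in this situation. Since $n$ is prime and $p$ is a primitive root modulo $n$, factorization~(1) of Section~\ref{subsec:factors of x^n-1} gives $x^n-1=(x-1)Q_n(x)$ over $\mathbb{F}_p$ with both factors irreducible; the same identity holds over $\mathbb{Z}_{p^2}$ (indeed $(x-1)Q_n(x)=x^n-1$ is a polynomial identity, and $Q_n(x)=1+x+\cdots+x^{n-1}$ is its own Hensel lift, cf.\ Lemma~\ref{Hensel's Lemma}), and $x-1$ and $Q:=Q_n$ are both basic irreducible. Consequently the only monic divisors of $x^n-1$ over $\mathbb{Z}_{p^2}$ are $1$, $x-1$, $Q$ and $x^n-1$, so a code $\mathcal{C}=\langle fh+pf\rangle$ is pinned down by how the two irreducible factors are distributed among $f,h,g$ in $fhg=x^n-1$ --- only finitely many triples. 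First I would list these triples, recording $\gamma=\deg h$ and $\delta=\deg g$ for each.

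Next I would determine, triple by triple, the polynomials $r\mid f$ and $k\mid g$ of Lemma~\ref{lemma: rank=gamma + 2delta + deg(r) over zp2} (coming from Proposition~\ref{prop: R(C)=fh + pf/r over Zp2} and Theorem~\ref{thm: generators of K(C)}). The clean tool is circle products. The roots of $\overline{Q}$ are the nontrivial $n$-th roots of unity $\xi,\xi^2,\dots,\xi^{n-1}$, and since $n$ is prime the set $\{1,\dots,n-1\}+\{1,\dots,n-1\}$ covers every residue modulo $n$; hence $\overline{Q}\otimes\overline{Q}=x^n-1$ and so $\overline{Q}_m=x^n-1$ for every $m\geqslant 2$. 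Likewise the $m$-th circle product of $x-1$ is again $x-1$, and that of a nonzero constant is $1$. Because the carry polynomial $P(x_1,x_2)$ of Lemma~\ref{phi(u+v) = phi(u)+phi(v)+...} cannot be realised by an affine form over $\mathbb{F}_p$ (such a form cannot match the step function of \eqref{eq: values of P(u,v)}), one has $m=\deg P(x_1,x_2)\geqslant 2$, so these identities apply. Feeding them into the linearity discussion following Proposition~\ref{prop: R(C)=fh + pf/r over Zp2}, the minimal $r\mid f$ is the Hensel lift of $\gcd(\overline{f},\overline{g}_m)$, which equals $\overline{f}$ when $Q\mid g$ and equals $1$ when $g\in\{1,x-1\}$; running through the triples this gives $\deg r=0$ in every case except $g=Q_n$, $h=1$ (forcing $f=x-1$), where $\deg r=\deg f=1$. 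The parallel analysis of $k$ gives $\deg k=0$ in every case except the same one, where the minimal $k\mid g$ making $\langle fhk+pf\rangle$ linear is $k=Q$ --- because $\langle(x-1)Q+p(x-1)\rangle=\langle p(x-1)\rangle$ has all codewords of order $p$, so its Gray image is linear by Lemma~\ref{phi(u+v) = phi(u)+phi(v)+...} --- whence $\deg k=\deg Q=n-1$.

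The conclusion then falls out of Lemma~\ref{lemma: rank=gamma + 2delta + deg(r) over zp2}: in every triple other than $g=Q_n$, $h=1$ we get $rank(\mathcal{C})=\gamma+2\delta+\deg r=\gamma+2\delta=\gamma+2\delta-\deg k=ker(\mathcal{C})$, i.e.\ $\Phi(\mathcal{C})$ is linear; and in the remaining triple, where $\gamma=0$ and $\delta=n-1$, we obtain $rank(\mathcal{C})=\gamma+2\delta+1$ and $ker(\mathcal{C})=\gamma+2\delta-(n-1)=\gamma+\delta$. I expect the main obstacle to be the circle-product step $\overline{Q}_m=x^n-1$ for $m\geqslant 2$ --- this is exactly where the hypotheses that $n$ is prime and $p$ a primitive root are used, together with $m\geqslant 2$ --- and, secondarily, the bookkeeping that ensures $r$ and $k$ are really the \emph{minimal} such divisors; equivalently, that $\mathcal{C}=\langle x-1\rangle$ is genuinely non-linear. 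That last point admits a one-line check: $pP(x-1,x-1)$ is the constant polynomial $p$, and evaluating at $x=1$ shows $p\notin\langle x-1\rangle$ (since $p\neq 0$ in $\mathbb{Z}_{p^2}$), so by Lemma~\ref{lemma: Phi(C) is linear iff pP(u,v) in C} the image $\Phi(\langle x-1\rangle)$ is not linear.
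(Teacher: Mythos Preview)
Your proposal is correct and reaches the same conclusions, but the route differs from the paper's in one respect worth noting. For the exceptional case $g=Q_n$, $h=1$, $f=x-1$, both you and the paper verify non-linearity by a direct computation of $pP$ on a generator; the paper does it via the explicit standard generator matrix and the rows $\vec{v}_i,\vec{v}_j$, while you do the same calculation in polynomial form with $\vec{u}=\vec{v}=x-1$, obtaining $pP(x-1,x-1)=p\notin\langle x-1\rangle$. The deductions of $ker$ and $rank$ then coincide, since $g$ and $f$ each have exactly two monic divisors. Where your approach genuinely diverges is the ``all other cases'' part: the paper simply cites \cite[Theorem~12 \& Corollary~6]{kernel_of_4-ary_code}, whereas you supply a self-contained argument via circle products, computing $\overline{g}_m$ for $g\in\{1,x-1,Q_n,(x-1)Q_n\}$ and reading off $\gcd(\overline{f},\overline{g}_m)$. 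This is a nice bonus.

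One point to tighten: the assertion that ``the minimal $r\mid f$ is the Hensel lift of $\gcd(\overline{f},\overline{g}_m)$'' is not literally what the Remark after Proposition~\ref{prop: R(C)=fh + pf/r over Zp2} says --- that Remark only shows this choice of $r$ \emph{suffices} (and even that under the side conditions of Theorem~\ref{thm: linearity of images of ZpZp2 cyclic codes}, namely $(x-1)\mid g$, $g_2\mid g_3\mid\cdots\mid g_m$, and the componentwise-product hypothesis), not that it is minimal. In the present setting this does no harm: for the linear cases one can bypass Theorem~\ref{thm: linearity of images of ZpZp2 cyclic codes} entirely (if $g=1$ there are no order-$p^2$ codewords; if $f=1$ then $p\in\mathcal{C}$ so every $pP(\vec{u},\vec{v})\in\mathcal{C}$; if $g=x-1$, $f=Q$ the code is the repetition code and $pP$ of constant vectors is constant), and for the non-linear case your closing non-linearity check together with the two-divisor structure of $f=x-1$ pins down $r=x-1$ and $k=Q_n$ exactly. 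So the argument is sound, but you should state these easy direct verifications rather than lean on the Remark as if it were an equality.
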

\begin{proof}
	If $g = Q_n(x) = x^{n-1} + \cdots + 1$ and $h = 1$, then $\mathcal{C} = \langle x - 1 \rangle$ and the generator matrix of $\mathcal{C}$ in standard form is
	\[
	G =
	\begin{pmatrix}
		1 & 0 & 0 & \cdots & 0 & -1 \\
		0 & 1 & 0 & \cdots & 0 & -1 \\
		0 & 0 & 1 & \cdots & 0 & -1 \\
		\vdots & \vdots & \vdots &  & \vdots & \vdots \\
		0 & 0 & 0 & \cdots & 1 & -1 \\
	\end{pmatrix}.
	\]
	Without loss of generality, let $\vec{v}_i$ be the rows of $G$, $1 \leqslant i \leqslant n-1$.
	Note that \[ pP(\vec{v}_i, \vec{v}_j) = (0, 0, \cdots, 0, p) \notin \mathcal{C}, \]
	since $P(-1, -1) = P(p^2-1, p^2-1) = P(p-1, p-1) = 1$.
	Hence, the Gray image of $\mathcal{C}$ is not linear.
	Since $g$ is irreducible, i.e., the divisors of $g$ are just $1$ and $g$, we get $ker(\mathcal{C}) = \gamma + 2 \delta - \deg(g) = n$
	and $rank(\mathcal{C}) = \gamma + 2\delta + 1$.
	As for other cases, see \cite[Theorem 12 \& Corollary 6]{kernel_of_4-ary_code}.
\end{proof}

Then we consider the following case: there are $3$ irreducible factors of $x^n-1$.

From now on, in Table \ref{table: kernel and rank of zp2 cyclic codes 1}, assume $n = q^2$ with the prime $q$ and $x^n - 1 = (x-1)ab$,
where $a = Q_q(x) = 1 + x + \cdots + x^{q-1}$ and $b = Q_{q^2}(x) = 1 + x^q + x^{2q} + \cdots + x^{(q-1)q}$ are both irreducible.
If $n=q$ is prime,then $x^n - 1 = (x-1)ab$ with $\deg(a) = \deg(b) = d$,
where $d = (n-1)/2$ is the multiplicative order of $p$ modulo $n$.
By \cite{Xuan-Gray-images}, it's known that the Gray image of $\langle fh + pf \rangle$ is linear if and only if $f$ is coprime with $\overline{g_3}$.

\begin{table}[h]
	\caption{Parameters of cyclic codes when $n=q$ or $q^2$ }
	\label{table: kernel and rank of zp2 cyclic codes 1}
	\centering
	\begin{tabular}{c|c|c|c|c|c}
		\hline\hline
		$n$ & $f$ & $g$ & $h$ & $rank(\mathcal{C})$ & $ker(\mathcal{C})$ \\
		\hline\hline
		Any & $1$ & $\ast$ & $\ast$ & $\gamma + 2\delta$ & $\gamma + 2\delta$   \\
		\hline
		Any & $\ast$ & $1$ & $\ast$ & $\gamma + 2\delta$ & $\gamma + 2\delta$   \\
		\hline
		Any & $\ast$ & $x-1$ & $\ast$ & $\gamma + 2\delta$ & $\gamma + 2\delta$   \\
		\hline
		$q^2$ & $a$ & $(x-1)b$ & $1$ & $\gamma + 2\delta + t_a$ & $\gamma + \delta + 1$    \\
		\hline
		$q^2$ & $a$ & $b$ & $x-1$ & $\gamma + 2\delta + t_a$ & $\gamma + \delta$    \\
		\hline
		$q^2$ & $b$ & $(x-1)a$ & $1$ & $\gamma + 2\delta$ & $\gamma + 2\delta$   \\
		\hline
		$q^2$ & $b$ & $a$ & $x-1$ & $\gamma + 2\delta + t_b$ &  $\gamma + \delta$    \\
		\hline
		$q^2$ & $x-1$ & $a$ & $b$ & $\gamma + 2\delta + 1$ &  $\gamma + \delta$  \\
		\hline
		$q^2$ & $x-1$ & $b$ & $a$ & $\gamma + 2\delta + 1$ &  $\gamma + \delta$   \\
		\hline
		$q^2$ & $x-1$ & $ab$ & $1$ & $\gamma + 2\delta + 1$ &  $\gamma + \delta$   \\
		\hline
		$q^2$ & $(x-1)a$ & $b$ & $1$ & $\gamma + 2\delta + t_a + 1$ &  $\gamma + \delta$   \\
		\hline
		$q^2$ & $(x-1)b$ & $a$ & $1$ & $\gamma + 2\delta + 1$ & $\gamma + \delta$   \\
		\hline\hline
		$q$ & $a$ & $(x-1)b$ & $1$ & $\gamma + 3\delta - 1$ & $\gamma + \delta + 1$ \\
		\hline
		$q$ & $a$ & $b$ & $x-1$ & $\gamma + 3\delta$ & $\gamma + \delta$   \\
		\hline
		$q$ & $b$ & $(x-1)a$ & $1$ & $\gamma + 3\delta - 1$ & $\gamma + \delta + 1$ \\
		\hline
		$q$ & $b$ & $a$ & $x-1$ & $\gamma + 3\delta$ &  $\gamma + \delta$  \\
		\hline
		$q$ & $x-1$ & $a$ & $b$ & $\gamma + 2\delta + 1$ ($\gamma + 2\delta$) & $\gamma + \delta$ ($\gamma + 2\delta$)   \\
		\hline
		$q$ & $x-1$ & $b$ & $a$ & $\gamma + 2\delta + 1$ ($\gamma + 2\delta$) & $\gamma + \delta$ ($\gamma + 2\delta$)   \\
		\hline
		$q$ & $x-1$ & $ab$ & $1$ & $\gamma + 2\delta + 1$ & $\gamma + \delta$   \\
		\hline
		$q$ & $(x-1)a$ & $b$ & $1$ & $\gamma + 3\delta + 1$ ($\gamma + 3\delta$) & $\gamma + \delta$  \\
		\hline
		$q$ & $(x-1)b$ & $a$ & $1$ & $\gamma + 3\delta + 1$ ($\gamma + 3\delta$) & $\gamma + \delta$   \\
		\hline\hline
	\end{tabular}
\end{table}

\begin{remark}
	In Table \ref{table: kernel and rank of zp2 cyclic codes 1},
	all possible kernels and ranks of the codes obtained from the factorization $x^n-1=(x-1)ab$
	are completely determined, where $t_a = \deg(a)$ and $t_b = \deg(b)$.
	The $\ast$ means that it can take all possible values (or polynomials).
	In fact, the results in Table \ref{table: kernel and rank of zp2 cyclic codes 1} are the generalization of those obtained in \cite[Table 1]{kernel_of_4-ary_code}.
	
	Besides, $s(t)$ means the value of the $rank$ or $ker$ is $s$ when $p\geqslant 3$ and $t$ when $p=2$.
	It's interesting that the classifications of $\mathbb{Z}_2 \mathbb{Z}_4$ and $\mathbb{Z}_p \mathbb{Z}_{p^2} (p>2)$ are different.
\end{remark}

Let $n$ be prime, if $d_N=(n-1)/N$ is the multiplicative order of $p$ modulo $n$ for some positive integer $N$, then $x^n-1=(x-1)f_1\cdots f_N$,
where $\overline{f_i}$ is an irreducible polynomial of degree $d_N$ over $\mathbb{F}_p$, $1\leqslant i\leqslant N$.
Namely, when $n$ is prime, the factorization of $x^n-1$ is completely given.
As for the classification of cyclic codes of length $n$ over $\mathbb{Z}_{p^2}$, \cite[Theorem 6.34]{Finite_Fields} may be helpful.

\subsection{Rank and Kernel of Some Special Cyclic Codes over $\mathbb{Z}_{p} \mathbb{Z}_{p^2}$}\label{subsec:class 2}

Let $\mathcal{C} = \langle (a, 0), (b, fh + pf) \rangle$ be a $\mathbb{Z}_p \mathbb{Z}_{p^2}$-additive cyclic code.
By Theorem \ref{thm: generators of R(C)}, if $a$ divides $b$, then $\mathcal{R}(\mathcal{C}) = \langle (a, 0), (b', fh+pf/r) \rangle$, for some $a'$,
where $r$ divides $f$.
Then, by Lemma \ref{lemma: rank=gamma + 2delta + deg(r) over zp2}, we have
\begin{theorem} \label{thm: rank=gamma + 2delta + deg(r) over zpzp2}
	Let $\mathcal{C} = \langle (a, 0), (b, fh + pf) \rangle$ be a $\mathbb{Z}_p \mathbb{Z}_{p^2}$-additive cyclic code of type $(\alpha, \beta; \gamma, \delta; \kappa)$,
	with $\gcd(\beta, p) = 1$, $fhg = x^\beta - 1$.
	If $\mathcal{C}$ is separable, then $rank(\mathcal{C}) = \gamma + 2\delta + \deg(r)$ and $ker(\mathcal{C}) = \gamma + 2 \delta - \deg(k)$,
	where $r$ is the polynomial such that $\mathcal{R}(\mathcal{C}_Y) = \langle fh + pf/r \rangle$ and $k$ divides $g$.
\end{theorem}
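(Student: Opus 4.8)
The plan is to exploit separability to decompose $\mathcal{C}$ into its $\mathbb{Z}_p$- and $\mathbb{Z}_{p^2}$-parts, and then reduce the two formulas to the $\mathbb{Z}_{p^2}$-case already handled in Lemma~\ref{lemma: rank=gamma + 2delta + deg(r) over zp2}.

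First I would record that, since $\mathcal{C}$ is separable, $\mathcal{C} = \mathcal{C}_X \times \mathcal{C}_Y$, where $\mathcal{C}_X = \langle\gcd(a,b)\rangle$ is a linear cyclic code over $\mathbb{Z}_p$ and $\mathcal{C}_Y = \langle fh+pf\rangle$ is a linear cyclic code over $\mathbb{Z}_{p^2}$. Because the Gray map leaves the first $\alpha$ coordinates untouched, $\Phi(\mathcal{C}) = \mathcal{C}_X \times \phi(\mathcal{C}_Y)$; as $\mathcal{C}_X$ is already $\mathbb{Z}_p$-linear and both factors contain the zero word, taking linear spans commutes with the product, so $\langle\Phi(\mathcal{C})\rangle = \mathcal{C}_X \times \langle\phi(\mathcal{C}_Y)\rangle$ and therefore $rank(\mathcal{C}) = \dim(\mathcal{C}_X) + rank(\mathcal{C}_Y)$. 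For the kernel I would combine Proposition~\ref{K(C) = C_X * K(C_Y)}, which gives $\mathcal{K}(\mathcal{C}) = \mathcal{C}_X \times \mathcal{K}(\mathcal{C}_Y)$, with the identities $\ker(\Phi(\mathcal{C})) = \Phi(\mathcal{K}(\mathcal{C}))$ and $\ker(\phi(\mathcal{C}_Y)) = \phi(\mathcal{K}(\mathcal{C}_Y))$ coming from Proposition~\ref{prop: ker(Phi(C)) and K(C)} and its $\mathbb{Z}_{p^2}$-analogue, obtaining $\ker(\Phi(\mathcal{C})) = \mathcal{C}_X \times \ker(\phi(\mathcal{C}_Y))$ and hence $ker(\mathcal{C}) = \dim(\mathcal{C}_X) + ker(\mathcal{C}_Y)$.

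Next I would carry out the bookkeeping on the type parameters. Separability forces $\kappa_2 = \delta_1 = 0$, so $\kappa = \kappa_1$; moreover, for a separable code the subcode $\{(\vec{u}_1\,|\,\vec{0})\in\mathcal{C}\}$ equals $\mathcal{C}_X\times\{\vec{0}\}$, whence $\dim(\mathcal{C}_X) = \kappa_1 = \kappa$. By Theorem~\ref{thm: type and generator polynomials}, $\gamma = \alpha - \deg(a) + \deg(h) = \kappa_1 + \deg(h)$, so $\deg(h) = \gamma - \kappa$, while $\delta = \deg(g)$. Consequently, viewing $\mathcal{C}_Y = \langle fh+pf\rangle$ as a cyclic code over $\mathbb{Z}_{p^2}$, its type parameters (in the sense of Lemma~\ref{lemma: rank=gamma + 2delta + deg(r) over zp2}) are $\gamma_Y = \deg(h) = \gamma - \kappa$ and $\delta_Y = \deg(g) = \delta$.

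Finally I would apply Lemma~\ref{lemma: rank=gamma + 2delta + deg(r) over zp2} to $\mathcal{C}_Y$. Let $r\mid f$ be the polynomial with $\mathcal{R}(\mathcal{C}_Y) = \langle fh+pf/r\rangle$, which exists by Proposition~\ref{prop: R(C)=fh + pf/r over Zp2}, and let $k\mid g$ be the accompanying divisor; then $rank(\mathcal{C}_Y) = \gamma_Y + 2\delta_Y + \deg(r) = (\gamma-\kappa) + 2\delta + \deg(r)$ and $ker(\mathcal{C}_Y) = (\gamma-\kappa) + 2\delta - \deg(k)$. Substituting into the two identities of the first step, the term $\kappa = \dim(\mathcal{C}_X)$ cancels the $-\kappa$ in $\deg(h) = \gamma-\kappa$, and we obtain $rank(\mathcal{C}) = \gamma + 2\delta + \deg(r)$ and $ker(\mathcal{C}) = \gamma + 2\delta - \deg(k)$. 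The one point needing care is the passage from the word ``separable'' to the numerical identities $\dim(\mathcal{C}_X) = \kappa$ and $\deg(h) = \gamma-\kappa$; beyond that everything is a direct substitution, with the real content already packaged into Proposition~\ref{K(C) = C_X * K(C_Y)} and Lemma~\ref{lemma: rank=gamma + 2delta + deg(r) over zp2}, so I do not anticipate a genuine difficulty.
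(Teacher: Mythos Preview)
Your proposal is correct and follows essentially the same route the paper has in mind: the paper states the theorem immediately after the sentence ``Then, by Lemma~\ref{lemma: rank=gamma + 2delta + deg(r) over zp2}, we have'' and gives no further argument, so the intended proof is precisely the reduction to the $\mathbb{Z}_{p^2}$-case via separability that you spell out. Your use of Proposition~\ref{K(C) = C_X * K(C_Y)} for the kernel side and the direct product decomposition $\langle\Phi(\mathcal{C})\rangle = \mathcal{C}_X \times \langle\phi(\mathcal{C}_Y)\rangle$ for the rank side, together with the bookkeeping $\dim(\mathcal{C}_X)=\kappa=\kappa_1$ and $\deg(h)=\gamma-\kappa$, simply makes explicit what the paper leaves implicit.
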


\begin{theorem} \label{thm: zpzp2 possible ranks and kernels}
	Let $\mathcal{C} = \langle (a, 0), (b, fh + pf) \rangle$ be a $\mathbb{Z}_p \mathbb{Z}_{p^2}$-additive cyclic code of type $(\alpha, \beta; \gamma, \delta; \kappa)$,
	with $\gcd(\beta, p) = 1$, $fhg = x^\beta - 1$ and $\alpha = \gamma \leqslant \delta - 1$, $\beta = 2\delta + 1 \geqslant 5$.
	Assume that $\beta$ is prime and $\delta = (\beta-1)/2$ is the multiplicative order of $p$ modulo $\beta$.
	Then all the possible values of $(rank(\mathcal{C}),ker(\mathcal{C}))$ are $(\gamma+3\delta, \gamma+\delta)$, $(\gamma+3\delta+1, \gamma+\delta)$
	and $(\gamma+3\delta-1, \gamma+\delta)$.
\end{theorem}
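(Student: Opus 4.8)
The plan is to use the special factorization of $x^\beta-1$ to reduce to finitely many shapes of $(a,b,f,g,h)$, and then feed these into the structural descriptions of $\mathcal R(\mathcal C)$ and $\mathcal K(\mathcal C)$ from Sections \ref{sec:kernel}--\ref{sec:rank}. Since $\beta$ is prime and $\delta=(\beta-1)/2$ is the multiplicative order of $p$ modulo $\beta$, item~(3) of Subsection \ref{subsec:factors of x^n-1} gives $x^\beta-1=(x-1)f_1f_2$ over $\mathbb Z_{p^2}$ with $\overline{f_1},\overline{f_2}$ irreducible of degree $\delta$, together with $\overline{f_i}\otimes\overline{f_i}=(x^\beta-1)/(x-1)$ and $\overline{f_i}\otimes\overline{f_i}\otimes\overline{f_i}=x^\beta-1$; hence the $t$-th circle product of $\overline{f_i}$ equals $x^\beta-1$ for every $t\geqslant 3$, and since $m=\deg P\geqslant 3$ for odd $p$ (indeed $P$ vanishes whenever $x_1=0$ or $x_2=0$, so $x_1x_2\mid P$, and $P(1,1)=0$ rules out degree $2$), we get $\overline{g}_m=x^\beta-1$ whenever $g\in\{f_1,f_2\}$. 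By Theorem \ref{thm: type and generator polynomials}, $\delta=\deg g$, and the only monic divisors of $x^\beta-1$ of degree $\delta$ are $f_1,f_2$, so after interchanging them we may take $g=f_1$ and $fh=(x-1)f_2$. From $\alpha=\gamma=\alpha-\deg a+\deg h$ we get $\deg a=\deg h$, and $\gamma=\alpha\leqslant\delta-1$ (with $\delta\geqslant 2$) forces $\deg h\leqslant\delta-1$; since the divisors of $(x-1)f_2$ have degrees $0,1,\delta,\delta+1$, the only options are (I) $h=x-1$, $f=f_2$, $\deg a=1$ (so $a$ is a monic linear divisor of $x^\alpha-1$, say $a=x-1$), or (II) $h=1$, $f=(x-1)f_2$, $a=1$. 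Case (I) is then split further according to whether $a\mid b$, which governs $\gcd(a,b\overline g)$ and hence $\kappa_2$ and separability.

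For the kernel I would argue uniformly. Write $\mathcal C_Y=\langle fh+pf\rangle$ with $g=f_1$. Because $\overline g_m=x^\beta-1$, Proposition \ref{prop: K(C)=fhk+pfg/k over Zp2} together with the linearity criterion (Theorem \ref{thm: linearity of images of ZpZp2 cyclic codes} applied with $\alpha=0$) shows that for any \emph{proper} divisor $k\mid g$ the maximal subcode $\langle (a,0),(b_k,fhk+pf)\rangle$ has non-linear image (its ``$f$-part'' is not coprime with $x^\beta-1$), whereas for $k=g=f_1$ one has $fhf_1+pf\equiv pf\pmod{x^\beta-1}$, so $\langle (a,0),(b',pf)\rangle$ has all codewords of order $p$ and hence a linear Gray image. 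Thus by Theorem \ref{thm: generators of K(C)}, $\mathcal K(\mathcal C)=\langle (a,0),(b',pf)\rangle$; counting the size of this code — via $ker(\mathcal C)=\gamma+2\delta-\deg k$ from Theorem \ref{thm: rank=gamma + 2delta + deg(r) over zpzp2} in the separable subcases, and a direct count of $|\mathcal K(\mathcal C)|$ (whose $Y$-projection lies in the $p$-torsion $p\langle f\rangle$) otherwise — yields $ker(\mathcal C)=\gamma+2\delta-\deg f_1=\gamma+\delta$ in every case.

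For the rank, the same circle-product fact forces the minimal $r\mid f$ with $\langle fh+pf/r\rangle$ linear to have $f/r$ coprime with $x^\beta-1$, i.e.\ $r=f$; hence $\mathcal R(\mathcal C_Y)=\langle fh+p\rangle$, of size $p^{\deg h+2\deg g+\deg f}=p^{\beta+\delta}=p^{3\delta+1}$. Now Theorem \ref{thm: generators of R(C)} (or Theorem \ref{thm: rank=gamma + 2delta + deg(r) over zpzp2} when $\mathcal C$ is separable) applies. In case (II), and in case (I) when $(x-1)\nmid b$, one checks $(1,0)\in\mathcal R(\mathcal C)$ — directly when $a=1$, and when $a=x-1$ because $a'=\gcd(a,\overline\mu b\overline g)=1$ (here $\overline\mu\overline g\equiv 1\pmod{x-1}$ and $b(1)\neq 0$) while $b'=b-\overline\mu b\overline g$ is divisible by $x-1$ — so $\mathcal R(\mathcal C)=\mathbb Z_p^\alpha\times\mathcal R(\mathcal C_Y)$ and $rank(\mathcal C)=\alpha+3\delta+1=\gamma+3\delta+1$. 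In case (I) with $(x-1)\mid b$, $\mathcal C$ is separable with $\mathcal C_X=\langle x-1\rangle$, and Theorem \ref{thm: rank=gamma + 2delta + deg(r) over zpzp2} gives $rank(\mathcal C)=\gamma+2\delta+\deg f_2=\gamma+3\delta$. The remaining value $\gamma+3\delta-1$ arises from the degenerate configuration in which the first-coordinate generator partially collapses (notably when $\alpha$ is as small as $\delta-1$ permits and $a$ is forced to be a non-unit annihilating part of $\mathcal C_X$); there the separable shortcut is unavailable and an explicit count of $|\mathcal R(\mathcal C)|$ is required. Assembling the subcases, and recalling $ker(\mathcal C)=\gamma+\delta$ throughout, produces exactly the three pairs $(\gamma+3\delta,\gamma+\delta)$, $(\gamma+3\delta+1,\gamma+\delta)$, $(\gamma+3\delta-1,\gamma+\delta)$.

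The hard part will be the bookkeeping once the mixed-alphabet structure is genuinely present, i.e.\ outside the clean separable regime: keeping track of how the choice of $b$ controls $\gcd(a,b\overline g)$ — hence $\kappa_2$ and whether $\mathcal C$ is separable — and how the $X$- and $Y$-components of $\mathcal R(\mathcal C)$ and $\mathcal K(\mathcal C)$ interact after being transformed by the generators $(b',fh)$, $(\overline\mu b\overline g,pf)$. The single subtlest point is the degenerate subcase that yields $\gamma+3\delta-1$, which I would settle by a direct computation of $|\mathcal R(\mathcal C)|$ rather than by appealing to the separable formula.
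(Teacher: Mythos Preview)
Your strategy coincides with the paper's: use $x^\beta-1=(x-1)f_1f_2$ to pin down $g\in\{f_1,f_2\}$ and $\deg h\in\{0,1\}$, then apply the structural results for $\mathcal K(\mathcal C)$ and $\mathcal R(\mathcal C)$. Your kernel argument ($k=g$, hence $ker(\mathcal C)=\gamma+\delta$) and the reduction to the two shapes (I) $h=x-1,\ f=f_2,\ \deg a=1$ and (II) $h=1,\ f=(x-1)f_2,\ a=1$ match the paper essentially verbatim.

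The genuine gap is the rank value $\gamma+3\delta-1$. Your explicit cases produce only $\gamma+3\delta$ (case I with $a\mid b$, separable) and $\gamma+3\delta+1$ (case II, and case I with $a\nmid b$); the appeal to an unspecified ``degenerate configuration'' and a deferred ``direct computation'' is not an argument. The paper's proof locates the three rank values through a single dichotomy you do not make: whether the first-coordinate generator $a_r$ of $\mathcal R(\mathcal C)$ (Lemma~\ref{lemma: properties of generators of R(C)}) satisfies $\deg(a_r)=\deg(a)$. The nontrivial branch $a_r=1,\ \deg a=1$ is exactly your case~I with $a\nmid b$, and the paper's worked example ($p=3$, $\beta=11$, $a=x-1$, $b=1$) reports $rank=16=\gamma+3\delta-1$ there --- not $\gamma+3\delta+1$. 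Your computation for that subcase concludes $(1,0)\in\mathcal R(\mathcal C)$, hence $\mathcal R(\mathcal C)=\mathbb Z_p^\alpha\times\mathcal R(\mathcal C_Y)$, by reading off $a'=\gcd(a,\overline\mu b\overline g)=1$ from Theorem~\ref{thm: generators of R(C)}; but that step overshoots, because the generators one actually adjoins to $\mathcal C$ to form $\mathcal R(\mathcal C)$ are all of the shape $(\vec 0,pP(\cdot,\cdot))$, and in the non-separable situation these need not force $(1,0)$ into $\mathcal R(\mathcal C)$. In short, the ``degenerate'' subcase you are looking for is not a new shape at all --- it is your own non-separable case~I --- and the rank there must be obtained by an honest count (for instance via Theorem~\ref{thm: rank of the code}, $rank=\kappa+rank(\phi(\mathcal C'_Y))$), not by identifying $\mathcal R(\mathcal C)$ with the possibly larger code $\langle(1,0),(b',fh+p)\rangle$.
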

\begin{proof}
	From Section \ref{subsec:factors of x^n-1}, it's known that there are $3$ basic irreducible factors (the Hensel's lift) of $x^\beta-1$ over $\mathbb{Z}_{p^2}$,
	namely, $x^\beta - 1 = (x-1)f_1 f_2$.
	Since either $g = f_1$ or $g = f_2$, then $\overline{g_2} \mid \overline{g_3} = x^{\beta} - 1$,
	where $\overline{g_2}$ and $\overline{g_3}$ are the second and third circle product of $\overline{g}$, respectively.
	Without loss of generality, let $g = f_1$.
	Note that $\deg(a) = \deg(h) \leqslant \alpha \leqslant \delta - 1$ since $\alpha = \gamma \leqslant \delta - 1$.
	Then $\deg(h) \leqslant 1$, which implies that $\deg(f) \geqslant \delta$ and $f_2$ divides $f$.
	Considering that $\gcd(\overline{f_j}, \overline{f_i} \otimes \overline{f_i}) = \overline{f_j}$, $\Phi(\mathcal{C})$ is nonlinear, where $i,j=1,2$.
	
	By Theorem \ref{thm: generators of K(C)}, $\mathcal{K}(\mathcal{C}) = \langle (a, 0), (b_k, fhk + pf) \rangle$, where $k$ divides $g$.
	Since $\Phi(\mathcal{C})$ is nonlinear, then $k=g$, and $\mathcal{K}(\mathcal{C}) = \langle (a, 0), (b_k, p f) \rangle$ does not contain codewords of order $p^2$,
	i.e., $ker(\mathcal{C}) = \gamma + \delta > \kappa$.
	By Theorem \ref{thm: generators of R(C)}, $\mathcal{R}(\mathcal{C}) = \langle (a_r, 0), (b_r, fh + pf/r) \rangle$,
	where $r$ divides $f$, which implies that $rank(\mathcal{C}) = 2\deg(g) + \alpha - \deg(a_r) + \deg(hr)$.
	Considering that $\deg(a) = \deg(h) \leqslant 1$ and $a_r$ divides $a$, we have
	\[
	rank(\mathcal{C}) =
	\begin{cases}
		\gamma + 2\delta + \deg(r), & \text{if} \quad \deg(a_r) = \deg(a), \\
		\gamma + 2\delta + \deg(r) - 1, & \text{if} \quad \deg(a_r) \neq \deg(a), \\
	\end{cases}
	\]
	Note that $\gcd(\overline{f_2}, \overline{f_1} \otimes \overline{f_1}) = \overline{f_2}$, and $f = (x-1)f_2$ or $f = f_2$ and $\deg(f_2) = \delta$.
	When $p=2$, $\gcd(\overline{f}, \overline{g} \otimes \overline{g}) = \overline{f_2}$, i.e., $\deg(r) = \deg(f_2) = \delta$.
	Otherwise, $\gcd(\overline{f}, \overline{g} \otimes \overline{g}) = \overline{f}$ and $\deg(r) = \deg(f) = \delta$ or $\delta + 1$.
\end{proof}

\section{Comparison and Examples}\label{sec:comparison}

Recall two theorems about pairs of rank and kernel dimension of $\mathbb{Z}_p \mathbb{Z}_{p^2}$-additive code.

\begin{theorem} \label{thm: existence of z2z4-linear codes}
	\cite[Theorem 6]{z2z4-linear}
	Let $\alpha, \beta, \gamma, \delta, \kappa$ be positive integers satisfying \eqref{eq: conditions of type}.
	Then, there exists a nonlinear $\mathbb{Z}_2 \mathbb{Z}_4$-linear code $\mathcal{C}$ of type $(\alpha, \beta; \gamma, \delta; \kappa)$
	with $k = \gamma + 2\delta - \overline{k}$ and $r = \gamma + 2\delta + \overline{r}$ if and only if
	\[
	\begin{cases}
		\overline{r} \in \left\{ 2, \cdots, \min\left( \beta - (\gamma - \kappa) - \delta, \binom{\overline{k}}{2} \right) \right\}, & \overline{k} \equiv 1 \pmod{2}, \\
		\overline{r} \in \left\{ 1, \cdots, \min\left( \beta - (\gamma - \kappa) - \delta, \binom{\overline{k}}{2} \right) \right\}, & \overline{k} \equiv 1 \pmod{2}, \\
	\end{cases}
	\]
	where $\overline{k} \in \{ 2, \cdots, \delta \}$.
\end{theorem}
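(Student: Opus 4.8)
The plan is to reduce the statement to the quaternary ``order-$4$'' part of $\mathcal{C}$ and then handle the ``only if'' (upper bounds) and the ``if'' (construction) directions separately. Fix a generator matrix in the form of \eqref{eq:generator matrix}, let $\vec{u}_1,\dots,\vec{u}_\gamma$ be its rows of order $2$ and $\vec{v}_1,\dots,\vec{v}_\delta$ its rows of order $4$, and use the $\mathbb{Z}_2\mathbb{Z}_4$-versions of Theorem~\ref{thm: generators of <Phi(C)>}, Proposition~\ref{prop: generators of R(C)} and Theorem~\ref{thm: rank of the code}/\ref{thm: dimension of the kernel}: $\langle C\rangle$ is generated by the $\Phi(\vec{u}_i)$, the $\Phi(\vec{v}_j)$, the $\Phi(2\vec{v}_j)$ and the ``obstruction'' vectors $\Phi(2\vec{v}_i\ast\vec{v}_j)$, while $\mathcal{K}(\mathcal{C})=\{\vec{x}\in\mathcal{C}\,:\,2\vec{x}\ast\vec{y}\in\mathcal{C}\text{ for all }\vec{y}\in\mathcal{C}\}$. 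Three elementary facts drive everything: $2\vec{v}\ast\vec{v}=2\vec{v}\in\mathcal{C}$, so only off-diagonal products $2\vec{v}_i\ast\vec{v}_j$ with $i<j$ can contribute anything new; every $2\vec{v}_i\ast\vec{v}_j$ is an order-$2$ vector supported on the $\beta$ quaternary coordinates, hence its Gray image lies in a fixed $\beta$-dimensional subspace $W\subseteq\mathbb{Z}_2^{\alpha+2\beta}$; and inside $W$ the ``non-obstruction'' generators already span a subspace of dimension $(\gamma-\kappa)+\delta$, namely the images of the order-$2$ generators of $\mathcal{C}_Y$ and of $2\vec{v}_1,\dots,2\vec{v}_\delta$. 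Consequently $\overline{r}$ equals the dimension of the span of the obstruction vectors modulo that subspace, and $\overline{k}=\dim_{\mathbb{F}_2}\mathcal{C}/\mathcal{K}(\mathcal{C})$.

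For the ``only if'' direction I would read the bounds off this description. Clearly $\overline{k}\le\delta$. If $\vec{v}_j\in\mathcal{K}(\mathcal{C})$ then $2\vec{v}_j\ast\vec{y}\in\mathcal{C}$ for every $\vec{y}$, so new obstruction vectors can only arise from pairs among the at most $\overline{k}$ order-$4$ generators lying outside $\mathcal{K}(\mathcal{C})$, giving $\overline{r}\le\binom{\overline{k}}{2}$; since all obstruction vectors lie in $W$ while $\langle C\rangle$ already occupies dimension $(\gamma-\kappa)+\delta$ there, we get $\overline{r}\le\beta-(\gamma-\kappa)-\delta$; and nonlinearity forces $\overline{r}\ge1$, hence $\binom{\overline{k}}{2}\ge1$ and $\overline{k}\ge2$. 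The parity condition is the subtle point: because $2\vec{v}\ast\vec{v}\in\mathcal{C}$, the symmetric map $(\vec{x},\vec{y})\mapsto\Phi(2\vec{x}\ast\vec{y})$ induces an \emph{alternating} bilinear form with trivial radical on $\mathcal{C}/\mathcal{K}(\mathcal{C})$, valued in a suitable $\mathbb{F}_2$-space; analysing how small the image of such a form can be while keeping the radical trivial (a short $\mathbb{F}_2$-linear-algebra computation) shows that the ``diagonal'' vector $\vec{v}_1+\dots+\vec{v}_{\overline{k}}$ is forced into the radical when $\overline{k}$ is odd unless the image has dimension $\ge2$, whereas for $\overline{k}$ even an image of dimension $1$ is possible. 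This is exactly the dichotomy $\overline{r}\ge2$ (odd $\overline{k}$) versus $\overline{r}\ge1$ (even $\overline{k}$).

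For the ``if'' direction I would, for each admissible pair $(\overline{r},\overline{k})$, exhibit a canonical generator matrix realising it. The core is a combinatorial lemma: whenever $2\le\overline{k}\le\delta$ and $\overline{r}$ lies in the prescribed interval, one can choose $\overline{k}$ vectors $\vec{v}_1,\dots,\vec{v}_{\overline{k}}\in\mathbb{Z}_4^\beta$ so that the span of $\{2\vec{v}_i\ast\vec{v}_j:i<j\}$ modulo $\langle 2\vec{v}_1,\dots,2\vec{v}_{\overline{k}}\rangle$ has dimension exactly $\overline{r}$, interpolating between the ``all products equal'' configuration (smallest image, legitimate only subject to the parity constraint) and the ``all products independent'' configuration (image $\binom{\overline{k}}{2}$); here the inequality $\overline{r}\le\beta-(\gamma-\kappa)-\delta$, which is just a rephrasing of \eqref{eq: conditions of type}, supplies exactly the number of free coordinates needed. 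One then completes the matrix: the remaining $\delta-\overline{k}$ rows of order $4$ are taken with pairwise disjoint supports, disjoint also from the $\vec{v}_i$, so all their $\ast$-products already lie in $\mathcal{C}$ and they fall into $\mathcal{K}(\mathcal{C})$; the $\gamma$ rows of order $2$ and the $\alpha$ binary coordinates are filled in to fix $\gamma$ and $\kappa$; and finally one verifies, via the formulas of the first paragraph, that the resulting code has type $(\alpha,\beta;\gamma,\delta;\kappa)$ with $k=\gamma+2\delta-\overline{k}$ and $r=\gamma+2\delta+\overline{r}$.

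The main obstacle is this ``if'' direction, and within it the combinatorial realisability lemma for \emph{every} intermediate value of $\overline{r}$, together with pinning down the parity dichotomy, i.e.\ proving that $\overline{r}=1$ is genuinely impossible for odd $\overline{k}$ while remaining available for even $\overline{k}$. This is also precisely where the $\mathbb{Z}_2\mathbb{Z}_4$ case diverges from the odd-prime situation studied in the rest of the paper (compare the remark after Table~\ref{table: kernel and rank of zp2 cyclic codes 1}): for $\mathbb{Z}_{p^2}$ with $p>2$ the map $P(x_1,x_2)$ has degree $m>1$, it is not a bilinear form, and the clean alternating-form picture --- hence the parity phenomenon --- no longer applies.
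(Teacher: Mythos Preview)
The paper does not prove this theorem at all: it is quoted verbatim as \cite[Theorem~6]{z2z4-linear} and used only as a benchmark in Section~\ref{sec:comparison}. There is therefore no ``paper's own proof'' to compare your attempt against.

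That said, your sketch follows the strategy of the original reference \cite{z2z4-linear}. The reduction to the order-$4$ part, the identification of $\overline{r}$ with the number of independent obstruction vectors $2\vec{v}_i\ast\vec{v}_j$ modulo the order-$2$ subcode, and the bounds $\overline{r}\le\binom{\overline{k}}{2}$ and $\overline{r}\le\beta-(\gamma-\kappa)-\delta$ are all correct. Your explanation of the parity dichotomy is the right mechanism: the map $(\vec{x},\vec{y})\mapsto 2\vec{x}\ast\vec{y}$ descends to an alternating $\mathbb{F}_2$-bilinear map on $\mathcal{C}/\mathcal{K}(\mathcal{C})$ with trivial radical, and a single nondegenerate alternating form over $\mathbb{F}_2$ exists only in even dimension, forcing $\overline{r}\ge 2$ when $\overline{k}$ is odd. (Your phrasing via the ``diagonal vector'' handles only the particular form $b(\vec{v}_i,\vec{v}_j)\equiv 1$; the clean statement is simply that every alternating form on an odd-dimensional $\mathbb{F}_2$-space is degenerate.)

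The genuine work you have not done is the construction side: the ``combinatorial realisability lemma'' is asserted rather than proved, and interpolating $\overline{r}$ through \emph{every} intermediate value while simultaneously keeping the radical trivially equal to the prescribed $\overline{k}$-codimensional subspace requires explicit matrices, not just the endpoints. In \cite{z2z4-linear} this is carried out by writing down concrete families of generator matrices; your outline would need the same to be complete.
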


\begin{theorem} \label{thm: existence of zpzp2-linear codes}
	\cite[Theorem 17]{wsk_z3z9_linear_rank_kernel}
	Let $\alpha, \beta, \gamma, \delta, \kappa$ be positive integers satisfying \eqref{eq: conditions of type}.
	Then, there is a $\mathbb{Z}_p \mathbb{Z}_{p^2}$-additive code of type $(\alpha, \beta; \gamma, \delta; \kappa)$ only if $\overline{k} = \overline{r} = 0$ or
	\[
	1\leqslant \overline{k} \leqslant \delta \quad \text{and} \quad
	1 \leqslant \overline{r} \leqslant \min{\left(\beta - (\gamma-\kappa) - \delta, \binom{\overline{k}}{2} + \binom{\overline{k}+2}{3}\right)},
	\]
	where $r = \gamma + 2\delta + \overline{r}$ and $k = \gamma + 2\delta - \overline{k}$, respectively.
\end{theorem}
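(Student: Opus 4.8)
\emph{Proof plan.} The plan is to read both invariants off a canonical generating set and then bound the two pieces of slack $\overline{k}$ and $\overline{r}$ separately. Fix a generator matrix of $\mathcal{C}$ in the form \eqref{eq:generator matrix 2}, with order-$p$ rows $\vec{u}_1,\dots,\vec{u}_{\gamma}$ and order-$p^2$ rows $\vec{v}_1,\dots,\vec{v}_{\delta}$. By Theorem \ref{thm: generators of <Phi(C)>}, $\langle\Phi(\mathcal{C})\rangle$ is generated by $\{\Phi(\vec{u}_i)\}_{i=1}^{\gamma}$, $\{\Phi(\vec{v}_j),\Phi(p\vec{v}_j)\}_{j=1}^{\delta}$ and $\{\Phi(pP(\vec{w}_1,\vec{w}_2)):\vec{w}_1,\vec{w}_2\in\mathcal{C}\}$; the first two families are $\mathbb{Z}_p$-independent and span a subspace $L$ of dimension $\gamma+2\delta$, so $r=\gamma+2\delta+\overline{r}$ with $\overline{r}\ge 0$, and $\overline{r}=0$ exactly when $\Phi(\mathcal{C})$ is linear (Lemma \ref{lemma: Phi(C) is linear iff pP(u,v) in C}); since $|\Phi(\mathcal{C})|=|\mathcal{C}|=p^{\gamma+2\delta}$, a nonlinear image forces $\overline{r}\ge 1$. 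Dually, Proposition \ref{prop: ker(Phi(C)) and K(C)} gives $\mathcal{K}(\mathcal{C})=\{\vec{u}\in\mathcal{C}:pP(\vec{u},\vec{v})\in\mathcal{C}\ \forall\,\vec{v}\in\mathcal{C}\}$, and by the last assertion of Lemma \ref{phi(u+v) = phi(u)+phi(v)+...} all $\vec{u}_i$ and all $p\vec{v}_j$ lie in $\mathcal{K}(\mathcal{C})$; hence $\mathcal{K}(\mathcal{C})$ contains the full $p$-torsion of $\mathcal{C}$ and is obtained from it by adjoining $\delta-\overline{k}$ independent order-$p^2$ directions for some $0\le\overline{k}\le\delta$, so $k=\log_p|\mathcal{K}(\mathcal{C})|=\gamma+2\delta-\overline{k}$ because $\Phi$ is injective. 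Since $\overline{k}=0$ is equivalent to $\mathcal{K}(\mathcal{C})=\mathcal{C}$, to $\Phi(\mathcal{C})$ being linear, and to $\overline{r}=0$, the degenerate case is settled and for nonlinear $\mathcal{C}$ we have $1\le\overline{k}\le\delta$.

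Next I would bound $\overline{r}$ by the first term of the minimum. By Proposition \ref{prop: generators of R(C)}, $\mathcal{R}(\mathcal{C})=\langle\mathcal{C},\{pP(\vec{w}_1,\vec{w}_2):\vec{w}_1,\vec{w}_2\in\mathcal{C}\}\rangle$ and $\langle\Phi(\mathcal{C})\rangle=\Phi(\mathcal{R}(\mathcal{C}))$, so $r=\log_p|\mathcal{R}(\mathcal{C})|$. Every $pP(\vec{w}_1,\vec{w}_2)$ has zero $\mathbb{Z}_p^{\alpha}$-part and $\mathbb{Z}_{p^2}^{\beta}$-part inside $p\mathbb{Z}_{p^2}^{\beta}\cong\mathbb{Z}_p^{\beta}$, so $\mathcal{R}(\mathcal{C})$ is obtained from $\mathcal{C}$ by adjoining vectors of $\{\vec{0}\}\times p\mathbb{Z}_{p^2}^{\beta}$. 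Reading off \eqref{eq:generator matrix 2}, the subgroup $\mathcal{C}\cap(\{\vec{0}\}\times p\mathbb{Z}_{p^2}^{\beta})$ is spanned by $p\vec{v}_1,\dots,p\vec{v}_{\delta}$ together with the $\gamma-\kappa$ order-$p$ rows whose $X$-part vanishes, and hence has $\mathbb{Z}_p$-dimension $\delta+(\gamma-\kappa)$; since the ambient space has $\mathbb{Z}_p$-dimension $\beta$, at most $\beta-(\gamma-\kappa)-\delta$ further independent directions can be adjoined, giving $\overline{r}\le\beta-(\gamma-\kappa)-\delta$.

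The remaining bound $\overline{r}\le\binom{\overline{k}}{2}+\binom{\overline{k}+2}{3}$ is, I expect, the genuine obstacle. After a basis permutation let $\vec{v}_1,\dots,\vec{v}_{\overline{k}}$ be exactly the order-$p^2$ generators outside $\mathcal{K}(\mathcal{C})$. Since by \eqref{eq: values of P(u,v)} the carry value $P(\vec{w}_1,\vec{w}_2)$ depends only on the reductions of $\vec{w}_1,\vec{w}_2$ modulo $p$, and an order-$p$ summand never affects $pP$, repeated application of Lemma \ref{phi(u+v) = phi(u)+phi(v)+...} shows that $\langle\Phi(\mathcal{C})\rangle/L$ is spanned by the classes of $\Phi(pP_{\mu})$, where $P_{\mu}$ ranges over the componentwise evaluations of the monomials of the polynomial \eqref{eq: P(u,v)} on tuples drawn from $\vec{v}_1,\dots,\vec{v}_{\overline{k}}$ (the generators lying in $\mathcal{K}(\mathcal{C})$ drop out, their $pP$-values already being in $\mathcal{C}$). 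For $p=3$ the relevant carry polynomial may be taken to be $P(x_1,x_2)=-x_1x_2(x_1+x_2+1)$, of degree $m=3$ and containing no pure power of a single variable of degree below $3$, so only componentwise products of degrees $2$ and $3$ in the $\overline{k}$ bad generators occur; a careful count of the redundancies modulo $L$, exploiting the Fermat identity $\theta^3\equiv\theta$ over $\mathbb{F}_3$ and the fact that $\mathcal{C}_Y\bmod p$ already sits inside the relevant span, collapses the degree-$2$ part to the $\binom{\overline{k}}{2}$ squarefree products $pP(\vec{v}_i,\vec{v}_j)$ with $i<j$ and leaves $\binom{\overline{k}+2}{3}$ independent degree-$3$ classes indexed by the size-$3$ multisets from $\{1,\dots,\overline{k}\}$. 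Combined with the previous paragraph this yields $1\le\overline{r}\le\min\bigl(\beta-(\gamma-\kappa)-\delta,\ \binom{\overline{k}}{2}+\binom{\overline{k}+2}{3}\bigr)$, as required. The delicate point is exactly this last step: one must verify that monomials of degree at most $m$ already exhaust $\langle\Phi(\mathcal{C})\rangle/L$ and pin down the precise redundancies, which is also what produces the asymmetry between the squarefree degree-$2$ count and the full degree-$3$ multiset count --- and why for $p=2$, where the carry polynomial is bilinear ($m=2$), only the term $\binom{\overline{k}}{2}$ survives, recovering Theorem \ref{thm: existence of z2z4-linear codes}.
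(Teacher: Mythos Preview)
The paper does not prove this statement at all: Theorem \ref{thm: existence of zpzp2-linear codes} is quoted verbatim from \cite[Theorem 17]{wsk_z3z9_linear_rank_kernel} as background in Section \ref{sec:comparison}, with no argument supplied. There is therefore nothing in the paper to compare your attempt against.

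On the merits of your sketch itself: the first three paragraphs are sound. Reading $r=\gamma+2\delta+\overline{r}$ and $k=\gamma+2\delta-\overline{k}$ off the canonical matrix, identifying $\overline{r}=\overline{k}=0$ with linearity, and bounding $\overline{r}\le\beta-(\gamma-\kappa)-\delta$ via the ambient dimension of $\{\vec{0}\}\times p\mathbb{Z}_{p^2}^{\beta}$ modulo the order-$p$ part already in $\mathcal{C}$ are all correct and are exactly the arguments used in \cite{wsk_z3z9_linear_rank_kernel}. The last paragraph, however, is only a heuristic. You correctly isolate that the span modulo $L$ is generated by componentwise monomial evaluations in the $\overline{k}$ ``bad'' generators, but the claimed collapse to $\binom{\overline{k}}{2}$ degree-$2$ classes plus $\binom{\overline{k}+2}{3}$ degree-$3$ classes is asserted rather than proved: you would need to show (i) that the iterated carries $pP(\cdot,\cdot)$ arising from sums of more than two generators never introduce anything beyond degree $m$ in the bad $\vec{v}_j$'s, and (ii) the precise elimination of the diagonal degree-$2$ terms $p\vec{v}_j\ast\vec{v}_j$ via $\theta^p\equiv\theta$ and membership in $L$. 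Note also that the stated bound is specific to $p=3$ (the cited source is a $\mathbb{Z}_3\mathbb{Z}_9$ paper), consistent with your observation that $m=3$ there; your sketch does not, and should not be expected to, yield this particular binomial expression for arbitrary odd $p$.
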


\begin{remark} \label{remark: r=k=gamma + 2delta}
	$\Phi(\mathcal{C})$ is linear if and only if $\overline{k} = \overline{r} = 0$ for $\mathbb{Z}_2 \mathbb{Z}_{4}$ and $\mathbb{Z}_p \mathbb{Z}_{p^2}$-additive codes.
\end{remark}

\begin{example} \cite[Example 3 \& 7]{z2z4_cyclic_kernel}
	Let $\mathcal{C} = \langle (a, 0), (b, fh + 2f) \rangle$ be a $\mathbb{Z}_2 \mathbb{Z}_4$-additive cyclic code of type $(2, 7; 2, 3; \kappa)$.
	Let $k$ and $r$ be the dimensions of $\ker(\Phi(\mathcal{C}))$ and $\langle \Phi(\mathcal{C}) \rangle$, respectively.
	Then, by Remark \ref{remark: conditions of type}, $k\in \{ 5,6,8 \}$ and $r \in \{ 8,9,10,11 \}$.
	The existence of $\mathbb{Z}_2 \mathbb{Z}_4$-linear codes of general additive codes and additive cyclic codes
	are listed in Table \ref{table: z2z4-linear} and Table \ref{table: z2z4-cyclic}, respectively.
\end{example}

\begin{table}[h]
	\begin{minipage}[t]{0.45\textwidth}
		\caption{$\mathbb{Z}_2 \mathbb{Z}_4$-additive codes}
		\label{table: z2z4-linear}
		\centering
		\begin{tabular}{c|c|c|c|c}
			\hline\hline
			\diagbox{$k$}{$r$} & 8 & 9 & 10 & 11  \\
			\hline\hline
			8 & $\bullet$ &  &  &  \\
			\hline
			7 &  & &  &  \\
			\hline
			6 & & $\bullet$ & & \\
			\hline
			5 & & $\bullet$ & $\bullet$ & $\bullet$ \\
			\hline\hline
		\end{tabular}
	\end{minipage}
	\begin{minipage}[t]{0.45\textwidth}
		\caption{$\mathbb{Z}_2 \mathbb{Z}_4$-additive cyclic codes}
		\label{table: z2z4-cyclic}
		\centering
		\begin{tabular}{c|c|c|c|c}
			\hline\hline
			\diagbox{$k$}{$r$} & 8 & 9 & 10 & 11  \\
			\hline\hline
			8 & &  &  &  \\
			\hline
			7 &  & &  &  \\
			\hline
			6 &  & &  &  \\
			\hline
			5 &  & & & $\bullet$ \\
			\hline\hline
		\end{tabular}
	\end{minipage}
\end{table}

\begin{example}
	Let $\mathcal{C} = \langle (a, 0), (b, fh + 3f) \rangle$ be a $\mathbb{Z}_3 \mathbb{Z}_9$-additive cyclic code of type $(\alpha, 11; \gamma, \delta; \kappa)$.
	Let $k$ and $r$ be the dimensions of $\ker(\Phi(\mathcal{C}))$ and $\langle \Phi(\mathcal{C}) \rangle$, respectively.
	In $\mathbb{Z}_9$, \[ x^{11} - 1 = (x-1)(x^5+3x^4-x^3+x^2+2x-1)(x^5+7x^4-x^3+x^2+6x-1), \]
	where the factors are basic irreducible polynomials over $\mathbb{Z}_9$.
	Obviously, if $\delta \notin \{ 0,1,5,6,10,11 \}$, then the code doesn't exist.
	So we consider the codes of type $(2, 11; 2, 5; 1)$.
	Then, by Theorem \ref{thm: existence of zpzp2-linear codes}, $ker(\mathcal{C})\in \{ 7, \cdots, 12 \}$ and $rank(\mathcal{C}) \in \{ 12,\cdots,17 \}$.
	
	By Theorem \ref{thm: zpzp2 possible ranks and kernels}, $ker(\mathcal{C}) = 2+5 = 7$ and $rank(\mathcal{C}) = 17$ or $16$.
	
	Both of the codes $\mathcal{C} = \langle (x-1, 0), (0, (x-1)(x^5+3x^4-x^3+x^2+2x-1) + 3(x^5+3x^4-x^3+x^2+2x-1)) \rangle$ and
	$\mathcal{C} = \langle (x-1, 0), (0, (x-1)(x^5+7x^4-x^3+x^2+6x-1) + 3(x^5+7x^4-x^3+x^2+6x-1)) \rangle$
	are $\mathbb{Z}_3 \mathbb{Z}_9$-additive cyclic codes of type $(2, 11; 2, 5; 1)$ with $rank(\mathcal{C}) = 17$ and $ker(\mathcal{C}) = 7$.
	
	Both of the codes $\mathcal{C} = \langle (x-1, 0), (1, (x-1)(x^5+3x^4-x^3+x^2+2x-1) + 3(x^5+3x^4-x^3+x^2+2x-1)) \rangle$ and
	$\mathcal{C} = \langle (x-1, 0), (1, (x-1)(x^5+7x^4-x^3+x^2+6x-1) + 3(x^5+7x^4-x^3+x^2+6x-1)) \rangle$
	are $\mathbb{Z}_3 \mathbb{Z}_9$-additive cyclic codes of type $(2, 11; 2, 5; 1)$ with $rank(\mathcal{C}) = 16$ and $ker(\mathcal{C}) = 7$.
	
	As for the values of $(r,k)$, the existences of general $\mathbb{Z}_3 \mathbb{Z}_9$-additive codes and $\mathbb{Z}_3 \mathbb{Z}_9$-additive cyclic codes
	of type $(2, 11; 2, 5; 1)$ are listed in Table \ref{table: z3z9-linear-3} and Table \ref{table: z3z9-cyclic-3}, respectively.
\end{example}
\begin{table}[h]
	\caption{$\mathbb{Z}_3 \mathbb{Z}_9$-additive codes}
	\label{table: z3z9-linear-3}
	\centering
	\begin{tabular}{c|c|c|c|c|c|c}
		\hline\hline
		\diagbox{$k$}{$r$} & 12 & 13 & 14 & 15 & 16 & 17 \\
		\hline\hline
		12 & $\bullet$ &  &  &  & \\
		\hline
		11 &  & $\bullet$ &   &   &   &  \\
		\hline
		10 &  & $\bullet$ & $\bullet$ & $\bullet$ & $\bullet$ & $\bullet$ \\
		\hline
		9 &  & $\bullet$ & $\bullet$ & $\bullet$ & $\bullet$ & $\bullet$ \\
		\hline
		8 &  & $\bullet$ & $\bullet$ & $\bullet$ & $\bullet$ & $\bullet$ \\
		\hline
		7 &  & $\bullet$ & $\bullet$ & $\bullet$ & $\bullet$ & $\bullet$ \\
		\hline\hline
	\end{tabular}
\end{table}

\begin{table}
	\caption{$\mathbb{Z}_3 \mathbb{Z}_9$-additive cyclic codes}
	\label{table: z3z9-cyclic-3}
	\centering
	\begin{tabular}{c|c|c|c|c|c|c}
		\hline\hline
		\diagbox{$k$}{$r$} & 12 & 13 & 14 & 15 & 16 & 17 \\
		\hline\hline
		12 &  &  &  & & \\
		\hline
		11 &  &  &  & &  \\
		\hline
		10 &  &  &  & &  \\
		\hline
		9 &  &  &  & &  \\
		\hline
		8 &  &  &  & &  \\
		\hline
		7 &  &  &  & & $\bullet$ & $\bullet$ \\
		\hline\hline
	\end{tabular}
\end{table}
\section{Conclusions}\label{sec:conclusions}
In this paper, we studied the $p$-ary images (Gray images) of cyclic codes over $\mathbb{Z}_{p^2}$ and $\mathbb{Z}_p \mathbb{Z}_{p^2}$.
The main contributions of this paper are the following:
\begin{enumerate}
	\renewcommand{\labelenumi}{(\theenumi)}
	\item We give the accurate expression of the rank and dimension of the kernel of the $\mathbb{Z}_p \mathbb{Z}_{p^2}$-additive cyclic code $\mathcal{C}$, respectively
	(See Theorem \ref{thm: dimension of the kernel} and Theorem \ref{thm: rank of the code}).
	
	\item We show that both of $\mathcal{K}(\mathcal{C})$ and $\left\langle \Phi(\mathcal{C}) \right\rangle$ are $\mathbb{Z}_p \mathbb{Z}_{p^2}$-additive cyclic codes
	(See Theorem \ref{K(C) is cyclic} and Theorem \ref{thm: R(C) is cyclic}).
	Moreover, we get the generator polynomials of $\mathcal{K}(\mathcal{C})$ and $\left\langle \Phi(\mathcal{C}) \right\rangle$, respectively
	(See Theorem \ref{thm: generators of K(C)} and Theorem \ref{thm: generators of R(C)}).
	
	\item According to the algebra structures of a $\mathbb{Z}_p\mathbb{Z}_{p^2}$-additive cyclic code, we gave the accurate expression of its rank and dimension of the kernel, respectively.
	
	\item Through some special factorizations $x^n-1$, several classes of cyclic codes over $\mathbb{Z}_{p^2}$ and $\mathbb{Z}_p \mathbb{Z}_{p^2}$
	have been classified by rank and the dimension of the kernel (See Table \ref{table: kernel and rank of zp2 cyclic codes 1}).
	
	\item Compared with the general $\mathbb{Z}_p\mathbb{Z}_{p^2}$-additive code shown in \cite{wsk_z3z9_linear_rank_kernel},
	the rank and the dimension of $\mathcal{C}$ cannot take all the possible values.
	
	\item We construct a $\mathbb{Z}_p \mathbb{Z}_{p^2}$-additive cyclic code $\mathcal{C}$ of some given type
	and the number of the pair of its rank and the dimension of the kernel is at most $3$
	(See Theorem \ref{thm: zpzp2 possible ranks and kernels}).
\end{enumerate}

Besides, the readers are suggested to find the necessary and sufficient condition for the Gray image to be linear,
which is helpful for modifying the generator polynomials mentioned in Theorem \ref{thm: generators of K(C)} and Theorem \ref{thm: generators of R(C)}.


%
%



\end{document}